\documentclass[12pt]{article}

\usepackage[hmargin=2.5cm, vmargin={3cm,3cm}, a4paper]{geometry}\usepackage{amsmath,amssymb,amsthm,mathrsfs}
\usepackage{cite}
\usepackage{amsthm}
\usepackage{subcaption}
\usepackage{graphicx}
\usepackage{caption}
\usepackage{url,enumerate}
\graphicspath{{fig/}}
\usepackage{float}\usepackage{parskip}
\usepackage{fancyhdr}
\setlength{\headheight}{12pt}
\pagestyle{fancyplain}
\usepackage{xcolor}
\usepackage{physics}
 
\DeclareMathOperator*{\argmin}{argmin} 
\DeclareMathAlphabet\mathrsfso      {U}{rsfso}{m}{n}
\usepackage{amssymb}\usepackage{mathtools}
\usepackage[toc,page]{appendix}
\usepackage{bbm}
\usepackage{centernot}

\usepackage{lettrine}
\usepackage{type1cm}
\usepackage{hyperref}
\usepackage{graphicx}
\usepackage{cleveref}

\usepackage{algorithm}
\usepackage{algpseudocode}
\usepackage{algpseudocode}

\errorcontextlines\maxdimen
\makeatletter
\newcommand*{\algrule}[1][\algorithmicindent]{\makebox[#1][l]{\hspace*{.5em}\vrule height .75\baselineskip depth .25\baselineskip}}
\newcount\ALG@printindent@tempcnta
\def\ALG@printindent{
    \ifnum \theALG@nested>0
        \ifx\ALG@text\ALG@x@notext
            \addvspace{-3pt}
        \else
            \unskip
            \ALG@printindent@tempcnta=1
            \loop
                \algrule[\csname ALG@ind@\the\ALG@printindent@tempcnta\endcsname]
                \advance \ALG@printindent@tempcnta 1
            \ifnum \ALG@printindent@tempcnta<\numexpr\theALG@nested+1\relax
            \repeat
        \fi
    \fi
    }
\usepackage{etoolbox}
\patchcmd{\ALG@doentity}{\noindent\hskip\ALG@tlm}{\ALG@printindent}{}{\errmessage{failed to patch}}
\makeatother

\newtheorem{lemma}{Lemma}

\newtheorem{proposition}{Proposition}
\newtheorem{theorem}{Theorem}

\newtheorem*{proposition*}{Proposition 3}

\newtheorem*{lemma*}{Lemma 2}
\newtheorem*{lemma**}{Lemma 1}
\newtheorem*{lemma***}{Lemma 3}
\newtheorem*{theorem*}{Theorem 1}
\newtheorem*{theorem**}{Theorem 2}

\rhead{}
\lhead{}
\lfoot{}
\cfoot{\thepage}
\rfoot{}

\title{Extension of Clifford Data Regression Methods for\\ Quantum Error Mitigation}
%
\author{Jordi Pérez-Guijarro, Alba Pagès-Zamora, and Javier R. Fonollosa}
\date{\small{SPCOM Group, Universitat Politècnica de Catalunya, Barcelona, Spain}}
\begin{document}

\maketitle
\begin{abstract}

    To address the challenge posed by noise in real quantum devices, quantum error mitigation techniques play a crucial role. These techniques are resource-efficient, making them suitable for implementation in noisy intermediate-scale quantum devices, unlike the more resource-intensive quantum error correction codes. A notable example of such a technique is Clifford Data Regression, which employs a supervised learning approach. This work investigates two variants of this technique, both of which introduce a non-trivial set of gates into the original circuit. The first variant uses multiple copies of the original circuit, while the second adds a layer of single-qubit rotations. Different characteristics of these methods are analyzed theoretically, such as their complexity, or the scaling of the error with various parameters. Additionally, the performance of these methods is evaluated through numerical experiments, demonstrating a reduction in root mean square error.

\end{abstract}
\section{Introduction}
\label{sec:intro}

The computational capacity of current quantum devices is significantly lower than that of a fault-tolerant quantum computer due to the presence of noise. For this reason, substantial research efforts have been dedicated to developing quantum error-correcting codes \cite{shor1995scheme,preskill1998reliable,kitaev1997quantum,gottesman1998theory}. However, their practical implementation on NISQ (Noisy Intermediate-Scale Quantum) devices is currently impractical due to the limited number of qubits and the high levels of noise in quantum systems.

To overcome this challenge, researchers are actively studying quantum error mitigation (QEM) techniques that can partially alleviate the effects of noise. These techniques are designed to be compatible with current quantum devices and, therefore, are not resource demanding. They require only a modest number of qubits and a small number of additional gates. This characteristic makes these techniques interesting even in the fault-tolerant regime, where they are combined with quantum error correction (QEC) to significantly reduce the required resources \cite{suzuki2022quantum,piveteau2021error}. Another distinctive aspect of QEM techniques is that they seek to estimate expectation values, i.e., expressions of the form $f(U,O)=\bra{0}^{\otimes n} U^{\dagger} O U \ket{0}^{\otimes n}$, where $U$ is a unitary matrix and $O$ is an observable, i.e., an Hermitian matrix. Note that having access to a NISQ device is essential for obtaining reasonable estimates since, under the conjecture that BQP$\neq$BPP, function $f(U,O)$ is classically hard to compute, and even hard to estimate up to $0.15  \left\| O\right\|_2$ error, where $ \left\|\cdot \right\|_2$ denotes the spectral norm \cite{huang2021power}. The conjecture BQP $\neq$ BPP is fundamental to quantum computing since, roughly speaking, it asserts that quantum computers can efficiently solve some problems that classical computers cannot.

Several error mitigation techniques have been developed, including probabilistic error cancellation (PEC) \cite{temme2017error,van2022probabilistic,tran2023locality} and zero noise extrapolation (ZNE) \cite{temme2017error,li2017efficient}. Whereas the PEC technique aims at eliminating errors by implementing the inverse of the noise channel through a sampling approach, the ZNE method estimates the expected value $f(U,O)$ by extrapolation of $f(U,O)$ estimates with different noise levels. This extrapolation method assumes a relationship between the noise level and the expected value. These two techniques face significant challenges; ZNE is very sensitive to the noise-scaling scheme and the prediction model, whereas PEC requires a precise and costly noise characterization. 

Several supervised learning techniques have been proposed to address these challenges. For instance, the work in \cite{strikis2021learning} learns the quasi probability distribution required in PEC. Similarly, the mitigation techniques in \cite{czarnik2021error,czarnik2022improving} are based on learning procedures. These techniques, known as Clifford Data Regression (CDR), approximate $f(U,O)$ by means of a parametric model that is adjusted using a training set. For these techniques to be effective, the circuits included in the training set should be efficiently simulated by a classical computer, and Clifford circuits fullfill this condition, as stated by the Gottesman-Knill theorem \cite{gottesman1998heisenberg}. The simplest parametric model considered in the literature is an affine transformation of the noisy implementation of circuit $U$ \cite{czarnik2021error,czarnik2022improving}. This is motivated by the assumption that the noise can be characterized by a depolarizing channel. This model is expanded in \cite{lowe2021unified} using a ZNE approach that incorporates additional noisy computations. Indeed, the resulting scheme can be regarded as a feature map-based model.

In this paper, we present two novel feature maps designed for CDR. The feature vectors are inspired by the same underlying principle as the ZNE method, i.e., evaluating a function at multiple positions to extrapolate an estimate of $f(U,O)$. We provide a theoretical analysis of the complexity of these methods and the scaling of the error. In addition, we conduct simulations on small circuits to evaluate the performance of these feature vectors.

The rest of the paper is organized as follows. Section \ref{sec:notation} introduces the notation and the general framework of CDR. Section \ref{sec:feature_map} presents our novel feature vectors as compared to existing approaches. Next, in Section \ref{sec:performance_analysis} the methods are analyzed theoretically. Section \ref{sec:numerical_experiments} assesses the performance of the presented methods by means of numerical experiments. Finally, conclusions are drawn in Section \ref{sec:conclusions}. 

\section{General Framework of QEM and CDR}
\label{sec:notation}

    The main goal of QEM is to accurately estimate the expected value of a specific observable, i.e., to estimate
    \begin{equation}
        f(U,O)= \bra{0}^{\otimes n} U^\dagger O U \ket{0}^{\otimes n}
    \end{equation}
    where $n$ is the number of qubits. For the remainder of the paper, we assume the observable to be fixed. Therefore, for simplicity, we denote  $f(U, O)$ as $f(U)$.
    
    Interestingly, most error mitigation schemes fall under the following general framework. First, given a classical description of a circuit $U$, $J$ perturbations of the original circuit are generated. These are denoted by
    \begin{equation}
         \{ {\mathcal{P}}_1 (U), \cdots, \mathcal{P}_J(U) \} 
    \end{equation}
    where $\mathcal{P}_j(U)$ is the $j^{th}$ perturbation of circuit $U$ implemented in an ideal device, i.e., by a unitary transformation. When the perturbed circuit is implemented in a noisy device, the resulting completely positive trace preserving (CPTP) transformation is denoted as $\tilde{\mathcal{P}}_j (U)$.

    After computing the different perturbed noisy circuits, the output state of each noisy circuit $\tilde{\mathcal{P}}_j (U)$ is measured $N$ times, and the empirical mean, denoted as $\phi(\tilde{\mathcal{P}}_j (U))$, is computed. This value is an estimate of $\Tr \left(O \tilde{\mathcal{P}}_j (U)\left(\ket{0}\bra{0}^{\otimes n}\right) \right)$, as shown by Hoeffding's inequality \cite{hoeffding1994probability}, i.e.,
    \begin{equation}\label{hoeffdings_inequality}
        \left|\phi(\tilde{\mathcal{P}}_j (U))-\Tr \left(O \tilde{\mathcal{P}}_j (U)\left(\ket{0}\bra{0}^{\otimes n}\right) \right)  \right| \leq  \|O\|_2 \sqrt{\frac{2}{N} \ln\left(\frac{2}{\delta}\right)}
    \end{equation}
    where the inequality holds with probability at least $1-\delta$. The estimate of $f(U)$ is obtained as an affine transformation of the empirical means $\phi(\tilde{\mathcal{P}}_j (U))$, 
    \begin{equation}\label{estimator_base}
        \hat{f}(U)=\sum_{j=1}^J \alpha_j \,\phi(\tilde{\mathcal{P}}_j (U))+\alpha_0  = \boldsymbol{\alpha}^T \boldsymbol{\phi}(U)
    \end{equation}
    where  $\boldsymbol{\alpha}=[\alpha_0,\alpha_1, \alpha_2, \cdots, \alpha_J]^T\in \mathbb{R}^{J+1}$, and 
    \begin{equation}\label{feature_map}
        \boldsymbol{\phi}(U)=[ \,1,\,\phi(\tilde{\mathcal{P}}_1 (U)), \phi(\tilde{\mathcal{P}}_2 (U)), \cdots, \phi(\tilde{\mathcal{P}}_J (U))]^T
    \end{equation}
    which can be regarded as a feature vector. 

    In CDR, the coefficients $\boldsymbol{\alpha}$ are not predefined. Instead, they are learned using a training set $\mathcal{T}_S=\{W_i, f(W_i)\}_{i=1}^{S}$ to better adapt to the specific noise characteristics of the device. The circuits $W_i$, which are mainly composed of Clifford gates, are designed to have a similar structure to the target circuit $U$, and, therefore, experience a similar noise when implemented on a noisy device. Since these circuits are primarily composed of Clifford gates, their expectation values $f(W_i)$ can be efficiently computed classically.
    
    Specifically, the coefficients $\boldsymbol{\alpha}$ are obtained as the solution of the unconstrained ridge regression problem, i.e.,
    \begin{align}\label{learning_alpha}
        \hat{\boldsymbol{\alpha}}&= \argmin_{\boldsymbol{\alpha} \in \mathbb{R}^{J+1}}  \left\| \boldsymbol{f}- \Phi \boldsymbol{\alpha} \right\|_2^2 + \mu \left\| \boldsymbol{\alpha} \right\|_2^2 \nonumber \\&=\left(\Phi^T \Phi + \mu I \right)^{-1} \Phi^T  \boldsymbol{f}
    \end{align}
    where $\mu \in \mathbb{R}^+$, $\boldsymbol{f}=[f(W_1),f(W_2),\cdots, f(W_S)]^T$, and $\Phi=[ \boldsymbol{\phi}(W_1), \boldsymbol{\phi}(W_2),\cdots,  \boldsymbol{\phi}(W_S)]^T$.
    
    Interestingly, since the function $f(U)$ belongs to the interval $[-\left\| O\right\|_2,\left\| O\right\|_2]$, the estimate $\hat{f}(U)$ can be slightly improved by using instead: 
        \begin{equation}\label{estimator_}
            \hat{f}(U)=\begin{cases}
 \,\,\,\left\| O\right\|_2 & \text{ if } \hat{\boldsymbol{\alpha}}^T \boldsymbol{\phi}(U)>\left\| O\right\|_2 \\
 \,\,\, \hat{\boldsymbol{\alpha}}^T \boldsymbol{\phi}(U)  & \text{ if } -\left\| O\right\|_2 \leq  \hat{\boldsymbol{\alpha}}^T \boldsymbol{\phi}(U)\leq \left\| O\right\|_2 \\
 -\left\| O\right\|_2 & \text{ if } \hat{\boldsymbol{\alpha}}^T \boldsymbol{\phi}(U)<-\left\| O\right\|_2
\end{cases}
    \end{equation}
    \subsection{Circuits of the Training set}\label{circuits_training_set}

    In this subsection, we explain the specific procedure used in this work to generate the training set circuits. First, without loss of generality, we assume that circuit $U$ is formed by the set of gates $\{\text{CNOT}, R_X, R_Y, R_Z\}$, i.e., by the CNOT gate and rotations along the $X$, $Y$, and $Z$ axes. The circuits of the training set, $W_i$, are randomly generated, while maintaining the same structure as $U$. In other words, both $U$ and $W_i$ consist of 1-qubit and 2-qubit gates applied at the same positions. 
    
    To generate these circuits $W_i$, we begin by randomly selecting a small subset of the rotations in $U$ that will remain fixed. This is done to maximize the similarity between the circuits in the training set and circuit $U$. Next, the remaining rotations are randomly replaced with Clifford gates. This step can be performed in different ways, and we choose a strategy based on the decompositions shown below \cite{bennink2017unbiased}, which allows us to study the estimation error theoretically.
    \begin{equation}\label{decomposition_X}
        \hspace{0.62cm}R_X(\theta) \rho R_X(\theta)^{\dagger} = \frac{1+\cos \theta-\sin \theta }{2} \rho +\frac{1-\cos \theta-\sin \theta }{2}  X \rho X^\dagger + \sin \theta  \, \sqrt{X} \rho \sqrt{X}^\dagger
    \end{equation}
    \begin{equation}\label{decomposition_Y}
        R_Y(\theta) \rho R_Y(\theta)^{\dagger} = \frac{1+\cos \theta-\sin \theta }{2} \rho +\frac{1-\cos \theta-\sin \theta }{2}  Y \rho Y^\dagger + \sin \theta  \, \sqrt{Y} \rho \sqrt{Y}^\dagger
    \end{equation}
    \begin{equation}\label{decomposition_Z}
        \hspace{0.65cm}R_Z(\theta) \rho R_Z(\theta)^{\dagger} = \frac{1+\cos \theta-\sin \theta }{2} \rho +\frac{1-\cos \theta-\sin \theta }{2}  Z \rho Z^\dagger + \sin \theta  \, \sqrt{Z} \rho \sqrt{Z}^\dagger
    \end{equation}

    Specifically, the non-fixed rotations are replaced by a gate from the set $\{I, X, Y, Z, \sqrt{X}, \sqrt{Y},$\\$ \sqrt{Z}\}$, with the probability of each gate being proportional to the absolute values of the corresponding coefficients in the decompositions.

    \section{Feature maps for CDR}
    \label{sec:feature_map}

    In this section, we provide an overview of the feature maps employed in previous works and introduce our proposed feature maps, i.e., the geometric feature map an the insertion feature map, highlighting the differences and motivation behind the various approaches.

    \subsection{Feature maps in prior works}
    
    A simple feature map for CDR was proposed in \cite{czarnik2021error} and it is defined as $\boldsymbol{\phi}(U)=[1,\phi(\tilde{\mathcal{U}})]^T$, where $\tilde{\mathcal{U}}$ denotes the noisy implementation of circuit $U$. This feature map is appropriate for the case in which the noise model is a depolarizing channel, i.e.,

    \begin{equation}\label{noise_expression}
       \tilde{\mathcal{U}}(\ket{0}\bra{0}^{\otimes n})= (1-p) U\ket{0}\bra{0}^{\otimes n} U^{\dagger}+p \frac{I}{2^n}
    \end{equation}
    Indeed, in this case
    \begin{equation}
        \Tr(O\tilde{\mathcal{U}}(\ket{0}\bra{0}^{\otimes n}))= (1-p) f(U)+p \frac{\Tr(O)}{2^n} 
    \end{equation}
    That is, there is an affine relation between $f(U)$ and $ \mathrm{Tr}(O\tilde{\mathcal{U}}(\ket{0}\bra{0}^{\otimes n}))$, which is approximated by $\phi(\tilde{\mathcal{U}})$ as shown in \eqref{hoeffdings_inequality}. 
    
    In most cases, this noise model in \eqref{noise_expression} is too simple, and it may be beneficial to explore more complex feature maps. For example, the ZNE-based method in \cite{lowe2021unified} uses the following feature map
    \begin{equation}
        \boldsymbol{\phi}(U)= [ \phi(\tilde{\mathcal{P}}(U,\lambda_1)),\phi(\tilde{\mathcal{P}}(U,\lambda_2)),\cdots, \phi(\tilde{\mathcal{P}}(U,\lambda_J))]
    \end{equation}
    where $\phi(\tilde{\mathcal{P}}(U,\lambda_i))$ denotes the empirical mean obtained from a circuit $U$ that has been perturbed with an artificially increased noise level. The noise level is controlled by $\lambda_i \geq 1$, a dimensionless factor such that the higher the factor, the noisier the perturbation $\tilde{\mathcal{P}}(U,\lambda_i)$ of circuit $U$.

    This feature map can be implemented in different forms depending on the sequence $\{\lambda_i\}_i$ and the method used to increase the noise. The specific implementation described in \cite{lowe2021unified} uses $\lambda_i = 2i - 1 \geq 1$, and the noise is increased using the gate folding technique \cite{giurgica2020digital} on CNOT gates. In this approach, the circuit $\tilde{\mathcal{P}}(U,\lambda_i)$ is constructed by inserting $\lambda_i - 1$ additional CNOT gates after each CNOT initially present in $U$ to amplify the noise. These additional CNOT gates are applied to the same qubits as the original ones. This noise scaling method assumes that most of the noise arises from the implementation of 2-qubit gates. Interestingly, for this type of implementation, $\lambda_i$ can be interpreted simply as the ratio between the number of CNOT gates in the perturbed circuit $\tilde{\mathcal{P}}(U,\lambda_j)$ and the original number of CNOT gates in circuit $U$.

    In the numerical experiments of this work, we implement this feature map using a slightly different approach. In particular, we use $\lambda_i=1+\frac{2(i-1)}{k}$, where $k$ denotes the number of original CNOT gates in circuit $U$. The noise is increased as in \cite{lowe2021unified} using the gate folding technique of CNOT gates \cite{giurgica2020digital, he2020resource}. Additionally, we include the constant $1$ in the feature map, as specified in \eqref{feature_map}. For more details of our particular implementation see Appendix \ref{appendix_implementation}. This appendix also includes a numerical experiment demonstrating that this implementation yields a slight improvement in the estimation error compared to \cite{lowe2021unified}.

    \subsection{Geometric Feature Map}\label{sec:Geometric_feature_map}

    The geometric feature map is defined by
    \begin{equation}\label{geometric_feature_vector_eq}
        \boldsymbol{\phi}(U)=[1,\phi(\tilde{\mathcal{U}}),\phi(\tilde{\mathcal{U}}^2),\cdots, \phi(\tilde{\mathcal{U}}^J)]^T
    \end{equation}
    where $\tilde{\mathcal{U}}^j$ denotes the composition of $\tilde{\mathcal{U}}$ $j$-times. The motivation for this extension is that the values $\{\phi(\tilde{\mathcal{U}}^j)\}_{j=2}^J$ can improve the estimate of $f(U)$. Specifically, we show that in the low noise scenario, a linear combination of these values serves as an estimator for $f(U)$. Note that this assumption is not very restrictive since QEM requires the noise level to be moderate in order to obtain reasonable results. 
    
    To understand how to approximate $f(U)$ from the values $\{\phi(\tilde{\mathcal{U}}^j)\}_{j=2}^J$, let us begin with the following definition.
    \begin{align}\label{expression_g_t_U}
        g(t,U)&:=\bra{0}^{\otimes n} (U^t)^\dagger O U^t \ket{0}^{\otimes n} \nonumber \\ & = \sum_{p=1}^{2^n} \sum_{q=1}^{2^n} \bra{0}^{\otimes n} \ket{u_q} \bra{u_q} O \ket{u_p}\bra{u_p}\ket{0}^{\otimes n} \, e^{-j 2\pi (\omega_p-\omega_q)\, t}  \nonumber \\ &=\sum_{\varphi \in \Omega(U)}  c(\varphi)\, e^{-j 2\pi \varphi t}
    \end{align}
    where $\Omega(U):=\{\omega_p-\omega_q\}_{p,q=1}^{2^n}$, $\ket{u_i}$ is the $i^{th}$ eigenvector of $U$, and $\omega_i\in[0,1)$ denotes the phase of the corresponding eigenvalue. The equality follows from substituting the spectral decomposition of $U$.

    If $\Omega(U)$ is known, the coefficients $c(\varphi)$ can be estimated by solving a linear system of equations using estimates of $g(t,U)$ for $\{t\}_{t=2}^J$. Consequently, we can also estimate $g(1,U)=f(U)$ using the estimate of these coefficients. Notably, the estimate of $f(U)$ is a linear combination of samples $\{g(t,U)\}_{t=2}^J$. If the noise is sufficiently low, then $\phi(\tilde{\mathcal{U}}^j)\approx g(j,U)$, so the estimate of $f(U)$ can be approximated as a linear combination of $\{\phi(\tilde{\mathcal{U}}^t)\}_{t=2}^J$, i.e., $f(U)\approx \alpha_2 \phi(\tilde{\mathcal{U}}^2)+\cdots + \alpha_J \phi(\tilde{\mathcal{U}}^J)$. Note, however, that the $\alpha_j$ coefficients in this linear combination are dependent on $U$ through the phase of its eigenvalues. This invalidates the learning procedure described in Section \ref{sec:notation} that calculates $\boldsymbol{\alpha}$ assuming that these coefficients remain relatively constant across the circuits in the training set.

    This conundrum can be partially overcome considering the following result.
    \begin{proposition}\label{corollary_1}
        For any unitary $U$, $p\in \mathbb{P}$ and $Q\in \mathbb{N}$, there exists a set of complex coefficients $\{c_q\}_{q=0}^{Q}$ and a constant $C$ such that
        \begin{equation}
            \left|g(t,U)-\sum_{q=-Q}^Q c_q \, e^{-j \frac{2\pi}{p}q t}\right|\leq  2\left\| O \right\|_2 \left(C \,   \frac{ \mathbbm{1}\{Q<p\}\,p}{{Q}} + \sin \left(\frac{2\pi t}{p}\right)\right)
        \end{equation}
        for $t\in[0,\frac{p}{4}]$, where $c_{-q}=c_q^*$ and $C\leq 3 \sqrt{\left( 1+ \frac{4}{3}\pi^2\right)}$.
    \end{proposition}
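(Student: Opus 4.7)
The plan is a two-stage approximation: first, I replace $U$ by a rationalized unitary $V$ with $V^p = I$ whose eigenphases are close to those of $U$, which makes $g(t,V)$ exactly $p$-periodic; then I approximate $g(t,V)$ by a trigonometric polynomial of degree $Q$ on the circle of length $p$ via a Jackson-type argument.

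For the first stage, I write $U = \sum_k e^{-j2\pi\omega_k}\ket{u_k}\bra{u_k}$, let $\mu_k \in \mathbb{Z}$ be the nearest integer to $p\omega_k$ (so $|\omega_k - \mu_k/p| \leq 1/(2p)$), and set $V = \sum_k e^{-j2\pi\mu_k/p}\ket{u_k}\bra{u_k}$. Then
\[
\|U^t\ket{0^n} - V^t\ket{0^n}\|^2 = \sum_k|\braket{u_k}{0^n}|^2 \cdot 4\sin^2\!\bigl(\pi(\omega_k - \mu_k/p)t\bigr) \leq 4\sin^2\!\bigl(\pi t/(2p)\bigr).
\]
Combining this with the polarization identity $\bra{\psi}O\ket{\psi} - \bra{\tilde\psi}O\ket{\tilde\psi} = \bra{\psi-\tilde\psi}O\ket{\psi} + \bra{\tilde\psi}O\ket{\psi-\tilde\psi}$ and $\|\psi\| = \|\tilde\psi\| = 1$ yields $|g(t,U) - g(t,V)| \leq 4\|O\|_2\sin(\pi t/(2p))$. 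A short trigonometric check shows $4\sin(\pi t/(2p)) \leq 2\sin(2\pi t/p)$ on $[0, p/4]$---equivalently $\sin(4u) \geq 2\sin u$ on $[0, \pi/8]$, which follows from $2\cos u\cos 2u \geq 1$ there---producing the $2\|O\|_2\sin(2\pi t/p)$ term of the claim.

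For the second stage, because $V^p = I$ the function $g(t,V)$ is $p$-periodic, and expanding $(V^t)^\dagger O V^t$ exhibits it as a trigonometric polynomial with frequencies $m/p$ where $m = \mu_{k'} - \mu_k$ ranges over $\{-p, \ldots, p\}$. When $Q \geq p$ the indicator $\mathbbm{1}\{Q<p\}$ vanishes and I simply take its exact Fourier coefficients as $\{c_q\}$. When $Q < p$, differentiating $V^t\ket{0^n}$ gives $\|\tfrac{d}{dt}V^t\ket{0^n}\| \leq 2\pi\|H_V\|_2 \leq 2\pi$, so $|g'(t,V)| \leq 4\pi\|O\|_2$, i.e., $g(\cdot,V)$ is Lipschitz with constant proportional to $\|O\|_2$. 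A Jackson-type construction---convolving $g(\cdot,V)$ against a nonnegative trigonometric kernel of degree $Q$ and period $p$ whose first absolute moment is $O(p/Q)$---then delivers a trigonometric polynomial $T_Q(t) = \sum_{|q|\leq Q} c_q\, e^{-j2\pi qt/p}$ with $\|g(\cdot,V) - T_Q\|_\infty \leq 2C\|O\|_2\,p/Q$. Reality of $g$ forces $c_{-q} = c_q^*$, and the triangle inequality combines both stages into the claimed bound.

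The main obstacle is extracting the specific constant $C = 3\sqrt{1 + \tfrac{4}{3}\pi^2}$ rather than merely the qualitative $O(\|O\|_2 p/Q)$ from a soft Jackson's theorem. The Pythagorean form $\sqrt{9 + 12\pi^2}$ strongly suggests that the proof uses a weighted Cauchy-Schwarz combining Parseval ($\sum|c_m|^2 \leq \|O\|_2^2$) with the derivative $L^2$-bound ($\sum m^2|c_m|^2 \leq 4p^2\|O\|_2^2$), optimized over a weight parameter; alternatively it could arise from a direct first-moment computation on a specific Jackson-type kernel on the period-$p$ circle. The primality of $p$ does not appear to enter this proposition and presumably serves other arguments in the paper.
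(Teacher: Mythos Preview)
Your two-stage strategy (rationalize the eigenphases of $U$ to multiples of $1/p$, then invoke Jackson's theorem on the resulting $p$-periodic function) is exactly the paper's approach. The technical choices differ slightly: the paper uses floor rounding $\lfloor p\omega_i\rfloor/p$ rather than nearest-integer rounding, and it bounds $|g(t,U)-g(t,Y)|$ via the trace distance and fidelity (their Lemma~1) rather than directly via the state-vector difference as you do. Both routes land on the $2\|O\|_2\sin(2\pi t/p)$ term for $t\in[0,p/4]$.

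Your speculation about the origin of $C=3\sqrt{1+\tfrac{4}{3}\pi^2}$ is off, though. It is not a Cauchy--Schwarz/Parseval optimization. The paper proves a separate Lipschitz lemma (their Lemma~2) for \emph{any} unitary via a fidelity lower bound: expanding $|\langle 0^{\otimes n}|U^{t_2-t_1}|0^{\otimes n}\rangle|^2$ and applying $\cos x\ge 1-x^2/2$, $\sin x\ge x-x^3/6$, they get the Lipschitz constant $L\le 2\|O\|_2\sqrt{4\pi^2+\tfrac{16}{3}\pi^4}=4\pi\|O\|_2\sqrt{1+\tfrac{4}{3}\pi^2}$. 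Plugging this into Jackson's classical bound $|h-T_Q|\le \tfrac{3TL}{2\pi Q}$ with $T=p$ and dividing out $2\|O\|_2$ yields exactly $C=3\sqrt{1+\tfrac{4}{3}\pi^2}$; the $\tfrac{4}{3}\pi^2$ is Taylor-remainder debris, not anything structural. Your direct differentiation argument gives the sharper $L\le 4\pi\|O\|_2$, which with the same Jackson constant would give $C\le 3$, strictly better than the paper's bound. So you are not missing an idea needed to reach their constant---your route already beats it.

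Your remark about primality is correct: the argument only needs $p$-periodicity of $g(t,Y)$, which holds for any integer $p$; the paper's Lemma~1 is phrased with $p\in\mathbb{P}$ only to make the statement ``period $p$ or constant'' literally true, and this plays no role in the proof of the proposition.
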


    \begin{proof}
        See Appendix \ref{first_appendix}      
    \end{proof}

    Therefore, depending on the desired level of precision, we can replace $\Omega(U)$ with $\{\pm i/p\}_{i=0}^Q$, which is independent of $U$. This implies that there exists a set of coefficients such that $f(U)\approx \alpha_2 \phi(\tilde{\mathcal{U}}^2)+\cdots + \alpha_J \phi(\tilde{\mathcal{U}}^J)$ for all circuits on the training set. Hence, this reasonably justifies the application of this feature vector in the context of Clifford Data Regression.

    \subsection{Insertion and Insertion-ZNE Feature Map}

    Although the geometric feature map in \eqref{geometric_feature_vector_eq} is theoretically substantiated, its practical utility is rather limited as shown in the simulations. This is mainly due to the fact that the values $\phi(\tilde{\mathcal{U}}^j)$ are extremely noisy for $j>1$. To address this issue, we propose the \textit{insertion} feature map. In this method, we split the circuit $U$ into two parts, i.e., $U=U_2U_1$, and insert a known circuit given by unitary $V$ between $U_1$ and $U_2$.
    
    Specifically, the insertion feature map is the following one
    \begin{equation}\label{insertion_feature_map}
        \boldsymbol{\phi}(U)= [ 1 , \phi(\tilde{\mathcal{U}}_2\circ \tilde{\mathcal{V}}^{\,t_1}\circ\tilde{\mathcal{U}}_1),\cdots, \phi(\tilde{\mathcal{U}}_2\circ \tilde{\mathcal{V}}^{\,t_J}\circ\tilde{\mathcal{U}}_1)]^T
    \end{equation}
    where $\tilde{\mathcal{V}}$ denotes the noisy implementation of circuit $V$. Consequently, by adopting this approach, the circuit is augmented by inserting replicas of circuit $V$, thus allowing for a controlled increase of the noise within the circuit.

    The motivation for the proposed feature map is analogous to the one presented in the previous section. In particular, let us define the following auxiliary function of $t$ to motivate this design.
    \begin{equation}
        \bra{0}^{\otimes n} U_1^\dagger (V^t)^\dagger U_2^\dagger O U_2 V^t U_1 \ket{0}^{\otimes n} 
    \end{equation}
    Note that, similarly to \eqref{expression_g_t_U}, this function can be expressed as a sum of complex exponentials whose frequencies belong to $\Omega(V)$. Therefore, following a similar procedure to the one outlined in section \ref{sec:Geometric_feature_map}, we can construct an estimator for $f(U)$ as a linear combination of values $\phi(\tilde{\mathcal{U}}_2\circ \tilde{\mathcal{V}}^{\,t_i}\circ\tilde{\mathcal{U}}_1)$. However, in this case, the coefficients in this combination are independent of the unitary $U$, since the set $\Omega(V)$ depends only on the unitary $V$. For our implementation, we use $V = R(\theta_1) \otimes \cdots \otimes R(\theta_n)$, which is composed of 1-qubit rotation gates only. This design of $V$ has the advantage of allowing us to efficiently implement the noisy circuit $\tilde{\mathcal{U}}_2 \circ \tilde{\mathcal{V}}^{\,t} \circ \tilde{\mathcal{U}}_1$ for any real value of $t$, without increasing the circuit's size. This is possible since $V^t= R(t\,\theta_1)\otimes \cdots \otimes R( t\,\theta_n)$.

    Finally, in line with \cite{lowe2021unified}, which shows that employing distinct noise levels can improve estimation, we likewise investigate the impact of introducing different levels of noise to the insertion feature map. Specifically, we also investigate the following feature map, which we refer to as the insertion-ZNE feature map
    \begin{align}\label{generalized_insetion_feature_map}
        \boldsymbol{\phi}(U)= [ 1 ,&\,\,  \phi(\tilde{\mathcal{P}}(U_2\,V^{t_1} \,U_1,\lambda_1)),\,\, \phi(\tilde{\mathcal{P}}(U_2\,V^{t_2} \,U_1,\lambda_1)),\,\cdots\,,\,\, \phi(\tilde{\mathcal{P}}(U_2\,V^{t_{J_1}} \,U_1,\lambda_1)) \nonumber\\ &  \,\, \phi(\tilde{\mathcal{P}}(U_2\,V^{t_1} \,U_1,\lambda_2)),\,\, \phi(\tilde{\mathcal{P}}(U_2\,V^{t_2} \,U_1,\lambda_2)),\,\cdots\,,\,\, \phi(\tilde{\mathcal{P}}(U_2\,V^{t_{J_1}} \,U_1,\lambda_2)) \nonumber\\& \,\,\,\,\hspace{1cm}\cdots \hspace{3cm} \cdots \hspace{3.5cm} \cdots \nonumber\\ &  \, \phi(\tilde{\mathcal{P}}(U_2\,V^{t_1} \,U_1,\lambda_{J_2})),\,\, \phi(\tilde{\mathcal{P}}(U_2\,V^{t_2} \,U_1,\lambda_{J_2})),\,\cdots\,,\phi(\tilde{\mathcal{P}}(U_2\,V^{t_{J_1}} \,U_1,\lambda_{J_2}))]^T
    \end{align}
    where $J_2$ denotes the number of noise levels, and $J_1$ is the number of different values of $t$. Therefore, the total number of perturbations is given by $J=J_1 J_2$.

    \section{Performance Analysis}\label{sec:performance_analysis}

    This section provides a theoretical analysis of several aspects of the proposed methods. In particular, we start by studying the complexity of the estimator, i.e., the number of operations required to perform the estimation. Next, we assess the estimation error through a bound derived using Rademacher complexity. Finally, an information-theoretical analysis is provided regarding the scaling of the method's variables — namely, $S$, $N$, and $J$— that is necessary to achieve specific performance levels. This latter analysis is conducted from a general standpoint and therefore applies to other types of estimators as well.

    \subsection{Complexity of the estimator}

    In this section, we examine the complexity of the resulting estimator for the feature maps presented in the previous section. The estimation method can be implemented through the following steps: generation of the training set, computation of matrix ${\Phi}$, evaluation of the optimal coefficients $\hat{\boldsymbol{\alpha}}$, and computation of estimate $\hat{f} (U)$. Next, the complexity of each step is discussed individually:

    \begin{itemize}
        \item The training set $\mathcal{T}_S=\{W_i, f(W_i)\}_{i=1}^S$ can be generated in time $O(n^2 \ell S)$, where $\ell$ denotes the number of gates of $U$. This follows since $O(n^2 \ell)$ operations are required to compute the expected value for each circuit $W_i$ \cite{aaronson2004improved}, which is mostly formed by Clifford gates. Importantly, we consider that the number of non-Clifford gates in the circuits of the training set is bounded by a constant. This aspect is significant because the scaling of the running time with the number of non-Clifford gates is exponential. In particular, for the algorithm presented in \cite{aaronson2004improved}, the running time scales as $O(16^{\tilde{\ell}})$, where $\tilde{\ell}$ denotes the number of non-Clifford gates. Interestingly, more efficient approaches have been developed to reduce the complexity when few non-Clifford gates are used \cite{bravyi2016improved,bravyi2019simulation}, although the scaling remains exponential.

        \item Next, to compute matrix $\Phi \in \mathbb{R}^{S\times (J+1)}$, which contains the feature maps of the training set circuits, the number of operations varies slightly depending on the feature map used. Since all methods can be analyzed similarly, we present the derivation only for the insertion method and, later in this section, provide the cost of the other ones. Specifically, for the insertion method, $O(S J N \ell)$ operations are required. This follows from the fact that, for each non-trivial entry of $\Phi$, we need to measure $N$ times a circuit consisting of $O(\ell)$ gates.

        \item Once $\boldsymbol{f}$ and ${\Phi}$ are computed, we can obtain $\hat{\boldsymbol{\alpha}}$ using \eqref{learning_alpha} in time $O(J^3)$. Note that there is an algorithm that can perform the inversion in time $O(J^{2.376})$ \cite{COPPERSMITH1990251}. However, the algorithm is inefficient for the values of $J$ considered because it has a quite large prefactor.

        \item Finally, the estimate $\hat{f}(U)$ can be obtained in time $O(JN\ell+J)$. The first term corresponds to the cost of computing the feature map $\boldsymbol{\phi}(U)$, and the second one to the dot product between $\hat{\boldsymbol{\alpha}}$ and $\boldsymbol{\phi}(U)$.
    \end{itemize}

    In conclusion, a total of $O(n^2 \ell S + S J N \ell +J^3)$ operations are required to compute the estimation for the insertion feature map. In comparison, when using the feature map introduced in \cite{czarnik2021error}, i.e., $\boldsymbol{\phi}(U)=[1,\phi(\tilde{\mathcal{U}})]^T$, the estimator requires a number of operations that scale as $O(n^2 \ell S + S N \ell)$. Generally, the dominant term is given by the product $S N \ell$, and therefore, roughly speaking, the ratio between both complexities is of the order $O(J)$. As shown in the numerical experiments, the proposed method reduces the average error at the cost of increasing the complexity.

    Regarding the other proposed feature maps, they require a larger number of operations than the insertion method, as shown in Table \ref{table_complexity}. This increase arises because the perturbations considered in these feature maps lead to a higher total number of gates used.

    \begin{table}[H] 
\centering 
\begin{tabular}{|c|l|}
\hline
Classical method\cite{czarnik2021error} & $O(SN\ell + n^2 \ell S)$ \\ \hline
Geometric method & $O(SNJ^2 \ell+n^2 \ell S+J^3)$ \\ \hline
Insertion method & $O(SN J \ell +n^2 \ell S +J^3)$  \\ \hline
ZNE-based method & $O(SN J \ell+SNJ^2+n^2 \ell S +J^3)$\\ \hline
Insertion-ZNE method &  $O(SNJ\ell  + SNJ J_2 +n^2 \ell S +J^3)$ \\ \hline
\end{tabular}
\caption{Complexity of the estimator for different feature maps.}
\label{table_complexity}  
\end{table}

    It is important to remark that the results shown in Table \ref{table_complexity} refer to the straightforward implementation of the method. Therefore, more efficient implementations may exist. For instance, when $J\gg S$, the kernel trick can be employed to reduce the complexity, where the kernel is defined as $\kappa(W_i,W_j):=\boldsymbol{\phi}(W_i)^T \boldsymbol{\phi}(W_j)$. This technique allows to compute $\hat{f}(U)$ without having to compute the feature maps directly; only the kernel function is needed. The estimate as a function of the kernel is given by \cite{shawe2004kernel}
    \begin{equation}\label{kernel_estimation}
        \hat{f}(U)= \boldsymbol{f}^T \left( {K} + \mu {I} \right)^{-1} \boldsymbol{k}(U)
    \end{equation}
    where $\boldsymbol{k}(U)=[\kappa(W_1,U),\cdots, \kappa(W_S,U)]^T$, and $K$ denotes the matrix formed by evaluating the kernel on the circuits of the training set, i.e., $K_{i,j}=\kappa(W_i,W_j)$. Given $\boldsymbol{f}$ and $K$, the complexity of evaluating \eqref{kernel_estimation} is $O(S^3)$, which corresponds to the cost of inverting  the kernel matrix. However, to study the complexity of the estimator, the number of operations required to approximate the kernel function $\kappa(W_i,W_j)$ should also be considered. In particular, to estimate the kernel, we first substitute \eqref{feature_map} into the definition of the kernel, yielding the following expression.
    \begin{align}\label{kernel_expression}
        \lim_{N\rightarrow \infty} \kappa(W_i,W_j)& =1 +\sum_{y=1}^J \Tr(O \tilde{\mathcal{P}}_y (W_j)(\rho_0))\Tr(O \tilde{\mathcal{P}}_y (W_i)(\rho_0)) \nonumber \\  &= 1+J\, \mathbb{E}_{Y} \Big[ \Tr((O\otimes O) (\tilde{\mathcal{P}}_Y (W_i) \otimes \tilde{\mathcal{P}}_Y (W_j)) (\rho_0^{\otimes 2}) )\Big]\nonumber  \\ & = 1+ J\,\mathbb{E}_{Y} \mathbb{E}_{A_Y}\Big[  A_Y(W_i,W_j)\Big]
    \end{align}
    where $\rho_0=\ket{0}\bra{0}^{\otimes n}$. The second equality follows from the property $\Tr(A\otimes B)=\Tr(A) \Tr(B)$, and the definition of random variable (r.v.) $Y$, which is uniformly distributed over the set $\{1,\cdots, J\}$. The final equality uses the fact that $A_y(W_i,W_j)$ denotes the r.v. obtained by measuring the state 
    \begin{equation}\label{state_form}
        \left(\tilde{\mathcal{P}}_y (W_i) \otimes \tilde{\mathcal{P}}_y (W_j)\right) (\rho_0^{\otimes 2})
    \end{equation}
    with observable $O\otimes O$. Hence, \eqref{kernel_expression} shows that to approximate the kernel, only an estimate of the expected value $\mathbb{E}_{Y} \mathbb{E}_{A_Y}[ A_Y(W_i,W_j)]$ is needed. The empirical mean, i.e., the average over $N$ samples of r.v. $A_Y(W_i,W_j)$, is used as this estimate. Since $|A_Y(W_i,W_j)|\leq \|O\|_2^2$, we can apply Hoeffding's inequality to quantify the error in this approximation. Specifically, the inequality implies that the empirical mean deviates from the expected value by at most an additive error
        \begin{equation}
            \epsilon=\|O\|_2^2\sqrt{\frac{2}{N} \ln\left(\frac{2}{\delta}\right)}
        \end{equation}
    with probability at least $1-\delta$. Thus, using the empirical mean yields and estimate of $\kappa(W_i,W_j)$ with an error $O(J\epsilon)$, which can be computed in time $O\left((\ell_i+\ell_j) N \right)$, where $\ell_i$ denotes the number of gates in circuit $W_i$. Hence, the running time of the kernel trick-based approach for the insertion method is $O\left(S^2 \ell N+n^2 \ell S+ S^3\right)$\footnote{For the other methods, the specific scaling can be derived analogously.}. The first term follows from  computing the kernel matrix $K$, the second one from the generation of the training set, and the last one from computing the estimate, which is dominated by the matrix inversion. We observe that this approach differs from the scaling of the straightforward implementation, $O(SN J \ell + n^2 \ell S + J^3)$, in two terms. Specifically, the change in the last term results from the fact that, instead of inverting a matrix of dimension $J+1$, we now invert a matrix of dimension $S$. The other change follows from the difference in computing the kernel matrix $K$ and matrix $\Phi$. In conclusion, the number of operations in the kernel trick-based approach is smaller than in the straightforward implementation only when $J \gg S$.

    \subsection{Bound on the Estimation Error}\label{error_bound_section}

    In this section, we provide an upper bound for the estimation error $|\hat{f}(U) - f(U)|$ as a function of the distinct parameters of the method. Firstly, note that even for a fixed circuit $U$, $\hat{f}(U)$ is still a r.v. due to the finite number of samples $N$ used to estimate each component of the feature map. Therefore, we study the expected value of the error,
    \begin{equation}
        \mathbb{E} \left[|\hat{f}(U)-f(U)|\right]
    \end{equation}
    which is upper bounded as (Appendix \ref{appendix_bound_error})
    \begin{equation}\label{inequality_error}
        \mathbb{E} \left[|\hat{f}(U)-f(U)|\right] \leq \frac{\|\boldsymbol{\hat{\alpha}}\|_2 \left \| O \right \|_2}{\sqrt{N}}+ |f(U)-\boldsymbol{\hat{\alpha}}^T \boldsymbol{\phi}_{\infty}(U)|
    \end{equation}
    where $\boldsymbol{\phi}_\infty(U)$ denotes the feature map obtained when $N\rightarrow \infty$ samples are used, i.e., when the expected values $\Tr \left(O \tilde{\mathcal{P}}_j (U)\left(\ket{0}\bra{0}^{\otimes n}\right) \right)$ are used as the elements of the feature map $\boldsymbol{\phi}(U)$. Now, the right-hand side of the previous inequality does not depend on the r.v. $\hat{f}(U)$; instead, it depends on $\boldsymbol{\phi}_{\infty}(U)$, which is a deterministic function.
    Next, we proceed to bound the second term. To accomplish this, we rely on two  results: firstly, the fact that any quantum channel can be decomposed into Clifford channels \cite{bennink2017unbiased}; and secondly, a generalization bound based on Rademacher complexity.

    The decomposition of a circuit into Clifford channels was used in \cite{czarnik2021error} to demonstrate that their proposed method for QEM is unbiased if zero additive error is achieved for all the Clifford circuits resulting from decomposing the desired unitary. Here, we take one step further, and provide a qualitative bound on the rate at which the error decreases.

    \begin{theorem}\label{theorem_generalization}
    Let $D(U)$ denote the distribution over circuits obtained by following the procedure explained in Section \ref{circuits_training_set}, and let $W_1, \ldots, W_S$ be $S$ i.i.d. samples from this distribution. Assuming that any 1-qubit gate is perturbed by the same quantum channel $\mathcal{N}$ when implemented on the noisy device, then, for the insertion feature map, the ZNE-based feature map, and the insertion-ZNE feature map, 
        \begin{align}\label{error_modified_}
            &\Big |f(U)-\boldsymbol{\hat{\alpha}}^T \boldsymbol{\phi}_{\infty}(U) \Big| \leq \frac{N(\boldsymbol{\theta})}{S}\sum_{i=1}^S \left| f(W_i)-\boldsymbol{\hat{\alpha}}^T \boldsymbol{\phi}_{\infty}(W_i)\right| \nonumber \\ & + \frac{\left \lceil \| \boldsymbol{\hat{\alpha}} \|_2 \right \rceil \left \| O \right \|_2  N(\boldsymbol{\theta}) \sqrt{(J+1)}}{\sqrt{S}} \left( 2+\sqrt{72 \log \frac{8 \left \lceil \| \boldsymbol{\hat{\alpha}} \|_2 \right \rceil^2}{\delta}}  \right)  
        \end{align}
        holds with probability at least $1-\delta$ over the random samples $W_i$, for any observable $O$ such that $\left \| O \right \|_2\geq 1$, where 
        \begin{equation}
            N(\boldsymbol{\theta})=\prod_{i}  \left|\frac{1+\cos \theta_i-\sin \theta_i }{2}\right|  + \left|\frac{1-\cos \theta_i-\sin \theta_i }{2} \right|  + \left|\sin \theta_i \right|  
        \end{equation}
        and the product is taken over all the rotations substituted to induce the distribution $D(U)$, with $\theta_i$ denoting the corresponding angle of the rotation.
    \end{theorem}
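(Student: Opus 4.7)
The proof has two phases. Phase~1 converts the deterministic quantity $|f(U)-\hat{\boldsymbol{\alpha}}^T\boldsymbol{\phi}_\infty(U)|$ into an expectation over the training distribution $D(U)$ by using the quasi-probability decomposition of the rotations; Phase~2 replaces that expectation by its empirical counterpart via a Rademacher-complexity uniform-convergence bound, with a union bound to deal with the data-dependent norm of $\hat{\boldsymbol{\alpha}}$.

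\textbf{Phase 1 — Channel decomposition.} The identities (8)--(10) express each substituted rotation $R(\theta_i)$ as a signed combination of three Clifford channels whose coefficients have total absolute value $|a_i|+|b_i|+|c_i|$, i.e.\ the $i$-th factor of $N(\boldsymbol{\theta})$. Under the hypothesis that the noise channel attached to any $1$-qubit gate is the same $\mathcal{N}$, the noisy rotation $\mathcal{N}\circ R(\theta_i)$ admits the identical decomposition. Taking tensor products over all substituted rotations, one obtains $\tilde{\mathcal{P}}_j(U)=\sum_k w_k\,\tilde{\mathcal{P}}_j(C_k)$, where $C_k$ ranges over the Clifford circuits reachable by the substitutions and $\Pr_{W\sim D(U)}[W=C_k]=|w_k|/N(\boldsymbol{\theta})$. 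Linearity of $f$ and of every entry of $\boldsymbol{\phi}_\infty$ in the channel is essential here, as is the fact that none of the three listed feature maps disturbs the substituted rotations themselves: the insertion map inserts $V^{t_j}$ only between the blocks $U_1$ and $U_2$, while ZNE folding only duplicates CNOTs. Applying the triangle inequality produces
\[
\bigl|f(U)-\hat{\boldsymbol{\alpha}}^T\boldsymbol{\phi}_\infty(U)\bigr| \;\leq\; N(\boldsymbol{\theta})\,\mathbb{E}_{W\sim D(U)}\!\left[\bigl|f(W)-\hat{\boldsymbol{\alpha}}^T\boldsymbol{\phi}_\infty(W)\bigr|\right].
\]

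\textbf{Phase 2 — Rademacher bound and integer-radius union.} For each integer $m\geq 1$, consider the loss class $\mathcal{F}_m=\{W\mapsto |f(W)-\boldsymbol{\alpha}^T\boldsymbol{\phi}_\infty(W)|:\|\boldsymbol{\alpha}\|_2\leq m\}$. Each coordinate of $\boldsymbol{\phi}_\infty(W)$ is bounded in modulus by $\|O\|_2$ (the constant $1$ being absorbed since $\|O\|_2\geq 1$), so $\|\boldsymbol{\phi}_\infty(W)\|_2\leq \|O\|_2\sqrt{J+1}$. The standard Rademacher estimate for linear predictors with $\ell_2$-bounded weights, combined with Talagrand's contraction lemma applied to the $1$-Lipschitz absolute value, yields $\mathfrak{R}_S(\mathcal{F}_m)\leq m\|O\|_2\sqrt{(J+1)/S}$, with envelope of order $m\|O\|_2\sqrt{J+1}$. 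The usual Rademacher-plus-McDiarmid bound then gives, with probability at least $1-\delta_m$, a uniform-over-$\|\boldsymbol{\alpha}\|_2\leq m$ inequality controlling $\mathbb{E}_W[L]$ by $S^{-1}\sum_i L(W_i)+\tfrac{m\|O\|_2\sqrt{J+1}}{\sqrt{S}}\bigl(2+\sqrt{c\log(2/\delta_m)}\bigr)$, where $L$ abbreviates $|f(\cdot)-\hat{\boldsymbol{\alpha}}^T\boldsymbol{\phi}_\infty(\cdot)|$ and $c$ is a universal constant. To handle the data-dependence of $\hat{\boldsymbol{\alpha}}$, set $\delta_m:=\delta/(4m^2)$ so that $\sum_{m\geq 1}\delta_m\leq\delta$, and union-bound these events over $m\in\mathbb{N}$. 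On the event $\lceil\|\hat{\boldsymbol{\alpha}}\|_2\rceil=m$ the $m$-th instance applies, producing the stated $\lceil\|\hat{\boldsymbol{\alpha}}\|_2\rceil$ prefactor and the $\log(8\lceil\|\hat{\boldsymbol{\alpha}}\|_2\rceil^2/\delta)$ term inside the square root; matching $c$ to the envelope constant fixes the $72$. Multiplying by the Phase-1 factor $N(\boldsymbol{\theta})$ finishes the proof.

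\textbf{Main obstacle.} Conceptually, Phase~1 is the delicate step: one must genuinely check that, for each of the three feature maps named in the statement, $\tilde{\mathcal{P}}_j(\cdot)$ remains linear in the individual channels of the substituted rotations. This is where the gate-independent $1$-qubit noise hypothesis is unavoidable, and where the specific structure of the insertion and ZNE perturbations (which never overlap with a substituted rotation) is what makes the decomposition push through. Technically, the bookkeeping in the integer-radius union bound is what must be done most carefully, so that the envelope size, the symmetrization/contraction constants, and the $\log(8\lceil\|\hat{\boldsymbol{\alpha}}\|_2\rceil^2/\delta)$ combine into exactly the coefficients written in (\ref{error_modified_}).
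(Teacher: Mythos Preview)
Your proposal is correct and tracks the paper's proof closely. Both arguments split into a channel–decomposition step and a Rademacher uniform–convergence step with an integer–radius union bound; your Phase~2 is essentially the same computation as the paper's Lemma~4.

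There is one modest (and legitimate) difference in Phase~1. The paper does not apply the triangle inequality at that stage. Instead it keeps track of the signs of the quasi-probability weights, writing
\[
\bigl|f(U)-\hat{\boldsymbol{\alpha}}^{T}\boldsymbol{\phi}_\infty(U)\bigr|
= N(\boldsymbol{\theta})\,\Bigl|\mathbb{E}_{W\sim D(U)}\!\bigl[s_U(W)\bigl(f(W)-\hat{\boldsymbol{\alpha}}^{T}\boldsymbol{\phi}_\infty(W)\bigr)\bigr]\Bigr|,
\]
with $s_U(W)=\mathrm{sign}(w(\boldsymbol{\theta}))$, and only at the very end replaces the signed empirical sum by its absolute-value majorant. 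Because the paper controls an \emph{absolute value of an expectation}, it needs a two-sided Rademacher bound (one application for each sign), which is what produces the $8$ inside the logarithm. Your route, bounding $\mathbb{E}_W[|f(W)-\hat{\boldsymbol{\alpha}}^{T}\boldsymbol{\phi}_\infty(W)|]$ directly, is slightly simpler: only a one-sided bound is needed, and in principle you could shave the $8$ down to $4$. Either way the theorem as stated follows. Your remark that the perturbations in the three listed feature maps never touch the substituted rotations is exactly the structural point the paper uses in its Lemma~3 to justify pushing the decomposition through $\tilde{\mathcal P}_j(\cdot)$.
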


    \begin{proof}
        See Appendix \ref{proof_thm_1}.
    \end{proof}
    
    Analyzing the scaling of the bound, we observe that as the dimension of the feature map, $J+1$, increases, the error seems to increase. However, this picture is not completely accurate, since the error associated with the training set decreases as $J$ increases. This suggests that there should be an optimal point for the size of the feature map, which is consistent with typical results of generalization in machine learning.
    
    As a direct consequence of this theorem, if a small training error is achieved, the error $|f(U) - \boldsymbol{\hat{\alpha}}^T \boldsymbol{\phi}_{\infty}(U)|$ is primarily driven by the second term on the right-hand side of inequality \eqref{error_modified_}. That is, 
    \begin{equation}
        |f(U)-\boldsymbol{\hat{\alpha}}^T \boldsymbol{\phi}_{\infty}(U)| \approx O\left( \| \boldsymbol{\hat{\alpha}} \|_2 \left \| O \right \|_2  N(\boldsymbol{\theta}) \sqrt{\frac{ J+1}{S}} \right)
    \end{equation}
    Therefore, combining this result with \eqref{inequality_error} implies that
    \begin{align}
        \mathbb{E} \left[|\hat{f}(U)-f(U)|\right]\approx O\left(  \| \boldsymbol{\hat{\alpha}} \|_2 \left \| O \right \|_2   \left(N(\boldsymbol{\theta})\sqrt{\frac{ J+1}{S}}+\frac{1}{\sqrt{N}}\right) \right)
    \end{align}
    which shows that the expected error decays as $1/\sqrt{S}$ and $1/\sqrt{N}$. Consequently, the bias, which is upper bounded by the expected error, decreases at least at the same rate. Consequently, when $S$ and $N$ are large, the bias becomes negligible, meaning that $\mathbb{E}[\hat{f}(U)] \approx f(U)$.

    Finally, note that this theorem relies on a generalization bound that sometimes is not tight, so the true scaling could be better than the one obtained. For this reason, in Appendix \ref{numerical_experiments_testing}, we conduct numerical experiments to examine whether the error really scales as predicted by Theorem \ref{theorem_generalization}. Based on the numerical experiments, we conclude that the error does, in fact, scale as predicted by the bound.

    \subsection{Information Theoretical Lower bound }

    In the previous section, we derived an upper bound on the error, which depends on various parameters of the method, such as $S$, $J$, and $\| \boldsymbol{\hat{\alpha}} \|_2$. However, in that expression, the noise level does not appear explicitly. Intuitively, we would expect that the required computational resources to correct the noise increase as the noise level increases. To explicitly demonstrate this relationship, we adopt an information-theoretical approach similar to \cite{huang2021information}. This approach yields a lower bound on the variables $J$, $N$, and $S$.
    \begin{theorem}\label{lower_bound}
        For any set of unitaries $\{U_m\}_{m=1}^M$, such that
        \begin{equation}
            \left|f(U_i)-f(U_j) \right|\geq 2\epsilon 
        \end{equation}
        for all $i\neq j$, and for any estimator, $\hat{f}(U)$, that depends on $U$ through $\boldsymbol{\phi}(U)$ and $ \mathcal{T}_S(U)$, where
        \begin{equation}
            \boldsymbol{\phi}(U)=[ 1 ,\,  \phi(\tilde{\mathcal{P}}_1 (U)), \phi(\tilde{\mathcal{P}}_2 (U)), \cdots, \phi(\tilde{\mathcal{P}}_J (U))]^T
        \end{equation}
       and $\mathcal{T}_S(U)=\{\boldsymbol{\phi}(W_i), f(W_i)\}_{i=1}^S$,

        \begin{enumerate}[(i)]
            \item if we assume that the noise of each circuit $\tilde{\mathcal{P}_j}(U_m)$ can be modeled as a depolarizing channel with parameter $p$, i.e., $\tilde{\mathcal{P}_j}(U_m) (\rho )= (1-p){\mathcal{P}_j}(U_m)( \rho) + p \frac{I}{2^n}$, and that the same holds for the circuits in the training set, i.e., $\tilde{\mathcal{P}}_j(W) (\rho )= (1-p){\mathcal{P}}_j(W)( \rho) + p \frac{I}{2^n}$, then :

            The variables $J,S$, and $N$ satisfy 
            
                \begin{equation}\label{lower_bound_1_th}
                    \frac{(1-P_{\epsilon}-\beta(S)) \log M -H(P_{\epsilon})}{n\,\left(1- p\right)} \leq N J (S+1)
                \end{equation}

            where $H(\cdot)$ is the binary entropy function, $P_{\epsilon}$ denotes the average probability that the estimation error exceeds $\epsilon$, i.e., $\frac{1}{M} \sum_{i=1}^M \mathbb{P} (|f(U_i)-\hat{f}(U_i)|> \epsilon )$, and $\beta(S)<1$ depends on the methodology used to generate the training set (see Appendix \ref{last_appendix}).

            \item Similarly, if all circuits $\mathcal{P}_j(U_m)$ have depth $d$, and we assume that each noisy layer of gates $\tilde{\mathcal{G}}$ can be modeled as $D_p^{\otimes n}\circ \mathcal{G}$, where $D_p(\rho)$ denotes a 1-qubit depolarizing channel with parameter $p$, then
            \begin{equation}\label{lower_bound_2_th}
                \frac{(1-P_{\epsilon}-\beta(S)) \log M -H(P_{\epsilon})}{n\,\left(1- p\right)^{2d}} \leq N J (S+1)
            \end{equation}

        \end{enumerate}
    \end{theorem}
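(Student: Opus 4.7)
The plan is to cast the estimation problem as an $M$-ary hypothesis test, apply Fano's inequality to relate the testing error to mutual information, and then bound that mutual information by the accessible information per measurement using Holevo's theorem together with the contraction of distinguishability under depolarizing noise. The quantity $NJ(S+1)$ naturally appears as the total number of measurement outcomes available to the estimator: $NJ$ from evaluating $\boldsymbol{\phi}(U)$, plus $NJS$ from evaluating $\boldsymbol{\phi}(W_i)$ across the training set.

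Concretely, I would first fix the $M$ hypotheses $\{U_m\}$ and consider the uniform prior $M\sim \mathrm{Unif}\{1,\ldots,M\}$. The separation $|f(U_i)-f(U_j)|\geq 2\epsilon$ guarantees that any $\epsilon$-accurate estimator induces a hypothesis test whose average error is at most $P_\epsilon$, once we condition on the event that the training set does not, by chance, contain information that trivializes the test. The second condition is exactly what $\beta(S)$ encodes: with probability at least $1-\beta(S)$ over the draw of $\mathcal{T}_S(U)$, the labels $f(W_i)$ are (essentially) independent of the index $m$, so they can be treated as side information not affecting the reduction. Fano's inequality then yields
\begin{equation}
    I(M;Z) \;\geq\; (1-P_\epsilon - \beta(S))\log M - H(P_\epsilon),
\end{equation}
where $Z$ denotes the classical measurement record of length $NJ(S+1)$.

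Next I would upper bound $I(M;Z)$ by an accessible/Holevo information inequality. For each individual measurement, $I(M;\text{outcome}) \leq \chi(\{\rho_m\})$, where $\rho_m$ is the state being measured. For part (i), the depolarizing model makes $\rho_m = (1-p)\sigma_m + p\, I/2^n$; a standard joint-concavity argument on the Holevo quantity shows $\chi \leq (1-p)\,\chi(\{\sigma_m\}) \leq (1-p)\, n$. Chaining over the $NJ(S+1)$ outcomes via the chain rule for mutual information then gives $I(M;Z)\leq n(1-p)\,NJ(S+1)$, and comparing with the Fano bound produces \eqref{lower_bound_1_th}. For part (ii), the analogous Holevo bound has to account for $d$ layers of single-qubit depolarizing noise acting on an $n$-qubit state; using the contraction of the $2$-norm (or the chi-square divergence) of $\rho_m - \bar\rho$ under $D_p^{\otimes n}$, each layer contracts by a factor $(1-p)^2$ in the relevant quadratic quantity, so $d$ layers contribute $(1-p)^{2d}$, yielding \eqref{lower_bound_2_th}.

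The main obstacle is handling the training set cleanly. On one hand, the random samples $W_i$ are drawn from a distribution that only depends on the structure of $U$ (common to all hypotheses), so the \emph{labels} $f(W_i)$ should not identify $m$; on the other hand, this is only true up to the probability $\beta(S)$ that the randomly generated Clifford circuits happen to hit something informative, and the measurement outcomes $\phi(\tilde{\mathcal{P}}_j(W_i))$ still pass through the same noisy channels on the device. Getting the accounting right — so that these training-set observations contribute $NJS$ extra terms to the Holevo sum, while the labels $f(W_i)$ only degrade the Fano inequality by the additive $\beta(S)$ — is the delicate step. The quadratic improvement from $(1-p)$ to $(1-p)^{2d}$ in part (ii) is also a point that requires care: it is cleanest to argue via the chi-square divergence between $\rho_m$ and the ensemble average (whose contraction factor is $(1-p)^2$ per layer of single-qubit depolarizing noise on all qubits) and then convert back to a Holevo-type bound using $\chi \leq \chi^2$-divergence style inequalities.
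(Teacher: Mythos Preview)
Your high-level plan---reduce to an $M$-ary test, apply Fano, and upper bound the mutual information via Holevo after exploiting depolarizing contraction---matches the paper exactly. The two places where your proposal diverges from the paper are the handling of $\beta(S)$ and the contraction argument for part~(ii).

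\textbf{The role of $\beta(S)$.} You interpret $\beta(S)$ as the probability of a ``bad'' draw of the training set and propose to condition on its complement to salvage Fano. This is not what the paper does, and your conditioning step does not obviously yield the inequality $I(M;Z)\geq (1-P_\epsilon-\beta(S))\log M-H(P_\epsilon)$ you write down. In the paper, $\beta(S)$ lives on the \emph{upper-bound} side: after data processing, the mutual information is split as
\[
I(G;\hat G)\;\le\; I(G;\mathcal T_S(U_G))\;+\;N\cdot I(G;\boldsymbol X(U_G)\mid \mathcal T_S(U_G)),
\]
and the training-set term is further decomposed into $I(G;\boldsymbol f)+I(G;\Phi\mid Z^S)$, where $Z^S$ are the random seeds used to produce the $W_i$ from $U_G$. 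The quantity $\beta(S)$ is then \emph{defined} as $I(G;\boldsymbol f)/\log M$, so it is a normalized mutual information, not a probability. The remaining piece $I(G;\Phi\mid Z^S)$ is bounded by $SNJ\cdot n(1-p)$ via the same Holevo argument, once one conditions on $Z^S$ so that each $W_i$ is a deterministic function of $G$. Moving $\beta(S)\log M$ to the Fano side is then pure algebra. Your sketch conflates the information leaked by the \emph{labels} $f(W_i)$ (which is $\beta(S)\log M$) with the information leaked by the \emph{noisy feature maps} $\boldsymbol\phi(W_i)$ (which contributes the $NJS$ terms); getting this decomposition right is precisely the ``delicate step'' you flag, and the resolution is the chain-rule split above rather than a conditioning-on-a-good-event argument.

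\textbf{Part (ii).} You propose to use $\chi^2$-contraction under $D_p^{\otimes n}$ and then convert to a Holevo-type bound. The paper instead applies the strong data-processing inequality for the \emph{relative entropy}, $D\!\left(D_p^{\otimes n}(\sigma)\,\|\,I/2^n\right)\le (1-p)^2\,D(\sigma\,\|\,I/2^n)$, which rewrites directly as an entropy lower bound $H(D_p^{\otimes n}(\sigma))\ge n(1-(1-p)^2)+(1-p)^2 H(\sigma)$. Iterating over the $d$ noisy layers gives $H(\tilde{\mathcal P}_j(U_m)(\rho_0))\ge n(1-(1-p)^{2d})$, which plugs straight into the same Holevo computation used for part~(i). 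Your $\chi^2$ route could be made to work, but would need an extra $\chi\le$ (quadratic divergence) step; the relative-entropy SDPI is cleaner here because the Holevo quantity is already an entropy difference.
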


    \begin{proof}  
        See Appendix \ref{last_appendix}.
    \end{proof}

    From the bounds, it follows that at least one of $J$, $S$, or $N$ must grow at least as $O((1-p)^{-\frac{1}{3}})$ or $O((1-p)^{-\frac{2d}{3}})$, depending on the specific noise model considered. Therefore, as expected, as the noise level increases, the parameters of the model should also grow. This increase manifests as an exponential growth in the depth of the circuits. This exponential requirement is not surprising, given that other works such as \cite{takagi2023universal,quek2022exponentially,takagi2022fundamental} establish bounds consistent with Theorem \ref{lower_bound}. In particular, these works demonstrate that an exponential growth in the resources is necessary for all quantum error mitigation techniques. However, since they provide very general bounds, it remains unclear what role the different parameters of our method, namely, $N$, $S$, and $J$, or the method used to generate the training set, play. In contrast, our result shows a trade-off among the variables $N$, $S$, and $J$. Finally, note that for the insertion method, the ZNE-based method, and the insertion-ZNE method, their complexity, as shown in Table \eqref{table_complexity}, is higher than the product $SNJ$. Therefore, the lower bounds \eqref{lower_bound_1_th} and \eqref{lower_bound_2_th} directly provide lower bounds on the complexity.

    Interestingly, this theorem does not depend on the particular procedure used to process the feature map and the training set. That is, it generalizes to other models, such as random forests or multi-layer perceptron, which seem to be very competitive models \cite{liao2023machine}. Additionally, the bound is not very dependent on the approach used to generate the training set. In particular, different algorithms used to generate the training set produce distinct values of the function $\beta(S)$. The two extreme cases occur when $f(W_i)$ is independent of $U$, resulting in $\beta(S)=0$, and when $W_i$ is equal to $U$, resulting in $\beta(S)=1$, which makes the inequality trivial. Generally, we should expect the value of $\beta(S)$ to be quite small, since we are significantly modifying the circuit by substituting a large number of rotations on circuit $U$, making the values of $f(W_i)$ almost independent of $U$.

    \section{Numerical Experiments}
    \label{sec:numerical_experiments}

        \begin{figure}[!htb]
\begin{subfigure}{0.45\textwidth} 
  \includegraphics[width=\linewidth]{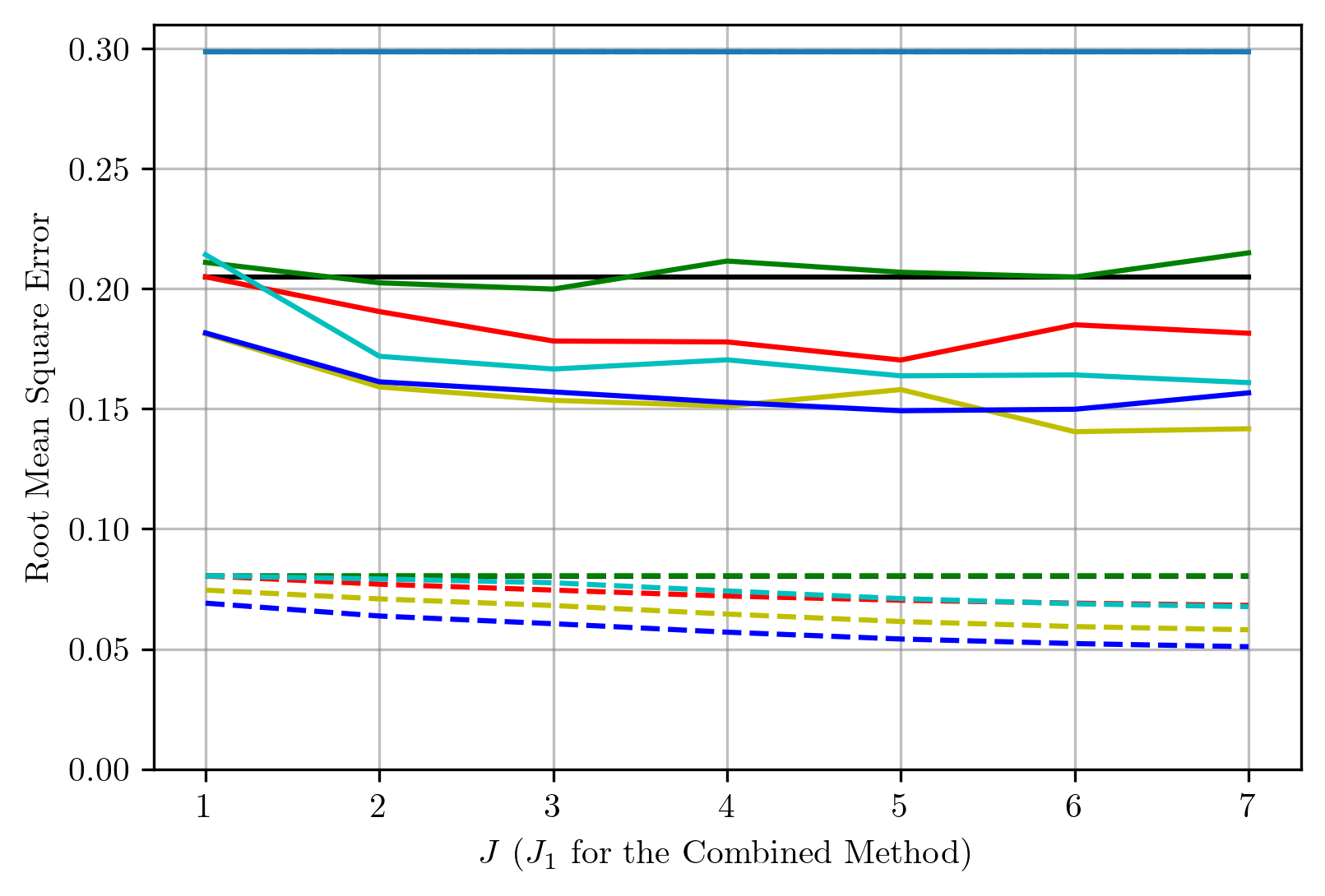}
 \caption{}
    \label{figureA}
\end{subfigure}\hfill
\begin{subfigure}{0.45\textwidth}
  \includegraphics[width=\linewidth]{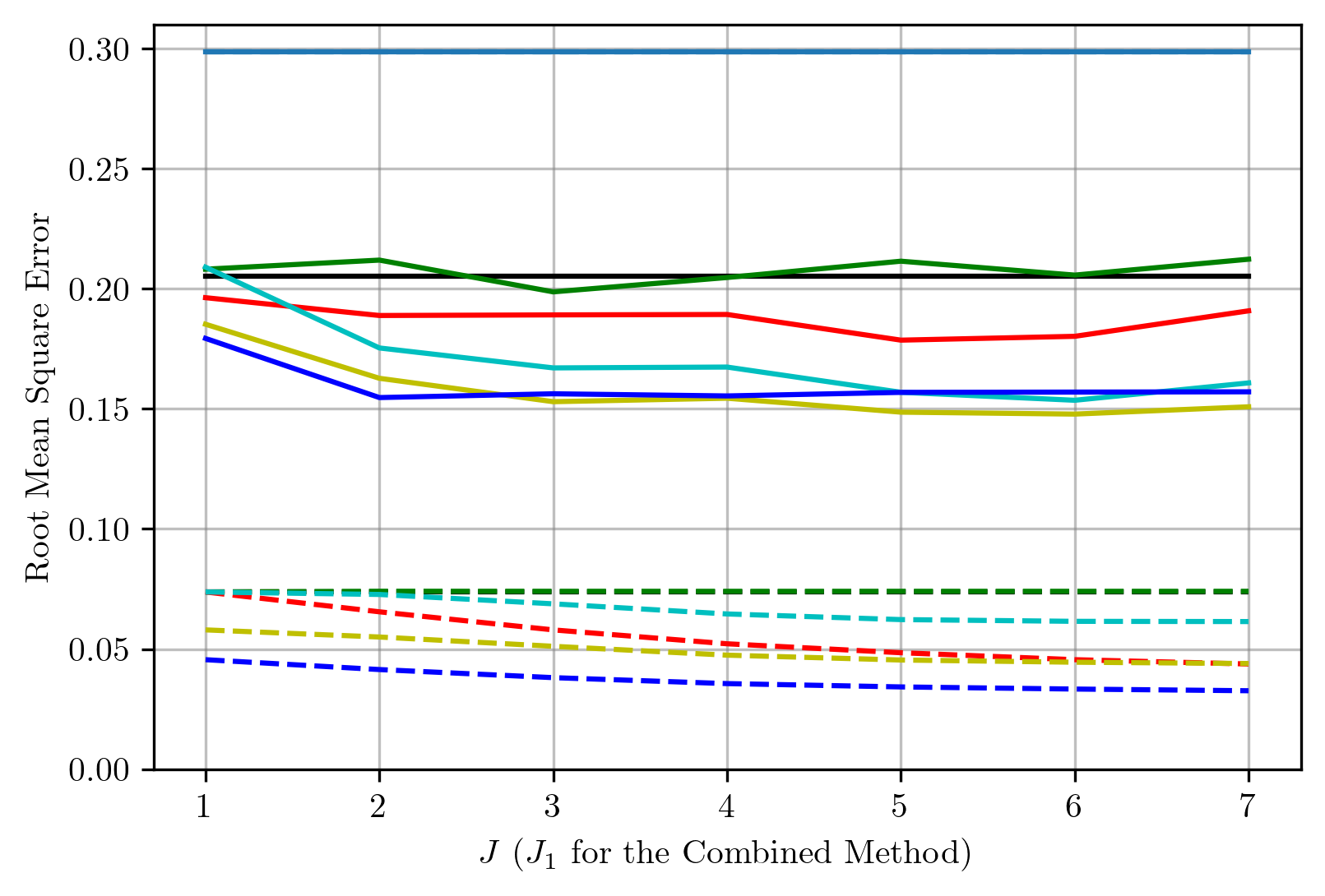}
  \caption{}
    \label{figureB}
\end{subfigure}

\begin{subfigure}{\textwidth} 
    \centering
  \includegraphics[width=0.45\linewidth]{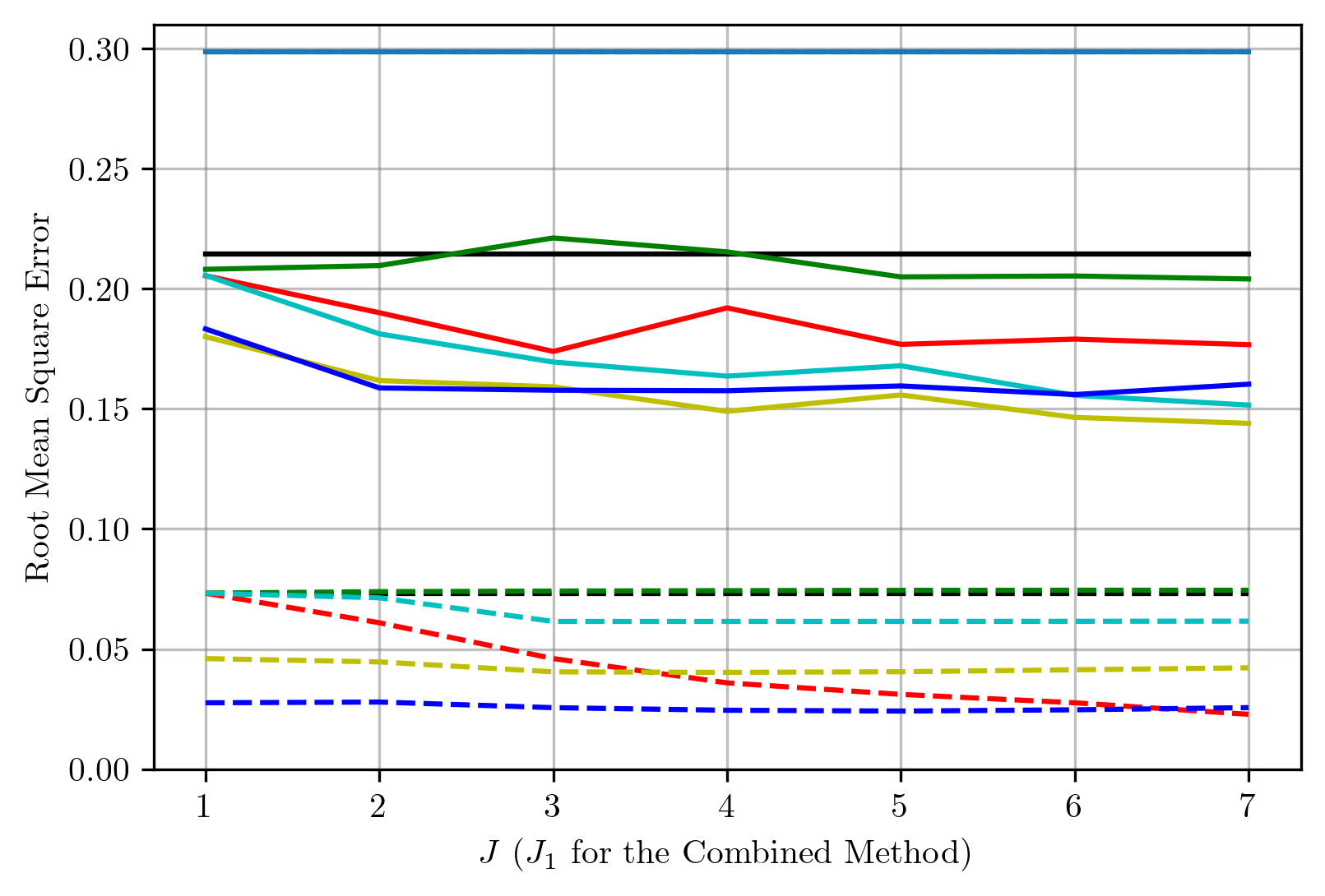}
  \caption{}
  \label{figureC} 
\end{subfigure}

\begin{subfigure}{\textwidth}
  \centering

  \includegraphics[width=0.85\linewidth]{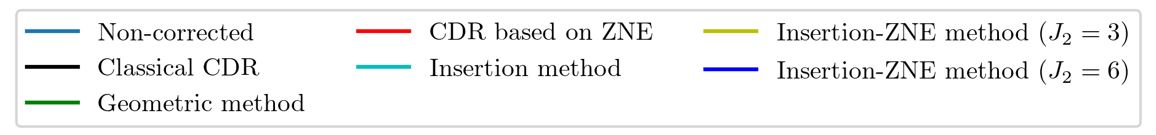}
 \vspace{0.05cm}
\end{subfigure}

\caption{Root mean square error of $\hat{f}(U)$ for different values of $J$ and various methods. Figures (a), (b), and (c) differ only in the regularization parameter used, with $\mu$ set to $10^{-2}$, $10^{-3}$, and $10^{-6}$, respectively. The solid lines represent the case where $N = 1000$ samples are used, and the dashed lines represent the ideal case where expected values are used for the computation of the feature maps.}\label{Figure_3_num}
\end{figure}

    The performance of the proposed methods is assessed by computing the root mean square error of $\hat{f}(U)$ for a set of testing circuits $\mathcal{T}_T=\{U_i\}_{i=1}^T$, i.e.,
    \begin{equation}
         \sqrt{\frac{1}{T}\sum_{i=1}^T (\hat{f} (U_i)-f(U_i))^2}
    \end{equation}
    where $\hat{f}(U)$ is given by \eqref{estimator_}, with the feature map varying depending on the method. 

    All testing circuits $\{U_i\}_{i=1}^T$ share a similar structure with $n=3$ qubits and $\ell=30$ gates; they are constructed as follows. First, a total of $\ell=30$ gates are selected among the set $\{R_X(\theta), R_Y(\theta),R_Z(\theta),\text{CNOT}\}$, where $R(\theta)$ denote 1-qubit rotations with rotation angle $\theta$. The four gates are selected at random with equal probability, and $\theta$ is also a r.v. with uniform distribution in the range of $[0,2\pi)$. Then, the qubits where these gates are applied are selected at random with equal probability from the set $\{1,\ldots,n\}$ for the 1-qubit rotations and from the set $\{(i,j): i,j \in \{1,\ldots,n\}\text{ and }i\neq j\}$ for the CNOT gates. 

    For each testing circuit, the coefficient vector $\hat{\boldsymbol{\alpha}}$ is computed with \eqref{learning_alpha} for three different values of the regularization parameter $\mu$, i.e.,  $\mu\in \{10^{-2},10^{-3},10^{-6}\}$, and using a training set of $S=120$ pairs of $\{W_i,f(W_i)\}_{i=1}^S$. Each pair is constructed as follows. Given the testing circuit $U\in \mathcal{T}_T$, a total of seven rotations are selected at random to remain unchanged. The rest of 1-qubit rotations are substituted with a gate selected from the set $\{I, X, Y, Z, \sqrt{X}, \sqrt{Y}, \sqrt{Z}\}$, where the probability of each gate is proportional to the absolute value of the corresponding coefficient given in the decompositions \eqref{decomposition_X}, \eqref{decomposition_Y}, and \eqref{decomposition_Z}. After this substitution, we efficiently compute the ideal expected value $f(W_i)$ classically, since the $W_i$ circuits are nearly Clifford circuits.

    \begin{figure*}[htb]

  \centering
  \centerline{\includegraphics[width=15cm]{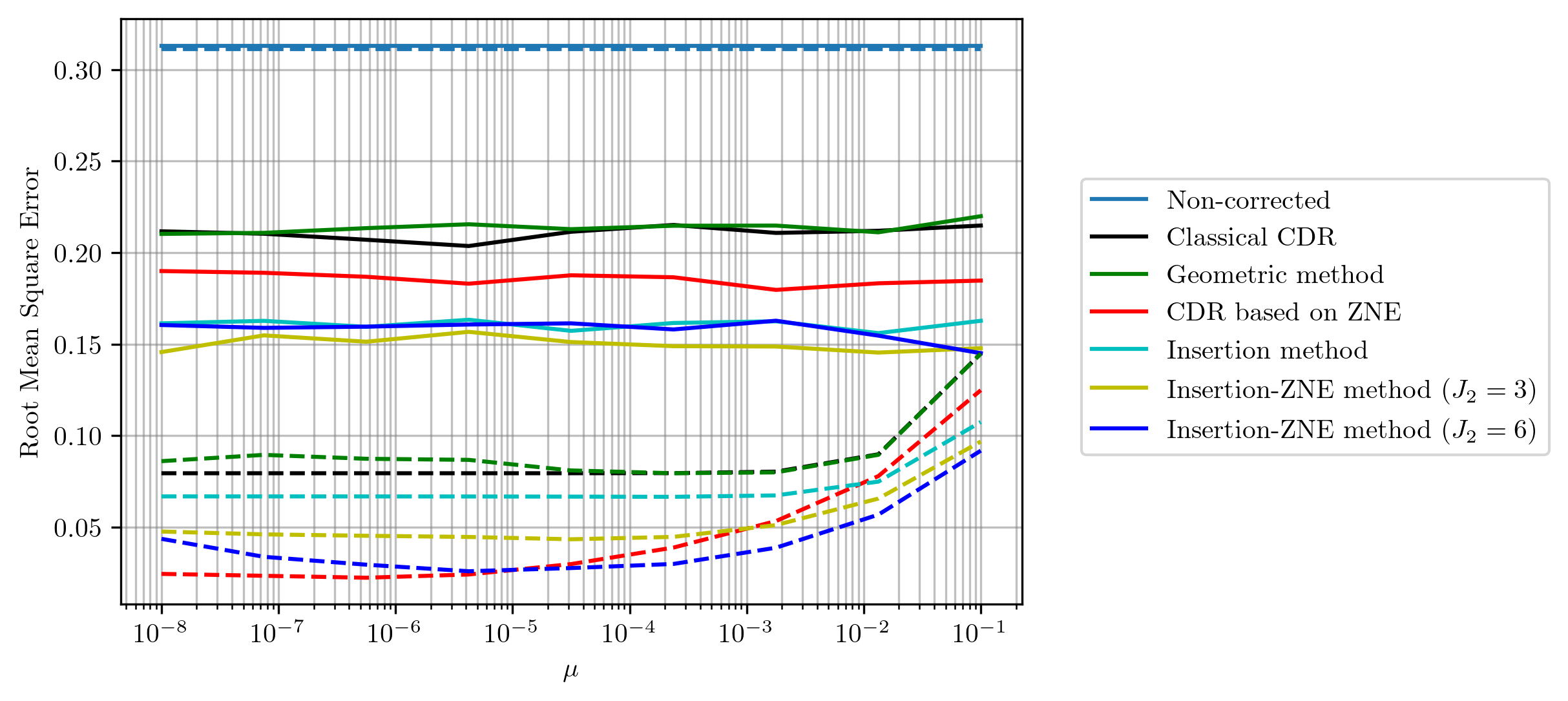}}

\caption{Comparison of the root mean square error of different CDR methods as the regularization parameter changes. The solid lines represent the case where $N = 1000$ samples are used, and the dashed lines represent the ideal case where expected values are used for the computation of the feature maps.}
\label{Figure_reg}
\end{figure*}

    As previously mentioned, we assume that the noise is present in the CNOT gates only. The noisy implementation of these gates is modeled as $\mathcal{N}\circ\,\mathcal{U}_{\text{CNOT}_{i,j}}$, where $\mathcal{N}$ is the noise channel, and $\mathcal{U}_{CNOT_{i,j}}$ denotes a noise-free CNOT gate applied to qubits $i$ and $j$. The noise channel $\mathcal{N}$ has the form $\mathcal{N}=\mathcal{N}_p^{(j)}\circ\mathcal{N}_p^{(i)}$, where $\mathcal{N}_p^{(i)}$ denotes a 1-qubit depolarizing channel with parameter $p$ applied to the $i^{th}$ qubit. This parameter is set to $0.1$ throughout the numerical experiments.

    For the implementation of the insertion method and the insertion-ZNE method, we use the unitary $V=R_X(\pi/8)\otimes I \otimes I$, and set the number of replicas of $V$ to $t_i=i-1$. In both the insertion-ZNE method and the ZNE-based method, we utilize a noise-scaling approach known as gate folding. The implementation of this approach is detailed in Appendix \ref{appendix_implementation}.

     Figures \ref{figureA}, \ref{figureB}, and \ref{figureC} show the root mean square error of $\hat{f}(U)$ as a function of the dimension of the feature vector $J+1$, averaged among $1000$ testing circuits. The solid lines represent the case where $N=1000$ samples are used to compute the empirical means, while the dashed lines represent the ideal case where the matrix $\Phi$ and the feature vector $\boldsymbol{\phi}(U)$ are constructed using expected values, i.e., $\Tr \left(O \tilde{\mathcal{P}}_j (U)\left(\ket{0}\bra{0}^{\otimes n}\right) \right)$, which are computed classically. Generally, these values cannot be computed efficiently; however, their computation becomes feasible due to the small size of the circuits used in the simulations. Results are reported for classical CDR \cite{czarnik2021error}, CDR based on ZNE \cite{lowe2021unified}, the geometric method, the insertion method, and for the insertion-ZNE method for $J_2=\{3,6\}$. Additionally, several values of the regularization parameter $\mu$ are considered. As observed, the insertion-ZNE method performs the best in most scenarios, and, as expected, the error decreases as the dimension of the feature vectors increases.

    \begin{figure}[ht]
        
          \centering
          \centerline{\includegraphics[width=11cm]{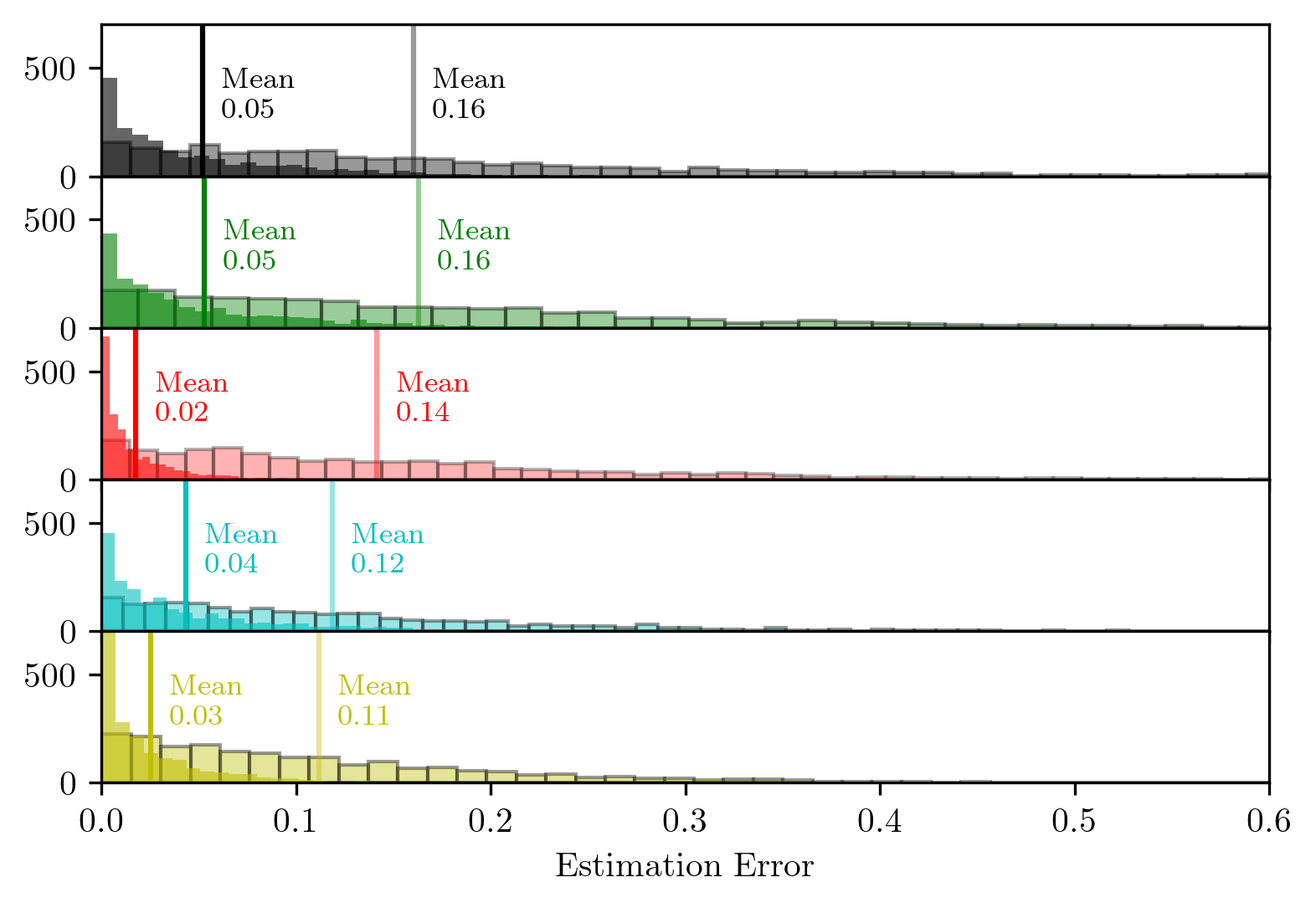}}

          \begin{subfigure}{\textwidth}
          \centering
        
          \includegraphics[width=0.85\linewidth]{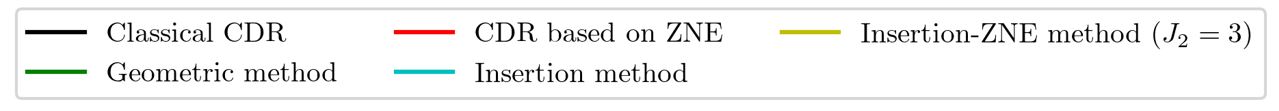}
         \vspace{0.05cm}
        \end{subfigure}

        \caption{Histogram of the estimation error for the different CDR methods.}
        \label{Figure_histograms}
        \end{figure}

    Although similar trends are observed whether ideal expected values or a finite number of samples are used, in the former case, the regularization constant significantly influences the performance of the different methods employed. To examine this dependency, we conducted simulations for different values of the regularization constant, as shown in Figure \ref{Figure_reg}. In the numerical experiments illustrated in Figure \ref{Figure_reg}, we set $J=7$ for the geometric feature map, the insertion feature map, and the ZNE-based feature map, and $J_1=7$ for the insertion-ZNE feature map. The rest of parameters coincide with those used previously. The root mean square error is computed using $2\cdot 10^3$ random testing circuits. For the case with the expected values (dashed-lines), we observe that most methods exhibit a reduced dependency on the regularization parameters for values $\mu<10^{-3}$, with a similar optimal value of the regularization parameter $\mu$ around $[10^{-6},10^{-4}]$. The methods most impacted by the regularization parameter are the ZNE-based method and the insertion-ZNE method with $J_2=6$. Interestingly, these methods achieve the best performance among those considered when using ideal expected values to compute the feature maps. However, while both methods excel in the absence of sampling noise, their performance declines more sharply than that of, for instance, the insertion method once finite sampling (solid lines) is considered. When a finite number of samples is used, as shown also in Figure \ref{Figure_3_num}, the performance is not significantly influenced by the regularization parameter, beyond enhancing the numerical stability of the method. In this scenario, the insertion-ZNE method with $J_2=3$ performs best, even surpassing its performance with $J_2=6$. 
        \begin{figure}[H]
        \centering
        \centerline{\includegraphics[width=15cm]{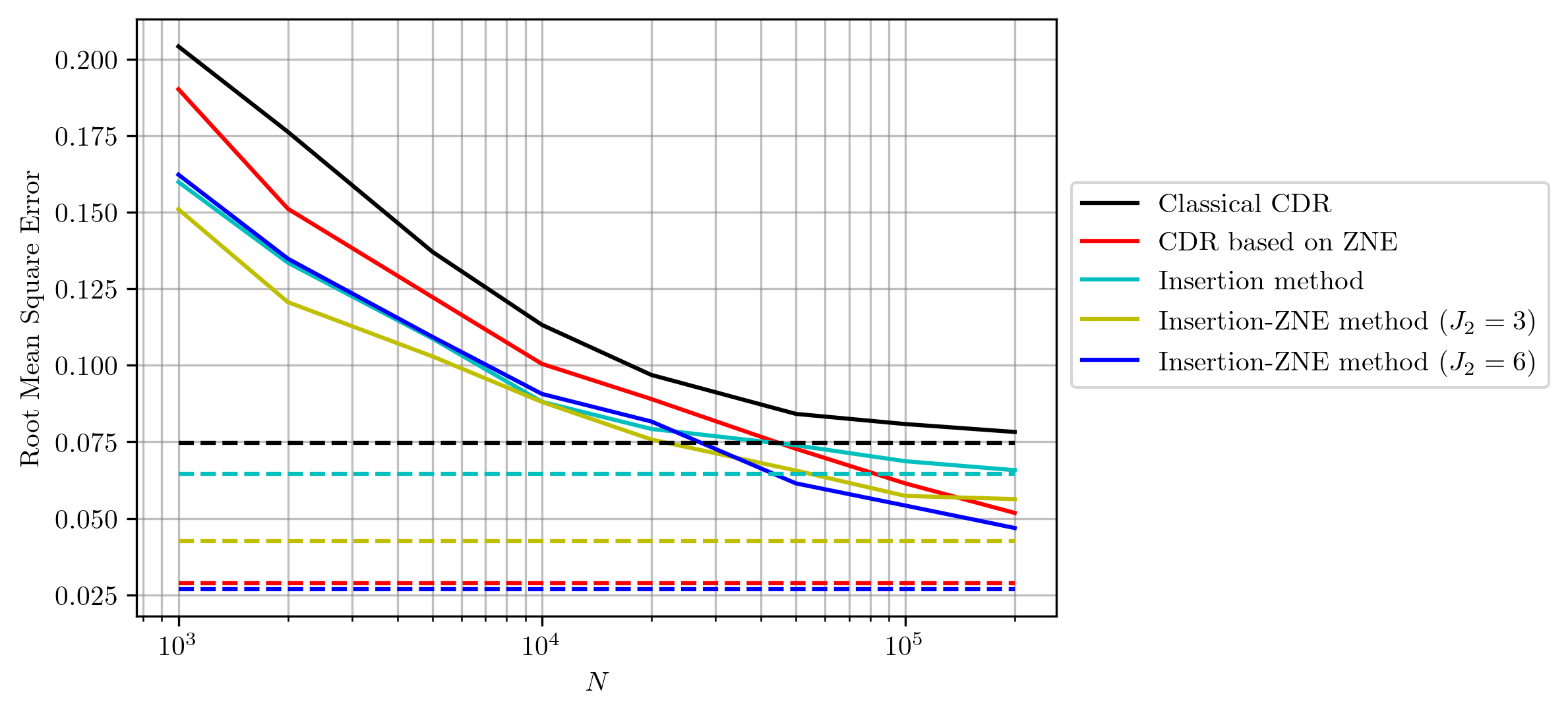}}
        \caption{Comparison of the performance of different CDR methods as the number of samples $N$ changes. The dashed lines represent the ideal case where expected values are used for the computation of the feature maps.}
        \label{Figure_Chan}
    \end{figure}

    All previous simulations use the root mean square error to compare the performance of the different methods. In Figure \ref{Figure_histograms}, for the sake of completeness, we show the histogram of the estimation error for 2000 testing circuits with $\mu=3 \cdot 10^{-5}$. Specifically, in this figure, two different histograms for each method are plotted: the darker one represents the case where ideal expected values are used in the computation of the feature map, and the lighter one represents the case with $N=1000$. We observe that the mean accurately captures the performance of the different methods, and similar trends to those shown in Figure \ref{Figure_3_num} are observed.

    Next, Figure \ref{Figure_Chan} shows the evolution of the RMSE as the number of samples, $N$, increases. Notably, while the ZNE-based approach performs similarly to the insertion-ZNE method with $J_2=6$ when expected values are used (dashed lines), it only starts to surpass the performance of the insertion-ZNE method with $J_2=3$ when $N$ exceeds $10^5$. This observation underscores that both the insertion method and the insertion-ZNE method are more effective at managing the noise when a finite number of samples is used. It is also worth noting that, until convergence, the RMSE decreases at roughly the same rate for all methods.

    \subsection{Numerical experiments on the QFT circuit}
    In these numerical experiments, we test the different methods using non-random circuits, specifically circuits of the form $U = F_n H^{\otimes\, n}$ for different values of $n\in \mathbb{N}$. Here, $F_n$ denotes the unitary matrix associated with the quantum Fourier transform (QFT) on $n$ qubits, and $H$ denotes the Hadamard gate. The implementation of unitary $F_n$ follows the structure shown in \cite{nielsen2010quantum}. However, we substitute the controlled phase shift gates by the circuit shown in Figure \ref{fig:QFT_figure}, which is equivalent up to a global phase. After this substitution, a circuit consisting only of rotations and Clifford gates is obtained.

    \begin{figure}[H]    
    	\centering
    	\includegraphics[width=11cm]{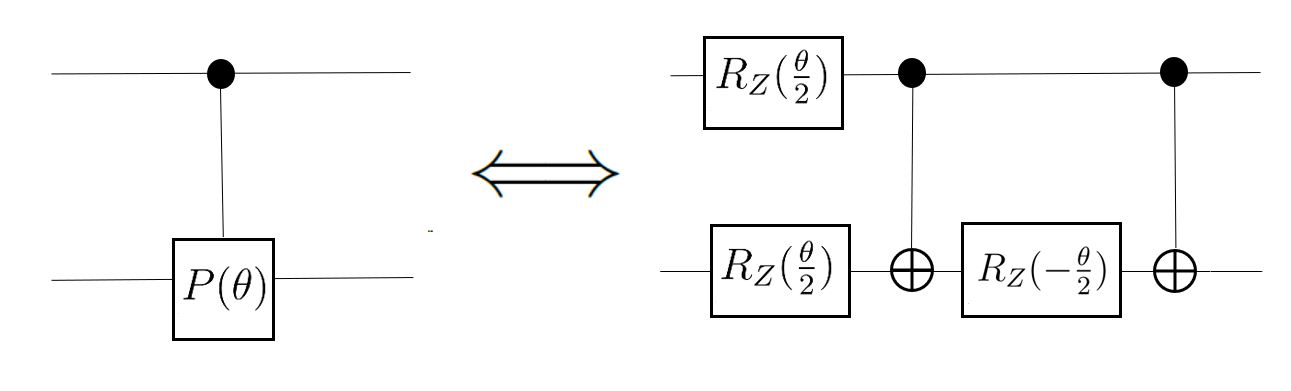}
    	\caption{Substitution of the control phase shift gate.}
    	\label{fig:QFT_figure}
    \end{figure}

    In Figure \ref{fig:QFT_figure_2}, a comparison of the different methods is presented. In particular, this figure shows the average value of $\hat{f}(U)$ over $10$ realizations for different CDR methods and various values of $n$ and $p$. A value of $\mu = 10^{-5}$ is used for all methods to obtain $\hat{\boldsymbol{\alpha}}$, with a training set size of $S=500$. The procedure for generating the training set is the same as the one explained previously. The remaining parameters coincide with those used in the previous experiments, namely, $\theta = \pi/8$, and $J=J_1=7$.
    
     \begin{figure}[H]
        \begin{subfigure}[b]{0.5\textwidth}
          \centering
          \centerline{\includegraphics[width=6.8cm]{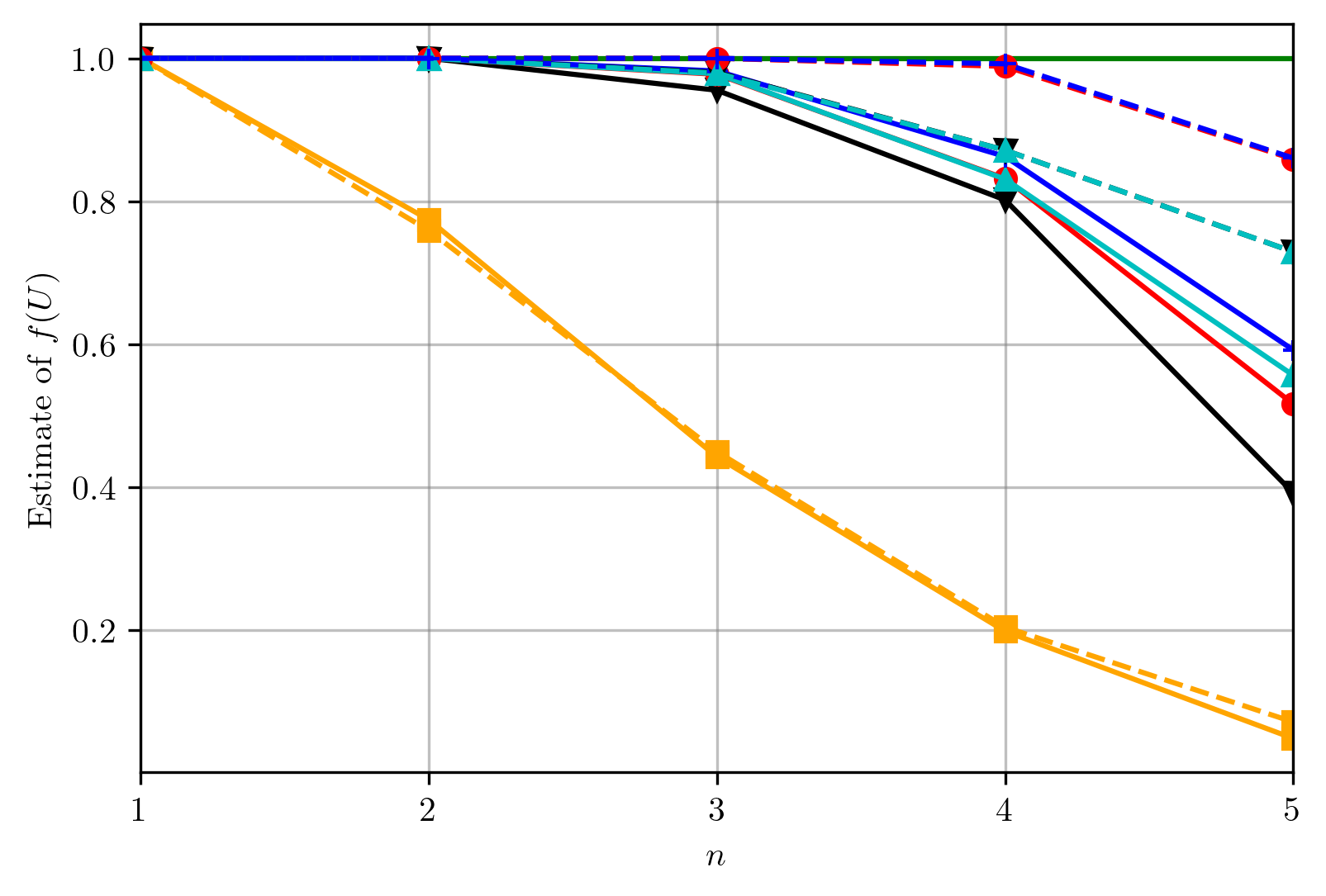}}
          \caption{}
          \medskip
        \end{subfigure}
        \begin{subfigure}[b]{0.5\textwidth}
          \centering
          \centerline{\includegraphics[width=6.8cm]{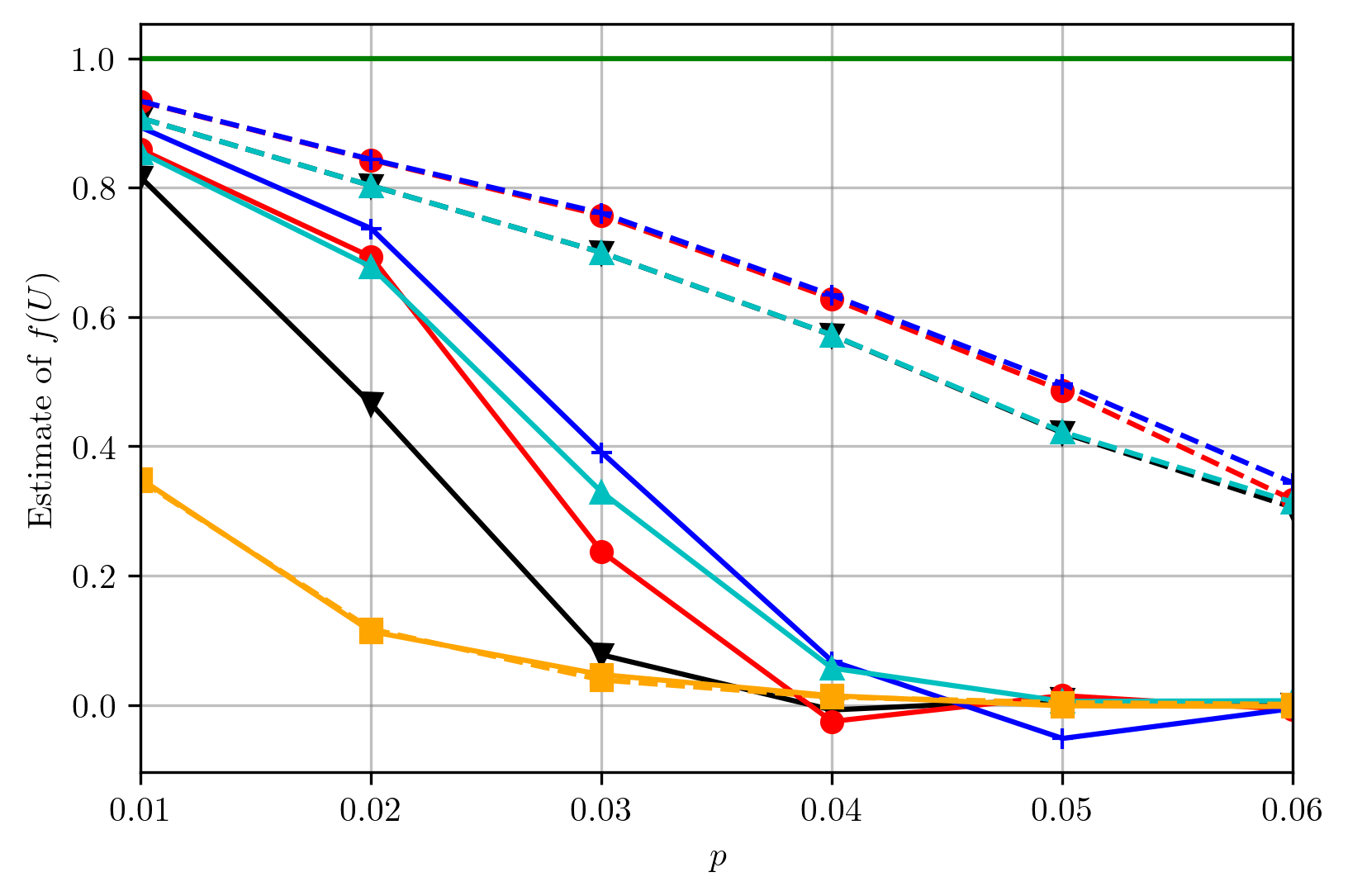}}
          \caption{}
           \label{regularization_change_inf}
           \medskip
        \end{subfigure}
        \begin{subfigure}[b]{\textwidth}
  \centering
  \centerline{\includegraphics[width=9.5cm]{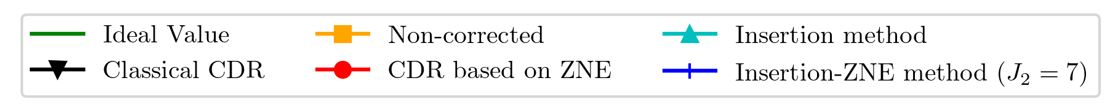}}
   \medskip
\end{subfigure}

        \caption{Comparison of the estimation performed by the different methods studied for the QFT circuit applied after a Hadamard layer of gates. The solid lines represent the case $N=1000$, and the dashed lines the ideal case where expected values are used. In Figure (a), the parameter $p$ that models the noise is set to 0.05; in (b), $n$ is fixed to $7$.}
    	\label{fig:QFT_figure_2}
    \end{figure}
In both figures, the insertion-ZNE method with $J_2 = 7$ performs the best in the scenario of $N = 1000$ samples, followed by the insertion method. Interestingly, when ideal expected values are used, both the ZNE-based approach and the insertion-ZNE method with $J_2 = 7$ achieve similar results, both outperforming the other methods. For Figure (a), which uses $p = 0.05$, we observe that the methods can mitigate the error quite reliably up to $n = 4$ qubits. However, for $n = 5$, the performance degrades significantly, even in the case that uses the ideal expected values. In Figure (b), we observe that for $n = 7$, the performance of all methods degrades slowly with increasing noise levels when ideal expected values are used. In contrast, for $N = 1000$, the performance drops significantly faster.

    \section{Conclusions}
    \label{sec:conclusions}

    In summary, this work introduces two feature maps, extending upon the ones employed in prior research. We provide theoretical justification for these augmented feature vectors and evaluate their performance through numerical experiments. Furthermore, we conduct a theoretical analysis of various characteristics of the resulting learning procedures, determining the scaling of both the complexity and the estimation error. Additionally, we also explore, from an information-theoretical perspective, lower bounds for the different parameters as a function of the noise level in two noise models.

    The first feature vector proposed, i.e., the geometric feature map, does not provide a noticeable performance enhancement. This can be primarily attributed to the inherent noise affecting the values $\hat{\phi}(\tilde{\mathcal{U}}^j)$. In contrast, the insertion method, specifically designed to address and mitigate this challenge, exhibits a reduction in root mean square error for the noise model used. Furthermore, as shown in the simulations, the insertion feature map can be enhanced by using different levels of noise. Notably, both the insertion and insertion-ZNE feature maps exhibit considerable resilience to the effects of using finite samples to estimate the different components of the feature map, as shown in the numerical experiments. This contrasts with the ZNE-based approach, whose performance is significantly impacted by the number of samples used.

    Finally, it is worth noting that although our proposals can reduce the estimation error, exploring other specific instances of the insertion method or other variant with multiple insertions may be interesting to determine the optimal approach for designing these feature maps for CDR. Another potential direction for future research would be to analyze non-linear models for processing these feature maps or to develop feature maps specifically tailored for non-linear models.

    \section{Acknowledgements}

    This work has been funded by grants PID2022-137099NB-C41, PID2019-104958RB-C41 funded by MCIN/AEI/10.13039/501100011033 and FSE+ and by grant 2021 SGR 01033 funded by AGAUR, Dept. de Recerca i Universitats de la Generalitat de Catalunya 10.13039/501100002809.

\bibliographystyle{IEEEbib}
\bibliography{bibliography.bib}

\begin{appendices}

\setcounter{equation}{0}
\renewcommand{\theequation}{\thesection.\arabic{equation}}

    \section{Implementation of the ZNE-based method}\label{appendix_implementation}

    As presented in Section \ref{sec:feature_map}, for the implementation of the ZNE-based method, we use noise factors $\lambda_i = 1 + \frac{2(i-1)}{k}$, where $k$ denotes the number of CNOT gates in circuit $U$. To implement these noise factors, we use the gate folding technique on CNOT gates. Therefore, as in \cite{lowe2021unified}, we assume that most of the noise is generated by the two-qubit gates.

    This method involves inserting an even number of CNOT gates after each CNOT in circuit $U$. Specifically, for $\tilde{\mathcal{P}}(U,\lambda_1)$, no gates are introduced, i.e., it corresponds to $\tilde{\mathcal{U}}$. $\tilde{\mathcal{P}}(U,\lambda_2)$ adds two additional CNOT gates after the first CNOT; $\tilde{\mathcal{P}}(U,\lambda_3)$  builds on $\tilde{\mathcal{P}}(U,\lambda_2)$ by inserting two extra CNOT gates after the second CNOT in the original circuit; and so on, until each CNOT gate in $U$ has two extra CNOT gates added after it. The procedure is repeated, recursively adding two more CNOT gates after each CNOT in the original circuit, until the intended dimension of the feature map is achieved.

    \begin{figure*}[h]
        \centering
        \centerline{\includegraphics[width=12cm]{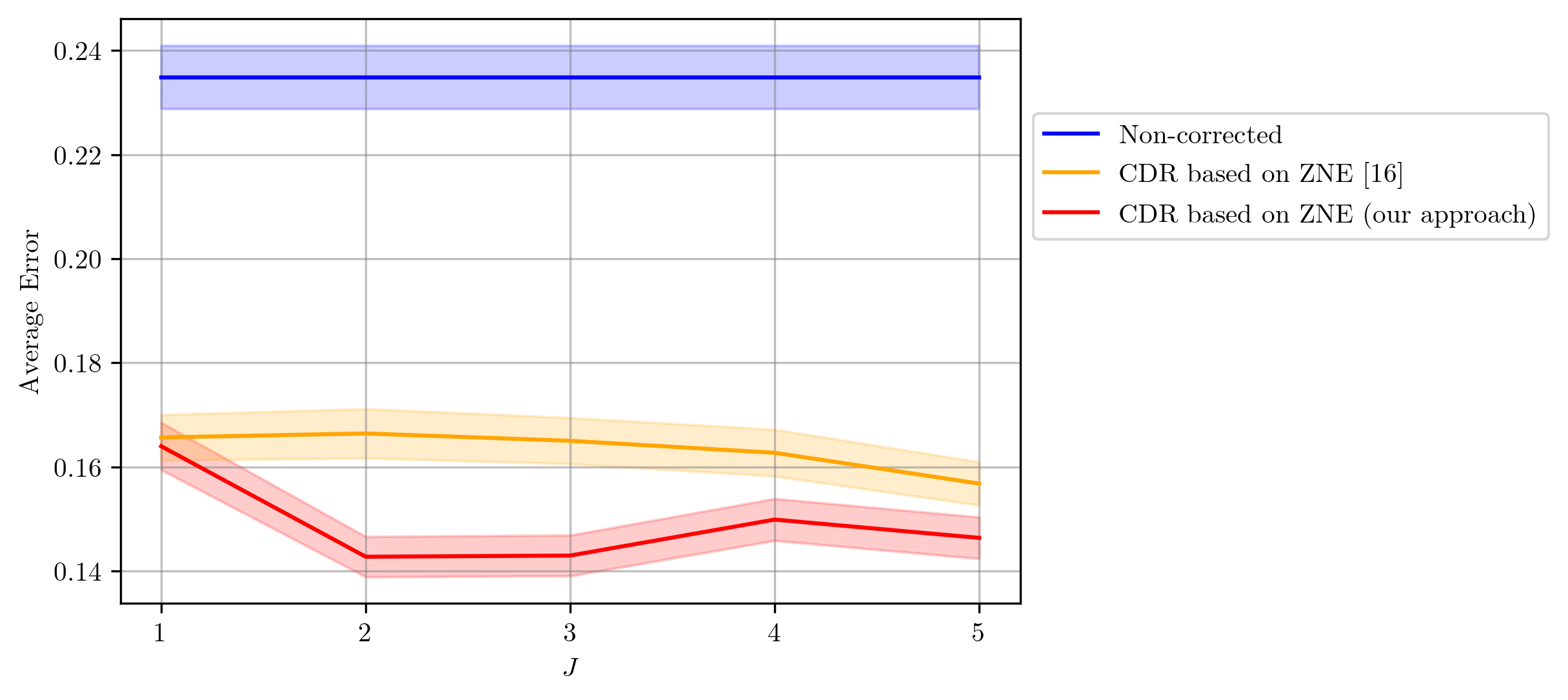}}
        \caption{Comparison of the average error for two ZNE-based methods as the feature map size varies.}
        \label{Figure_comparison_ZNE}
    \end{figure*}

    This implementation produces a small improvement compared to  \cite{lowe2021unified}, as shown in Figure \ref{Figure_comparison_ZNE}. This figure compares the two implementations by plotting the average error across 1000 testing circuits.

    The testing circuits are randomly generated and consist of 30 gates on 3 qubits. The number of samples used to estimate the expected values of the feature map is $N = 1000$. Noise is introduced only in the CNOT gates, where 1-qubit depolarizing channels with parameter $p=0.1$ are applied to the qubits affected by the CNOT gates. For a detailed description of the rest of the implementation, see Section \ref{sec:numerical_experiments}.

\setcounter{equation}{0}
    \section{Proof of Proposition 1}\label{first_appendix}

    In this appendix, we show the proof of Proposition \ref{corollary_1}, which is based on two auxiliary results, Lemma \ref{lemma_} and Lemma \ref{lemma_lip}.  The proofs of these lemmas are provided in sections \ref{appendix_A} and \ref{appendix_lipschitz}, respectively. Both lemmas are stated below.

    \begin{lemma}\label{lemma_}
        For any unitary $U$ and $p\in \mathbb{P}$, where $\mathbb{P}$ denotes the set of prime numbers, there exists another unitary $Y$ such that $g(t,Y)$ is periodic with period $p$ or constant, and 
        \begin{equation*}
             \left\| Y^t \ket{\psi}\bra{\psi}(Y^t)^\dagger-U^t \ket{\psi}\bra{\psi}(U^t)^\dagger \right\|_1 \leq 2\sin \left(\frac{2\pi t}{p}\right)
        \end{equation*}
        for any $t\in[0,\frac{p}{4}]$ and state $\ket{\psi}$.
    \end{lemma}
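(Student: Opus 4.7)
My plan is to construct $Y$ by discretizing the eigenphases of $U$ onto the grid $\{m/p : m = 0, 1, \ldots, p-1\}$ and then control the error introduced by this rounding via elementary operator norm and trigonometric estimates. Concretely, starting from the spectral decomposition $U = \sum_{k=1}^{2^n} e^{-2\pi i \omega_k} \ket{u_k}\bra{u_k}$ with $\omega_k \in [0,1)$, I would set $\tilde{\omega}_k := \mathrm{round}(p\,\omega_k)/p$ (taking the result modulo $1$ to handle the edge case $\mathrm{round}(p\omega_k)=p$), so that $|\omega_k - \tilde{\omega}_k| \le 1/(2p)$, and define $Y := \sum_k e^{-2\pi i \tilde{\omega}_k}\ket{u_k}\bra{u_k}$. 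This choice fixes the eigenvectors (which is essential so that $U^t$ and $Y^t$ are diagonal in a common basis, allowing a per-eigenvalue comparison) while snapping the phases to the nearest multiple of $1/p$.

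For periodicity, I would plug the eigendecomposition of $Y$ into the formula \eqref{expression_g_t_U}, obtaining $g(t,Y) = \sum_{k,l} \langle 0|u_l\rangle\langle u_l|O|u_k\rangle\langle u_k|0\rangle\, e^{-2\pi i(\tilde{\omega}_k - \tilde{\omega}_l)t}$. Every exponent is an integer multiple of $1/p$, so $g(\cdot,Y)$ is a trigonometric polynomial in frequencies contained in $\{m/p : m \in \mathbb{Z}\}$, which is automatically $p$-periodic. If all $\tilde{\omega}_k$ collapse to a common value, $g(\cdot,Y)$ is constant; otherwise the smallest period divides $p$, and since $p$ is prime this forces the period to be exactly $p$ (a period of $1$ would require all $\tilde{\omega}_k - \tilde{\omega}_l$ to be integers, contradicting the assumption that the $\tilde{\omega}_k$ are not all equal modulo $1$).

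For the trace distance bound, I would use the standard estimate
\begin{equation*}
\bigl\| A\rho A^\dagger - B\rho B^\dagger \bigr\|_1 \le \|(A-B)\rho A^\dagger\|_1 + \|B\rho(A-B)^\dagger\|_1 \le 2\,\|A-B\|_\infty
\end{equation*}
valid for any unitaries $A,B$ and state $\rho$. Applying this with $A = U^t$, $B = Y^t$, and noting that $U^t - Y^t$ is diagonal in $\{\ket{u_k}\}$ with entries $e^{-2\pi i \omega_k t} - e^{-2\pi i \tilde{\omega}_k t}$, I get
\begin{equation*}
\|U^t - Y^t\|_\infty = \max_k 2\bigl|\sin(\pi(\omega_k - \tilde{\omega}_k)\,t)\bigr| \le 2\sin\!\left(\tfrac{\pi t}{2p}\right)
\end{equation*}
for $t \in [0,p/4]$, where the last step uses $|\omega_k - \tilde{\omega}_k| \le 1/(2p)$ together with the fact that $\pi t/(2p) \le \pi/8 < \pi/2$ so $\sin$ is monotone on the relevant interval. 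Combining these bounds yields $\|U^t\rho(U^t)^\dagger - Y^t\rho(Y^t)^\dagger\|_1 \le 4\sin(\pi t/(2p))$, and it remains to verify the trigonometric inequality $4\sin(\pi t/(2p)) \le 2\sin(2\pi t/p)$ on $[0,p/4]$.

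The main (minor) obstacle is this last trigonometric step. It reduces, via the identity $\sin(4x) = 4\sin x \cos x \cos(2x)$ with $x := \pi t/(2p)$, to checking that $\cos x\cos(2x) \ge 1/2$ on $x \in [0,\pi/8]$; both factors are decreasing there, and the worst case $x=\pi/8$ gives $\cos(\pi/8)\cos(\pi/4) \approx 0.653$, comfortably above $1/2$. Everything else is essentially bookkeeping: the spectral decomposition, the elementary operator-norm triangle inequality, and the primality argument for the period. I would expect the hardest part of the writeup to be stating the rounding construction cleanly so that the degenerate case (all $\tilde{\omega}_k$ equal, giving a constant $g(\cdot,Y)$) is handled uniformly with the generic case.
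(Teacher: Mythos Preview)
Your approach is sound and genuinely different from the paper's: the paper uses the fidelity identity $\|\rho-\sigma\|_1 = 2\sqrt{1-F(\rho,\sigma)}$ for pure states and bounds the overlap $|\langle\psi|U^t(Y^t)^\dagger|\psi\rangle|^2$ directly (with \emph{floor}-rounded phases, so $\omega_i - \tilde\omega_i \in [0,1/p)$), whereas you go through the operator-norm perturbation $\|A\rho A^\dagger - B\rho B^\dagger\|_1 \le 2\|A-B\|_\infty$. Your route does not rely on purity of the state and is arguably more elementary, at the price of the extra trigonometric inequality; note that it genuinely \emph{requires} nearest-point rounding, since with floor the resulting $4\sin(\pi t/p)$ is not dominated by $2\sin(2\pi t/p)$ on $[0,p/4]$.

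There is, however, a concrete gap in your edge-case handling. Reducing $\tilde\omega_k$ modulo $1$ is wrong, because for non-integer $t$ the map $\omega \mapsto e^{-2\pi i\omega t}$ is \emph{not} $1$-periodic in $\omega$: the convention in \eqref{expression_g_t_U} fixes the branch $\omega_k \in [0,1)$, and $Y^t$ must be defined using real-valued phases in that same sense. If $\omega_k = 1-\epsilon$ with $0<\epsilon < 1/(2p)$, your rule sets $\tilde\omega_k = 0$, and then $|e^{-2\pi i\omega_k t} - 1| = 2|\sin(\pi(1-\epsilon)t)|$, which is of order $1$ for generic $t \in [0,p/4]$ and certainly not bounded by $2\sin(\pi t/(2p))$. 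The fix is immediate: drop the reduction modulo $1$, allow $\tilde\omega_k \in \{0, 1/p, \ldots, 1\}$, and define $Y^t := \sum_k e^{-2\pi i\tilde\omega_k t}|u_k\rangle\langle u_k|$ directly with these real (not circle-valued) phases. Then $|\omega_k - \tilde\omega_k| \le 1/(2p)$ holds on the real line, the remainder of your argument goes through verbatim, and $p$-periodicity of $g(\cdot,Y)$ is unaffected since all frequency differences $\tilde\omega_k - \tilde\omega_l$ remain integer multiples of $1/p$.
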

    \begin{lemma}\label{lemma_lip}
        Function $g(t,U)$ is Lipschitz continuous with Lipschitz constant 
        \begin{equation}
            L\leq 2\left\| O \right\|_2 \sqrt{\left(4 \pi^2+ \frac{16}{3}\pi^4\right)}
        \end{equation}
        that is, 
        \begin{equation}
            |g(t_1,U)-g(t_2,U)|\leq 2\left\| O \right\|_2 \sqrt{\left(4 \pi^2+ \frac{16}{3}\pi^4\right)} \,\,|t_1-t_2|
        \end{equation}
    \end{lemma}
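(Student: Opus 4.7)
The plan is to prove the Lipschitz estimate by controlling the pointwise derivative $\partial_t g(t,U)$ through the spectral decomposition of $U$ and then invoking the mean value theorem. Using the spectral decomposition already introduced in \eqref{expression_g_t_U}, I would write $U = \sum_{k=1}^{2^n} e^{j 2\pi \omega_k} |u_k\rangle\langle u_k|$ with $\omega_k \in [0,1)$, and define the Hermitian operator $H = \sum_k \omega_k |u_k\rangle\langle u_k|$, which satisfies $\|H\|_2 \leq 1$ because $\omega_k \in [0,1)$. Since $H$ shares its eigenvectors with $U$, it commutes with $U^t$, and one has the smooth identification $U^t = e^{j 2\pi t H}$ valid for all real $t$, giving $\tfrac{d}{dt} U^t = j 2\pi H U^t = j 2\pi U^t H$.

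Next I would differentiate $g(t,U) = \langle\psi(t)|O|\psi(t)\rangle$ with $|\psi(t)\rangle := U^t|0\rangle^{\otimes n}$. The product rule together with $|\psi'(t)\rangle = j 2\pi H |\psi(t)\rangle$ and $\langle\psi'(t)| = -j 2\pi \langle\psi(t)| H$ gives
\begin{equation*}
    g'(t,U) = j 2\pi \,\langle\psi(t)|\,[O,H]\,|\psi(t)\rangle.
\end{equation*}
The commutator $[O,H]$ is anti-Hermitian (its adjoint equals $[H,O]=-[O,H]$), so $\langle\psi(t)|[O,H]|\psi(t)\rangle$ is purely imaginary and $g'(t,U)$ is real, as expected. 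Bounding the expectation by the spectral norm and using the standard estimate $\|[O,H]\|_2 \leq 2\|O\|_2 \|H\|_2 \leq 2\|O\|_2$ yields
\begin{equation*}
    |g'(t,U)| \leq 2\pi \cdot 2\|O\|_2 = 4\pi \|O\|_2
\end{equation*}
uniformly in $t$. The mean value theorem then gives $|g(t_1,U)-g(t_2,U)| \leq 4\pi\|O\|_2\,|t_1-t_2|$, and since $4\pi \leq 2\sqrt{4\pi^2 + \tfrac{16}{3}\pi^4}$ (the term $\tfrac{16}{3}\pi^4$ is nonnegative), the stated bound follows.

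The main ``obstacle'' is more conceptual than technical: the direct derivative argument above actually produces the strictly tighter constant $4\pi\|O\|_2$, whereas the lemma asks only for $2\|O\|_2\sqrt{4\pi^2 + \tfrac{16}{3}\pi^4}$. The authors' proof likely arrives at their precise constant by a less sharp route (for instance, by also bounding $\|g''(t,U)\|_\infty \leq (2\pi)^2 \|[[O,H],H]\|_2 \leq 16\pi^2\|O\|_2$ and combining with $\|g\|_\infty \leq \|O\|_2$ via a Landau--Kolmogorov-type interpolation, or by a trigonometric-moment estimate of the form $\int_0^1(1+(2\pi\omega)^2)\,d\omega = 1+\tfrac{4\pi^2}{3}$ that naturally yields the factor $\sqrt{1+\tfrac{4\pi^2}{3}}$). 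Either way, the direct operator-norm bound already implies the claim, so no serious difficulty arises; the only subtlety to justify carefully is that $t \mapsto U^t$ is globally differentiable on $\mathbb{R}$, which is immediate from the spectral formula $U^t = \sum_k e^{j 2\pi \omega_k t}|u_k\rangle\langle u_k|$.
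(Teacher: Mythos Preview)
Your proof is correct and yields the sharper constant $4\pi\|O\|_2$, from which the stated bound follows since $4\pi \leq 4\pi\sqrt{1+\tfrac{4}{3}\pi^2} = 2\sqrt{4\pi^2+\tfrac{16}{3}\pi^4}$.

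The paper takes a genuinely different route. Rather than differentiating, it bounds $|g(t_1,U)-g(t_2,U)| \leq \|O\|_2\|\rho_{t_1}-\rho_{t_2}\|_1$ with $\rho_t = U^t|0\rangle\langle 0|^{\otimes n}(U^t)^\dagger$, uses the pure-state identity $\|\rho_{t_1}-\rho_{t_2}\|_1 = 2\sqrt{1-F(\rho_{t_1},\rho_{t_2})}$, and then lower-bounds the fidelity $F(\rho_{t_1},\rho_{t_2}) = |\langle 0|^{\otimes n} U^{t_2-t_1}|0\rangle^{\otimes n}|^2$ via the elementary inequalities $\cos x \geq 1-x^2/2$ and $\sin x \geq x-x^3/6$ applied to each eigenfrequency term. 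This produces $F \geq 1 - (4\pi^2+\tfrac{16}{3}\pi^4)(t_1-t_2)^2$ on the range $|t_1-t_2| \leq \tfrac{1}{\sqrt{2}\pi}$, and the complementary range is handled by the trivial bound $|g(t_1,U)-g(t_2,U)| \leq 2\|O\|_2$. So your speculation about Landau--Kolmogorov or a $\int_0^1(1+(2\pi\omega)^2)\,d\omega$ moment is off the mark: the extra $\tfrac{16}{3}\pi^4$ comes from the cubic correction in the sine lower bound. Your commutator argument is more direct, globally valid without a case split, and loses nothing; the paper's fidelity route is more geometric but pays a factor of roughly $\sqrt{1+\tfrac{4}{3}\pi^2}\approx 3.8$ in the constant.
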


    We start the proof by noting that for any observable $O$ and quantum states $\rho_1$ and $\rho_2$,
        \begin{equation}
            \left| \Tr(O\rho_1)-\Tr(O\rho_2)\right|\leq \left \| O \right \|_2 \left \| \rho_1-\rho_2 \right \|_1
        \end{equation}
    which follows from Eq. (9.42) in \cite{wilde2013quantum}. Combining this result with Lemma \ref{lemma_} yields
    \begin{equation}\label{eq_app_1_1}
        |g(t,U)-g(t,Y)|\leq 2 \left  \| O \right \|_2 \, \sin \left(\frac{2\pi t}{p}\right)
    \end{equation}
    Next, we use the fact that for any periodic function $h(t)$ with period $T$ that satisfies $|h(t_1)-h(t_2)|\leq L |t_1-t_2|$, i.e., $h(t)$ is Lipschitz continuous with Lipschitz constant $L$, there exist complex numbers $c_q = c_{-q}^*$ such that
    \begin{equation}\label{inequality_bound_error_fourier}
        \left|h(t)-\sum_{q=-Q}^Q \,c_q e^{-j\frac{2\pi}{T}qt}\right|\leq \frac{3TL}{2\pi Q} 
    \end{equation}
    This result is a rephrasing of Theorem 1 from \cite{jackson1930theory}.

    Applying this theorem to the function $g(t,Y)$, which is a periodic function with period $p$ and is Lipschitz continuous (Lemma \ref{lemma_lip}), yields that there exist complex numbers $c_q$ such that
    \begin{equation}\label{eq_app_1_2}
        \left|g(t,Y)-\sum_{q=-Q}^Q \,c_q e^{-j\frac{2\pi}{p}qt}\right|\leq \frac{3pL}{2\pi Q} 
    \end{equation}
    Therefore, combining \eqref{eq_app_1_1} and \eqref{eq_app_1_2},
    \begin{equation}\label{inequality_prop}
            \left|g(t,U)-\sum_{q=-Q}^Q c_q \, e^{-j \frac{2\pi}{p}q t}\right|\leq  2\left\| O \right\|_2 \left(C \,   \frac{p}{{Q}} +  \sin \left(\frac{2\pi t}{p}\right)\right),
    \end{equation}
    where $C\leq  3 \sqrt{\left( 1+ \frac{4}{3}\pi^2\right)}$.

    Finally, as the function $g(t,Y)$ has frequencies only within the range $|f|<1$, it follows that for $Q\geq p$, there exist coefficients such that the left-hand side of \eqref{eq_app_1_2} is equal to zero. In other words, the bound provided by \eqref{inequality_bound_error_fourier} can be enhanced in this scenario to $\frac{3pL}{2\pi Q}\cdot \mathbbm{1}\{Q<p\}$.

    \hspace*{16cm}  $\square$ 

\begin{figure*}[h]

\begin{subfigure}[b]{0.5\textwidth}
  \centering
  \centerline{\includegraphics[width=6.8cm]{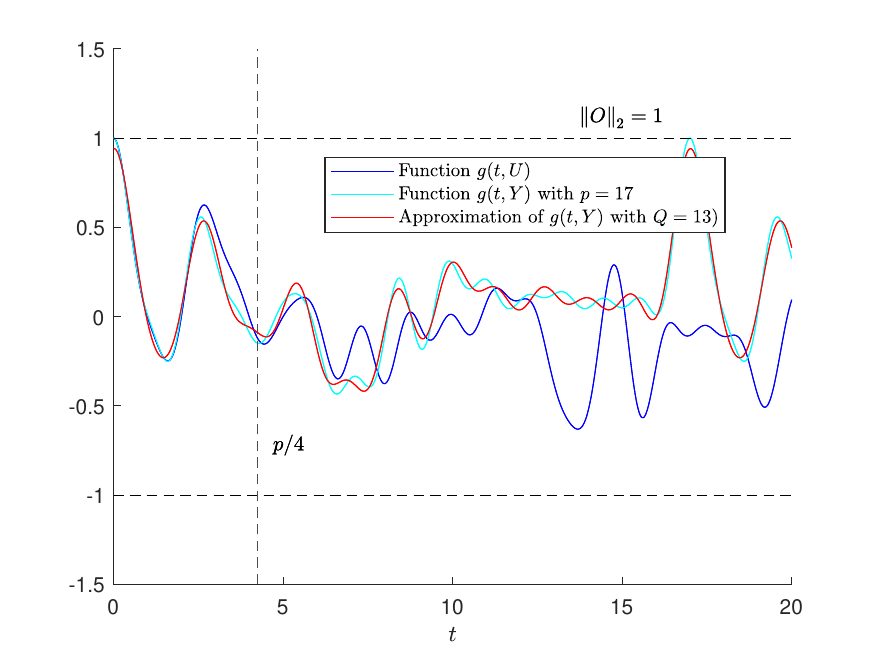}}
  \caption{}\label{figure_1}\medskip
\end{subfigure}
\hfill
\begin{subfigure}[b]{0.5\textwidth}
  \centering
  \centerline{\includegraphics[width=6.8cm]{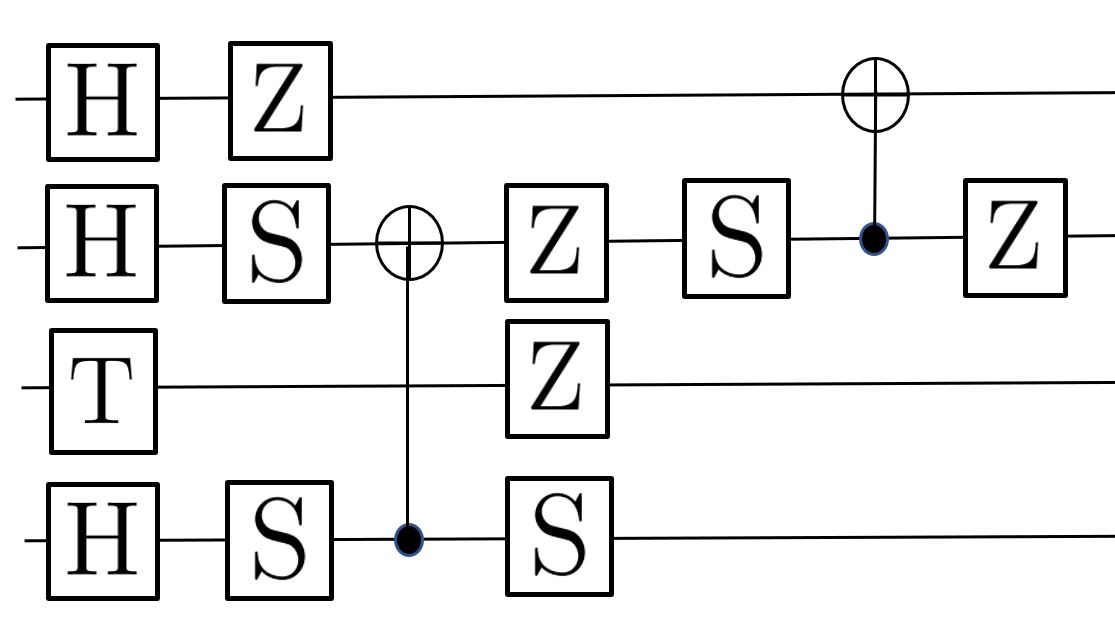}}
  \caption{}\medskip
\end{subfigure}

\label{fig:three graphs}

\caption{Figure (a) shows functions $g(t,U)$, $g(t,Y)$, and the approximation presented in Proposition \ref{corollary_1}. Figure (b) depicts the circuit implementation of the unitary $U$.}
\end{figure*}

    In Figure \ref{figure_1}, we can observe an example of the different functions discussed before. Specifically, we can see that the bound given in Proposition  \ref{corollary_1} is not very tight, as the approximation remains accurate even beyond $t=p/4$.

    \subsection{Proof of Lemma 1}\label{appendix_A}

    In this subsection, we show the proof of Lemma \ref{lemma_}, which we restate here for completeness.

    \begin{lemma**}
        For any unitary $U$ and $p\in \mathbb{P}$, where $\mathbb{P}$ denotes the set of prime numbers, there exists another unitary $Y$ such that $g(t,Y)$ is periodic with period $p$ or constant, and 
        \begin{equation*}
             \left\| Y^t \ket{\psi}\bra{\psi}(Y^t)^\dagger-U^t \ket{\psi}\bra{\psi}(U^t)^\dagger \right\|_1 \leq 2\sin \left(\frac{2\pi t}{p}\right)
        \end{equation*}
        for any $t\in[0,\frac{p}{4}]$ and state $\ket{\psi}$.
    \end{lemma**}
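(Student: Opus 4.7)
The plan is to construct $Y$ by quantizing the eigenphases of $U$ onto the lattice $\{k/p : k=0,1,\ldots,p-1\}$. Writing the spectral decomposition $U=\sum_i e^{-j2\pi\omega_i}\ket{u_i}\bra{u_i}$ with $\omega_i\in[0,1)$, I would round each $\omega_i$ to the nearest element $\tilde\omega_i=k_i/p$ and set $Y:=\sum_i e^{-j2\pi\tilde\omega_i}\ket{u_i}\bra{u_i}$. This choice ensures $|\omega_i-\tilde\omega_i|\le 1/(2p)$ for every $i$, which is the single estimate that drives both halves of the lemma.

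For the periodicity claim, the spectral expansion in \eqref{expression_g_t_U} applies verbatim with $U$ replaced by $Y$, so every frequency appearing in $g(t,Y)$ is of the form $(k_i-k_j)/p$ and hence an integer multiple of $1/p$. Consequently $g(\,\cdot\,,Y)$ is invariant under $t\mapsto t+p$. Because $p$ is prime, the only possible sub-periods of a function whose Fourier support lies in $\{k/p\}_{k\in\mathbb{Z}}$ are $1$ and $p$; in the period-$1$ case the Fourier support must collapse to the integers, leaving only the terms with $k_i=k_j$ and forcing $g(t,Y)$ to be constant. This yields the stated dichotomy.

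For the trace-distance bound, note that $Y^t\ket{\psi}$ and $U^t\ket{\psi}$ are both pure states, so I would apply the identity $\|\,\ket{\phi}\bra{\phi}-\ket{\chi}\bra{\chi}\,\|_1=2\sqrt{1-|\bra{\phi}\ket{\chi}|^2}$. Expanding in the eigenbasis of $U$ gives $\bra{\chi}\ket{\phi}=\sum_i a_i e^{j\theta_i}$ with $a_i=|\bra{u_i}\ket{\psi}|^2$ summing to one and $\theta_i=2\pi(\omega_i-\tilde\omega_i)t$. The eigenphase estimate yields $|\theta_i|\le \pi t/p\le \pi/4$ on the range $t\in[0,p/4]$, so $\cos\theta_i\ge\cos(\pi t/p)$ and therefore $|\bra{\chi}\ket{\phi}|^2\ge\bigl(\sum_i a_i\cos\theta_i\bigr)^2\ge\cos^2(\pi t/p)$. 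This produces a trace-distance bound of $2\sin(\pi t/p)$, which is majorized by the claimed $2\sin(2\pi t/p)$ since $\sin$ is monotone on $[0,\pi/2]$ and $2\pi t/p\le \pi/2$ in the admissible range.

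The main subtle step will be the periodicity dichotomy: the naive observation only gives that $p$ is \emph{a} period, and one has to combine primality of $p$ with the explicit restriction of the Fourier support to $\{k/p\}_{k\in\mathbb{Z}}$ to exclude intermediate periods. Handling ties when $\omega_i$ is equidistant from two lattice points is a minor bookkeeping issue that does not affect the bound, since both rounding choices still satisfy $|\omega_i-\tilde\omega_i|\le 1/(2p)$. Note that the argument above in fact yields the stronger bound $2\sin(\pi t/p)$, providing a factor-of-two cushion relative to the stated inequality.
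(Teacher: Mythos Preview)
Your argument is correct and mirrors the paper's approach: quantize the eigenphases of $U$ onto the lattice $\mathbb{Z}/p$ and bound the pure-state overlap $|\bra{\psi}(U^t)^\dagger Y^t\ket{\psi}|$ via the cosine of the maximal phase error. The only substantive difference is that the paper rounds \emph{down} via $\tilde\omega_i=\lfloor p\omega_i\rfloor/p$, so each $\theta_i\in[0,2\pi t/p]$ and the fidelity bound $\cos^2(2\pi t/p)$ gives exactly the stated $2\sin(2\pi t/p)$, whereas your nearest-point rounding halves the phase error and yields the sharper $2\sin(\pi t/p)$; your primality discussion is harmless but unnecessary, since both the lemma and its downstream use only require that $p$ be \emph{a} period of $g(\cdot,Y)$, which follows immediately from the quantization.
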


    \begin{proof}
    Let $Y := \sum_{i=1}^{2^n} e^{j 2\pi \frac{\left \lfloor p\omega_i \right \rfloor}{p}}\ket{u_i}\bra{u_i}$, where $\ket{u_i}$ denotes the $i^{th}$ eigenvector of $U$ and $\omega_i$ denotes the phase of the corresponding eigenvalue. From this, it follows that $g(t,Y)$ is periodic with period $p$ if at least one of the phases $\omega_i > \frac{1}{p}$; otherwise, $g(t,Y)$ is constant, as $Y$ equals the identity.
    
    Next, to bound the trace distance, we start by using the identity $\left \| \rho-\sigma \right \|_1=2\sqrt{1-F(\rho,\sigma)}$, which holds as long as both $\rho$ and $\sigma$ are pure states.
    \begin{equation}
        \left \| U^t \ket{\psi}\bra{\psi}(U^t)^\dagger- Y^t\ket{\psi}\bra{\psi} (Y^t)^\dagger \right \|_1  =2 \sqrt{1-\left| \bra{\psi} U^t (Y^t)^\dagger \ket{\psi}\right|^2}
    \end{equation}
    Substituting $Y$ and the spectral decomposition of $U$, 
    \begin{equation*}
        \left| \bra{\psi} U^t (Y^t)^\dagger \ket{\psi}\right|^2=\left|\sum_{i=1}^{2^n}e^{j 2\pi t \left(\omega_i -\frac{\left \lfloor  p\omega_i\right \rfloor }{p}\right)}\left|\bra{u_i}\ket{\psi}\right|^2\right|^2 
    \end{equation*}
    \begin{equation}\label{eq_fidelity}
        =\left(\sum_{i=1}^{2^n} \cos \left( 2\pi t \left(\omega_i -\frac{\left \lfloor  p\omega_i\right \rfloor }{p}\right)\right)\left|\bra{u_i}\ket{\psi}\right|^2\right)^2+\left(\sum_{i=1}^{2^n} \sin \left( 2\pi t \left(\omega_i -\frac{\left \lfloor  p\omega_i\right \rfloor }{p}\right)\right)\left|\bra{u_i}\ket{\psi}\right|^2\right)^2
    \end{equation}
    Note that $0\leq 2\pi t \left(\omega_i -\frac{\left \lfloor  p\omega_i\right \rfloor }{p}\right)\leq \frac{2\pi t}{p}\leq \frac{\pi}{2}$ for $4t\leq p$. Therefore, since $\cos(x)$ is a decreasing function in the interval $[0,\frac{\pi}{2}],$
    \begin{equation}
        \cos \left( 2\pi t \left(\omega_i -\frac{\left \lfloor  p\omega_i\right \rfloor }{p}\right)\right)\geq \cos \left(\frac{2\pi t}{p}\right)
    \end{equation}
    Similarly,
    \begin{equation}
        \sin \left( 2\pi t \left(\omega_i -\frac{\left \lfloor  p\omega_i\right \rfloor }{p}\right)\right)\geq 0
    \end{equation}
    Substituting both results in \eqref{eq_fidelity}, it follows that 
    \begin{equation}
        \left| \bra{\psi} U^t (Y^t)^\dagger \ket{\psi}\right|^2 \geq \cos^2 \left(\frac{2\pi t}{p}\right)= 1- \sin^2\left(\frac{2\pi t}{p}\right)
    \end{equation}
    which implies
    \begin{equation}
        \left \| U^t \ket{\psi}\bra{\psi}(U^t)^\dagger- Y^t\ket{\psi}\bra{\psi} (Y^t)^\dagger \right \|_1  \leq  2\sin \left( \frac{2\pi t}{p}\right)
    \end{equation}
   for $t\leq p/4$.
    \end{proof}

    \subsection{Lipschitz continuity of $g(t,U)$}\label{appendix_lipschitz}

        In this subsection, we prove that the function $g(t,U)$ is Lipschitz continuous and provide an upper bound for its Lipschitz constant. The formal statement is shown below.

        \begin{lemma*}
        Function $g(t,U)$ is Lipschitz continuous with Lipschitz constant 
        \begin{equation}
            L\leq 2\left\| O \right\|_2 \sqrt{\left(4 \pi^2+ \frac{16}{3}\pi^4\right)}
        \end{equation}
        that is, 
        \begin{equation}
            |g(t_1,U)-g(t_2,U)|\leq 2\left\| O \right\|_2 \sqrt{\left(4 \pi^2+ \frac{16}{3}\pi^4\right)} \,\,|t_1-t_2|
        \end{equation}
    \end{lemma*}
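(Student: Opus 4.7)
The plan is to establish Lipschitz continuity by showing that $g(t, U)$ is differentiable in $t$ with a uniformly bounded derivative, then invoking the fundamental theorem of calculus. First I would differentiate the representation in \eqref{expression_g_t_U} term-by-term to obtain $g'(t, U) = -j 2\pi \sum_{\varphi \in \Omega(U)} \varphi\, c(\varphi)\, e^{-j 2\pi \varphi t}$. Equivalently, writing $U = e^{j 2\pi H}$ for the Hermitian operator $H = \sum_i \omega_i \ket{u_i}\bra{u_i}$ with eigenvalues in $[0, 1)$, and setting $\ket{\psi(t)} := U^t \ket{0}^{\otimes n}$, the identity $\frac{d}{dt}U^t = j 2\pi H\, U^t$ together with the product rule gives the compact commutator form
\begin{equation*}
g'(t, U) = j 2\pi \bra{\psi(t)} [O, H] \ket{\psi(t)}.
\end{equation*}

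Next I would bound this derivative uniformly in $t$. A direct commutator estimate gives $\|[O, H]\|_2 \leq 2\|O\|_2 \|H\|_2 \leq 2\|O\|_2$, since $\|H\|_2 = \max_i |\omega_i| \leq 1$, and hence $|g'(t, U)| \leq 4\pi \|O\|_2 = 2\|O\|_2\sqrt{4\pi^2}$. By the mean value theorem this already yields Lipschitz continuity with a constant that is in fact strictly sharper than the one in the statement. The delicate step is reproducing the specific algebraic form $\sqrt{4\pi^2 + \frac{16}{3}\pi^4}$; the extra term under the radical strongly suggests combining the first-derivative bound with a second-derivative bound. A direct calculation yields $g''(t, U) = -4\pi^2 \bra{\psi(t)} [H, [H, O]] \ket{\psi(t)}$, whose magnitude is bounded by $4\pi^2 \cdot 4\|O\|_2\|H\|_2^2 \leq 16\pi^2 \|O\|_2$ via the double-commutator bound $\|[H,[H,O]]\|_2 \leq 4\|H\|_2^2 \|O\|_2$.

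I expect the final step applies a Taylor remainder identity $g(t_1) - g(t_2) = g'(t_2)(t_1 - t_2) + \int_{t_2}^{t_1}(t_1 - s)\,g''(s)\,ds$, or a closely related Cauchy--Schwarz estimate on $|g(t_1) - g(t_2)|^2$, in which the factor $\frac{1}{3}$ in the stated constant emerges naturally from $\int_0^1 s^2\,ds = 1/3$. The hard part is therefore not proving Lipschitz continuity itself, which follows essentially for free from the commutator representation of $g'$, but identifying the precise combination of first- and second-derivative estimates that reproduces the exact constant $\sqrt{4\pi^2 + \frac{16}{3}\pi^4}$ rather than the sharper $2\pi$ obtained from the most direct operator-norm argument.
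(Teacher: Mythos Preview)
Your argument is correct and in fact gives a sharper Lipschitz constant than the one stated: the commutator identity $g'(t,U)=j2\pi\langle\psi(t)\vert[O,H]\vert\psi(t)\rangle$ together with $\|H\|_2\le 1$ immediately yields $|g'|\le 4\pi\|O\|_2$, which is strictly smaller than $2\|O\|_2\sqrt{4\pi^2+\tfrac{16}{3}\pi^4}$ and therefore already proves the lemma as stated.

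The paper, however, does not differentiate $g$ at all. It bounds $|g(t_1,U)-g(t_2,U)|\le \|O\|_2\,\|\rho_{t_1}-\rho_{t_2}\|_1$ with $\rho_t=U^t\ket{0}\bra{0}^{\otimes n}(U^t)^\dagger$, uses the pure-state identity $\|\rho_{t_1}-\rho_{t_2}\|_1=2\sqrt{1-F(\rho_{t_1},\rho_{t_2})}$, and then lower-bounds the fidelity $|\langle 0|^{\otimes n}U^{t_2-t_1}|0\rangle^{\otimes n}|^2$ by expanding in the eigenbasis of $U$ and applying $\cos x\ge 1-x^2/2$ and $\sin x\ge x-x^3/6$ on a small interval $|t_1-t_2|\le 1/(\sqrt{2}\pi)$; the complementary range is handled by the trivial bound $|g(t_1)-g(t_2)|\le 2\|O\|_2$. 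The specific constant $\sqrt{4\pi^2+\tfrac{16}{3}\pi^4}$ thus arises from these sine/cosine Taylor inequalities (the $\tfrac{1}{3}$ is the cross term $2\cdot\tfrac{1}{6}$ from squaring the sine lower bound), not from a Taylor remainder of $g$ itself or from $\int_0^1 s^2\,ds$ as you conjecture. Your derivative approach is more elementary and delivers a better constant; the paper's fidelity route is looser here but is the same machinery used elsewhere in the appendix (Lemma~1), which may explain the authors' choice.
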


    \begin{proof}

    From the definition of function $g(t,U)$, it follows that
    \begin{equation}
        |g(t_1,U)-g(t_2,U)|= \left| \Tr(O(\rho_{t_1}-\rho_{t_2})) \right|\leq  \left\| O \right\|_ 2  \left\| \rho_{t_1}-\rho_{t_2} \right\|_1
    \end{equation}
    where $\rho_t= (U^t)^\dagger (\ket{0}\bra{0})^{\otimes n} U^t$, and $\left\| O \right\|_ 2$ denotes the spectral norm of observable $O$. The inequality follows from Eq. (9.42) in \cite{wilde2013quantum}. Since both states are pure, the trace distance can be expressed as
    \begin{equation}
        \left\| \rho_{t_1}-\rho_{t_2} \right\|_1= 2\sqrt{1-\left|\bra{0}^{\otimes n} (U^{t_1} )^{\dagger} U^{t_2}\ket{0}^{\otimes n}\right|^2}
    \end{equation}
    Next, we bound the fidelity. However, before doing so, we need to expand its expression as shown below:
    \begin{align}
        F(\rho_{t_1}, \rho_{t_2})&=\left| \bra{0}^{\otimes n} (U^{t_1})^\dagger U^{t_2} \ket{0}^{\otimes n}\right|^2 = \left|\bra{0}^{\otimes n} U^{t_2-t_1} \ket{0}^{\otimes n}\right|^2\nonumber \\&= \left | \sum_{i=1}^{2^n} e^{j 2\pi \omega_i (t_2-t_1)} |\bra{0} \ket{u_i}|^2\right|^2 \nonumber \\ &= \left( \sum_{i=1}^{2^n} \cos( 2\pi \omega_i (t_2-t_1))|\bra{0} \ket{u_i}|^2\right)^2+\left( \sum_{i=1}^{2^n} \sin( 2\pi \omega_i (t_2-t_1))|\bra{0} \ket{u_i}|^2\right)^2 \nonumber \\ &= \left( \sum_{i=1}^{2^n} \cos( 2\pi \omega_i |t_2-t_1|)|\bra{0} \ket{u_i}|^2\right)^2+\left( \sum_{i=1}^{2^n} \sin( 2\pi \omega_i |t_2-t_1|)|\bra{0} \ket{u_i}|^2\right)^2
    \end{align}
    where $\sum_{i=1}^{2^n} e^{j 2\pi \omega_i} \ket{u_i}\bra{u_i}$ denotes the spectral decomposition of unitary $U$. To bound the fidelity, we use the identities $\cos(x)\geq 1-\frac{x^2}{2}$ for all $x \in \mathbb{R}$, and  $\sin(x)\geq x-\frac{x^3}{6}$ for all $x \in \mathbb{R}^+$. However, we cannot apply them directly since $a<b$ does not imply $a^2 <b^2$. For this reason, we need to guarantee that both lower bounds, i.e., $x-\frac{x^3}{6}$ and $1-\frac{x^2}{2}$, are positive for any argument $x=2\pi \omega_i |t_2-t_1|$. This holds if we assume $|t_2-t_1|\leq \frac{1}{\sqrt{2} \pi}$. 

    Using the previous identities, and the assumption regarding $|t_2-t_1|$, we have:
    \begin{align}
        F(\rho_{t_1}, \rho_{t_2}) &\geq \left( 1- 2\pi^2 (t_1-t_2)^2 \chi_2 \right)^2+ \left( 2\pi |t_2-t_1|\chi_1 -\frac{(2\pi|t_2-t_1|)^3}{6} \chi_3\right)^2\nonumber \\ &\geq 1-4\pi^2 (t_1-t_2)^2 \chi_2-\frac{(2\pi (t_2-t_1))^4}{3} \chi_1 \chi_3
    \end{align}
    where $\chi_j:= \sum_{i=1}^{2^n} \omega_i^j |\bra{0} \ket{u_i}|^2$. Using $|t_2-t_1|\leq \frac{1}{\sqrt{2} \pi} <1$, and the inequalities $\chi_j\leq 1$ for $j\in\{1,2,3\}$, it follows that
    \begin{equation}
         F(\rho_{t_1}, \rho_{t_2})\geq 1- \left(4 \pi^2+ \frac{16}{3}\pi^4\right) (t_1-t_2)^2
    \end{equation}
    Therefore, for $|t_2-t_1|\leq \frac{1}{\sqrt{2} \pi}$
    \begin{equation}
        |g(t_1,U)-g(t_2,U)|\leq 2\left\| O \right\|_2 \sqrt{\left(4 \pi^2+ \frac{16}{3}\pi^4\right)} \,\,\,|t_1-t_2|
    \end{equation}

    Finally, note that $|g(t_1,U)-g(t_2,U)|\leq 2 \left\| O \right\|_2$ for any pair $t_1,t_2 \in \mathbb{R}$. Therefore, since 
    \begin{equation}
        \sqrt{\left(4 \pi^2+ \frac{16}{3}\pi^4\right)}|t_1-t_2|>1
    \end{equation}
    holds for $|t_1-t_2|>\frac{1}{\sqrt{2} \pi}$, it follows that 
    \begin{equation}
        |g(t_1,U)-g(t_2,U)|\leq 2\left\| O \right\|_2 \sqrt{\left(4 \pi^2+ \frac{16}{3}\pi^4\right)} \,\,\,|t_1-t_2|
    \end{equation}
    is satisfied for any  pair $t_1,t_2 \in \mathbb{R}$.

    \end{proof}

\setcounter{equation}{0}
    \section{Error Bound }\label{appendix_bound_error}

    In this appendix, we prove inequality \eqref{inequality_error}, i.e.,
    \begin{equation}
        \mathbb{E} \left[|\hat{f}(U)-f(U)|\right] \leq \frac{\|\boldsymbol{\hat{\alpha}}\|_2\left \| O \right \|_2}{\sqrt{N}}+ |f(U)-\boldsymbol{\hat{\alpha}}^T \boldsymbol{\phi}_{\infty}(U)|
    \end{equation}

    \begin{proof}
        
        \begin{align}\label{expected_value_correction}
              \mathbb{E}&\left[\left|\hat{f}(U)-f(U) \right|\right]= \mathbb{E}\left[\left|\hat{f}(U)-B\left(\boldsymbol{\hat{\alpha}}^T \boldsymbol{\phi}_{\infty}(U)\right)+B\left(\boldsymbol{\hat{\alpha}}^T \boldsymbol{\phi}_{\infty}(U)\right)-f(U) \right|\right] \nonumber \\& \leq     \mathbb{E}\left[\left|\hat{f}(U)-B\left(\boldsymbol{\hat{\alpha}}^T \boldsymbol{\phi}_{\infty}(U)\right)\right| \right]+  \left|B\left(\boldsymbol{\hat{\alpha}}^T \boldsymbol{\phi}_{\infty}(U)\right)-f(U) \right|  \nonumber \\ & \leq  \sqrt{\mathbb{E}\left[\left(\hat{f}(U)-B\left(\boldsymbol{\hat{\alpha}}^T \boldsymbol{\phi}_{\infty}(U)\right)\right)^2 \right]}+ \left|B\left(\boldsymbol{\hat{\alpha}}^T \boldsymbol{\phi}_{\infty}(U)\right)-f(U) \right| \nonumber \\& \leq  \sqrt{\mathbb{E}\left[\left(\boldsymbol{\hat{\alpha}}^T \boldsymbol{\phi}(U)-\mathbb{E}[\boldsymbol{\hat{\alpha}}^T \boldsymbol{\phi}(U)]\right)^2 \right]}+  \left|\boldsymbol{\hat{\alpha}}^T \boldsymbol{\phi}_{\infty}(U)-f(U) \right|  \nonumber \\ &  =   \sqrt{\mathrm{Var}[\boldsymbol{\hat{\alpha}}^T \boldsymbol{\phi}(U)]}+  \left|\boldsymbol{\hat{\alpha}}^T \boldsymbol{\phi}_{\infty}(U)-f(U) \right| \nonumber \\ & \leq \frac{\| \boldsymbol{\hat{\alpha}} \|_2\left \| O \right \|_2}{\sqrt{N}}+  \left|\boldsymbol{\hat{\alpha}}^T \boldsymbol{\phi}_{\infty}(U)-f(U) \right| 
        \end{align}

    The second inequality follows from Jensen's inequality, the third one uses the fact that $B(x):=\max \left\{ \min\{x, \left \| O \right \|_2\},-\left \| O \right \|_2\right\} $ is Lipschitz continuous with $L=1$, i.e., $|B(x)-B(x')| \leq |x-x'|$ for any $x,x'\in \mathbb{R}$, and the last inequality uses that $\mathrm{Var}[\boldsymbol{\hat{\alpha}}^T \boldsymbol{\phi}(U)]\leq \frac{\| \boldsymbol{\hat{\alpha}} \|_2^2 \left\| O \right \|_2^2}{N}$ for any $U$. This follows from
    \begin{equation}
        \mathrm{Var}[\boldsymbol{\hat{\alpha}}^T \boldsymbol{\phi}(U)]= \sum_{i=1}^J \hat{\alpha}_i^2 \,\mathrm{Var} [\hat{\phi}(\tilde{\mathcal{P}}_i (U))]= \frac{1}{N} \sum_{i=1}^J \hat{\alpha}_i^2 \,\, \mathrm{Var} [A_i(U)]
    \end{equation}

    where $A_i(U)$ denotes the r.v. obtained by measuring the state $\tilde{\mathcal{P}}_i (U)(\ket{0}\bra{0}^{\otimes n})$ with observable $O$. Next, using that a bounded r.v. $X\in[-C,C]$ satisfies that $\mathrm{Var}[X] \leq C^2$,

    \begin{equation}
         \mathrm{Var}[\boldsymbol{\hat{\alpha}}^T \boldsymbol{\phi}(U)]\leq \frac{\| \boldsymbol{\hat{\alpha}} \|_2^2\left \| O \right \|_2^2}{N}
    \end{equation}

    \end{proof}
    
\setcounter{equation}{0}
    \section{Proof of Theorem \ref{theorem_generalization}}\label{appendix_C}\label{proof_thm_1}

    In this appendix, we show the proof of Theorem \ref{theorem_generalization}, which follows trivially from two auxiliary results, Lemma \ref{lemma_aux_thm_1} and Lemma \ref{lemma_2}.  The statements and proofs of these lemmas are presented below.

    For the first lemma, two additional facts are used. First, note that by substituting the decompositions \eqref{decomposition_X}, \eqref{decomposition_Y}, and \eqref{decomposition_Z}, we can express any unitary $U$ composed of CNOT gates and rotations as a sum of almost Clifford channels, i.e.,
    \begin{equation}\label{decomposition}
        U \rho U^\dagger = \sum_{i=1}^{3^{\ell_r}} w_i(\boldsymbol{\theta}) C_i \rho C_i^{\dagger} 
    \end{equation}
    where each circuit $C_i$ has the same structure as $U$ and is mainly composed by Clifford gates, the coefficients depend on the angles of the different substituted rotations in the circuit $U$, denoted by the vector $\boldsymbol{\theta}$, and $\ell_r$ is the number of substituted rotations in the circuit. For the proof of Lemma \ref{lemma_aux_thm_1}, instead of indexing the circuits with an integer, we use the elements of the set $\mathcal{A}=\{ 1,2,3\}^{\ell_r}$, as this is more convenient. 

    Second, the probability distribution induced by the procedure described in section \ref{circuits_training_set}, $\mathbb{P}_{W\sim D(U)}(C_i)$, can be expressed as $|w_i(\boldsymbol{\theta})|/ \sum_{i=1}^{3^{\ell_r}} |w_i(\boldsymbol{\theta})|$.

    \begin{lemma}\label{lemma_aux_thm_1}
        Assuming that any 1-qubit gate is modified by the same quantum channel $\mathcal{N}$ when implemented on the noisy device. Then, the insertion feature map, the ZNE-based feature map, and the insertion-ZNE feature map satisfy, 
        \begin{equation}
            \boldsymbol{\phi}_\infty (U) =\sum_i w_i(\boldsymbol{\theta}) \boldsymbol{\phi}_\infty (C_i) 
        \end{equation}
        implying, 
        \begin{align}\label{equality_expression}
            |f(U)-\boldsymbol{\hat{\alpha}}^T \boldsymbol{\phi}_{\infty}(U)| &= \left|\sum_{i=1}^{3^{\ell_r}} w_i(\boldsymbol{\theta}) \left( f(C_i)-\boldsymbol{\hat{\alpha}}^T \boldsymbol{\phi}_{\infty}(C_i)\right)\right| \nonumber \\ &=N(\boldsymbol{\theta})\left|\sum_{i=1}^{3^{\ell_r}} \mathbb{P}_{W\sim D(U)}(C_i) \, \mathrm{sign}(w_i(\boldsymbol{\theta})) \left( f(C_i)-\boldsymbol{\hat{\alpha}}^T \boldsymbol{\phi}_{\infty}(C_i)\right)\right|\nonumber \\ &=  N(\boldsymbol{\theta})\left| \mathbb{E}_{W\sim D(U)} \left[s_U(W)\left( f(W)-\boldsymbol{\hat{\alpha}}^T \boldsymbol{\phi}_{\infty}(W)\right)\right]\right| 
        \end{align}
        where $N(\boldsymbol{\theta}):=\sum_{i} |w_i(\boldsymbol{\theta})|$, $\,\mathbb{P}_{W\sim D(U)}(C_i)=|w_i(\boldsymbol{\theta})|/N(\boldsymbol{\theta})$ and $s_U(W):=\mathrm{sign}(w(\boldsymbol{\theta}))$, being $w(\boldsymbol{\theta})$ the coefficient associated with $W$ when decomposing $U$.

    \end{lemma}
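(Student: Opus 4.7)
The plan is to establish the channel-level identity $\tilde{\mathcal{P}}_j(U) = \sum_i w_i(\boldsymbol{\theta})\,\tilde{\mathcal{P}}_j(C_i)$ for each perturbation appearing in the three feature maps, and then to obtain the stated chain of equalities by tracing against $O$ and rewriting the signed sum as an expectation under $D(U)$. I would start from the single-rotation identities \eqref{decomposition_X}--\eqref{decomposition_Z}: each $R_X$, $R_Y$, $R_Z$ channel is expressed as a real linear combination of the seven 1-qubit Clifford channels associated with $\{I,X,Y,Z,\sqrt{X},\sqrt{Y},\sqrt{Z}\}$, and the coefficients sum to $1$ (a direct check: $\tfrac{1+\cos\theta-\sin\theta}{2}+\tfrac{1-\cos\theta-\sin\theta}{2}+\sin\theta=1$). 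Under the hypothesis that every 1-qubit gate is followed by the same CPTP map $\mathcal{N}$, post-composing each identity with $\mathcal{N}$ and using linearity gives the same decomposition for the noisy rotation. Iterating this substitution over the $\ell_r$ substituted rotations of $U$ while holding every other gate fixed yields
\begin{equation}
\tilde{\mathcal{U}} \;=\; \sum_{i\in\mathcal{A}} w_i(\boldsymbol{\theta})\,\tilde{\mathcal{C}}_i,
\end{equation}
with $w_i(\boldsymbol{\theta})$ the product of per-rotation coefficients and $\sum_i w_i(\boldsymbol{\theta})=1$.

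The next step is to propagate this channel identity through each perturbation used to define the feature maps. For the insertion map, $\tilde{\mathcal{P}}_j(U)=\tilde{\mathcal{U}}_2\circ\tilde{\mathcal{V}}^{\,t_j}\circ\tilde{\mathcal{U}}_1$: the inserted $V$ is fixed and never the target of substitution, so applying the channel decomposition separately to $\tilde{\mathcal{U}}_1$ and $\tilde{\mathcal{U}}_2$ and using bilinearity of channel composition gives $\tilde{\mathcal{P}}_j(U)=\sum_i w_i\,\tilde{\mathcal{P}}_j(C_i)$. For the ZNE-based map, gate folding only duplicates CNOT gates, which are not substituted, so the folded circuit for $U$ decomposes identically into the folded circuits for each $C_i$. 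The insertion-ZNE map is the combination of both and follows the same argument. Taking $\mathrm{Tr}(O\,\cdot\,(\rho_0))$ on both sides for every index $j$, together with $\sum_i w_i=1$ which handles the constant first entry, yields $\boldsymbol{\phi}_\infty(U)=\sum_i w_i(\boldsymbol{\theta})\,\boldsymbol{\phi}_\infty(C_i)$.

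For the chain of equalities in \eqref{equality_expression}, applying the same decomposition at the noise-free level gives $f(U)=\sum_i w_i(\boldsymbol{\theta})\,f(C_i)$; subtracting $\boldsymbol{\hat{\alpha}}^T\boldsymbol{\phi}_\infty(U)=\sum_i w_i(\boldsymbol{\theta})\,\boldsymbol{\hat{\alpha}}^T\boldsymbol{\phi}_\infty(C_i)$ produces the first equality. Factoring out $N(\boldsymbol{\theta})=\sum_i|w_i(\boldsymbol{\theta})|$ and writing $w_i=|w_i|\cdot\mathrm{sign}(w_i)$ identifies the remaining sum as the expectation under the distribution $D(U)$ defined by $\mathbb{P}_{W\sim D(U)}(C_i)=|w_i|/N(\boldsymbol{\theta})$, with the sign carried by $s_U(W)$, producing the last two equalities.

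The main obstacle is the bookkeeping in the second paragraph: one must verify that none of the perturbations — the inserted $V^{t_j}$ with its own 1-qubit rotations, the folded CNOT copies, and the noise channels interleaved between them — couples to the substituted rotations in a way that breaks the linear decomposition. This reduces to observing that the substitution procedure acts only on a prescribed subset of 1-qubit rotations inside $U$ itself, while every other gate is held fixed and therefore contributes the same CPTP factor to $\tilde{\mathcal{P}}_j(U)$ and to each $\tilde{\mathcal{P}}_j(C_i)$; combined with the uniform noise hypothesis on 1-qubit gates, this is exactly what makes the linear combination survive composition.
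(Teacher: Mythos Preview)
Your proposal is correct and follows essentially the same approach as the paper's proof: decompose each rotation channel via \eqref{decomposition_X}--\eqref{decomposition_Z}, use the uniform 1-qubit noise hypothesis so that the same linear combination holds for the noisy gates, and then observe that the inserted $V^{t_j}$ and the folded CNOTs are fixed factors common to $U$ and every $C_i$, so the channel-level identity propagates through each perturbation. You are slightly more explicit than the paper in two places --- noting that $\sum_i w_i(\boldsymbol{\theta})=1$ is what preserves the constant first entry of $\boldsymbol{\phi}_\infty$, and actually spelling out the chain \eqref{equality_expression} from the channel identity --- but the underlying argument is the same.
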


    \begin{proof}

    Let $\mathcal{U}=\mathcal{U}_\ell \circ \mathcal{U}_{\ell-1} \circ \cdots \circ \mathcal{U}_1$, where each $\mathcal{U}_\ell$ is either a CNOT gate or a single-qubit rotation. For simplicity in our argument, we assume that each $\mathcal{U}_i$ is a 1-qubit rotation. Therefore, using the decompositions given in equations \eqref{decomposition_X}, \eqref{decomposition_Y}, and \eqref{decomposition_Z}, we have that for all $i\in \{1,\cdots, \ell\}$
    \begin{equation}
        \mathcal{U}_i= \beta_{i,1}\, \mathcal{G}_{i,1}+ \beta_{i,2} \,\mathcal{G}_{i,2}+\beta_{i,3} \, \mathcal{G}_{i,3}
    \end{equation}
    where $\mathcal{G}_{i,j}$ denotes a Clifford gate, and $\beta_{i,j}\in \mathbb{R}$. Next, substituting on the expression for $\mathcal{U}$,
    \begin{align}
        \mathcal{U}&=\sum_{a\in \mathcal{A}} \left(\prod_{j=1}^\ell \beta_{i,a_i} \right) \mathcal{G}_{\ell, a_\ell}\circ \mathcal{G}_{\ell-1, a_{\ell-1}} \circ \cdots \circ \mathcal{G}_{1, a_1} \nonumber \\ &= \sum_{a\in \mathcal{A}} w_a(\boldsymbol{\theta}) \, \mathcal{C}_{a}
    \end{align}    
     Importantly, the same linear decomposition holds if we insert an arbitrary unitary $V$ at the same position $k$ in all the circuits. That is,
    \begin{align}
        \mathcal{U}_\ell \circ \cdots \circ \mathcal{U}_{k} \circ \mathcal{V} \circ \mathcal{U}_{k-1}  \circ \cdots \circ \mathcal{U}_1 &=\sum_{a\in \mathcal{A}} \left(\prod_{j=1}^\ell \beta_{i,a_i} \right) \mathcal{G}_{\ell, a_\ell}\circ \cdots \circ \mathcal{G}_{k, a_{k}} \circ \mathcal{V} \circ \mathcal{G}_{k-1, a_{k-1}} \circ \cdots \circ \mathcal{G}_{1, a_1} \nonumber \\ &= \sum_{a\in \mathcal{A}} w_a(\boldsymbol{\theta}) \, \mathcal{G}_{\ell, a_\ell}\circ \cdots \circ \mathcal{G}_{k, a_{k}} \circ \mathcal{V} \circ \mathcal{G}_{k-1, a_{k-1}} \circ \cdots \circ \mathcal{G}_{1, a_1}
    \end{align}
    Furthermore, since we assume that any 1-qubit gate is modified by the same quantum channel $\mathcal{N}$ when implemented on the noisy device, it follows that
    \begin{align}
        \tilde{\mathcal{U}}_\ell \circ \cdots \circ  \tilde{\mathcal{U}}_{k} \circ  \tilde{\mathcal{V}} \circ  \tilde{\mathcal{U}}_{k-1}  \circ \cdots \circ  \tilde{\mathcal{U}}_1 &= \sum_{a\in \mathcal{A}} w_a(\boldsymbol{\theta}) \,  \tilde{\mathcal{G}}_{\ell, a_\ell}\circ \cdots \circ  \tilde{\mathcal{G}}_{k, a_{k}} \circ  \tilde{\mathcal{V}} \circ  \tilde{\mathcal{G}}_{k-1, a_{k-1}} \circ \cdots \circ  \tilde{\mathcal{G}}_{1, a_1}
    \end{align}   
    where $\tilde{\mathcal{X}}= \mathcal{N}\circ \mathcal{X}$ denotes the noisy implementation of unitary $\mathcal{X}$. This result generalizes to the insertion of multiple gates into the original circuit and, therefore, applies to the feature maps listed above. This implies
    \begin{equation}
        \boldsymbol{\phi}_\infty (U) =\sum_i w_i(\boldsymbol{\theta}) \boldsymbol{\phi}_\infty (C_i) 
    \end{equation}

    \end{proof}

     \begin{lemma}\label{lemma_2}
     Let $W_1, \cdots, W_S$ denote $S$ i.i.d. samples from distribution $\mathbb{P}_{W\sim D(U)}(W)=|w(\boldsymbol{\theta})|/N(\boldsymbol{\theta})$, and $O$ be an observable such that $\left \| O \right \|_2\geq 1$, then
        \begin{align}\label{generalization_modified_app}
            &\Big | \mathbb{E}_{W\sim D(U)}  \Big[s_U(W)\Big( f(W)-\boldsymbol{\alpha}^T \boldsymbol{\phi}_{\infty}(W)\Big)\Big]\Big| \leq \left| \frac{1}{S}\sum_{i=1}^S s_U(W_i) \left( f(W_i)-\boldsymbol{\alpha}^T \boldsymbol{\phi}_{\infty}(W_i)\right)\right| \nonumber \\ & + \frac{\left \lceil \| \boldsymbol{\alpha} \|_2 \right \rceil \left \| O \right \|_2 \sqrt{(J+1)}}{\sqrt{S}} \left( 2+\sqrt{72 \log \frac{8 \left \lceil \| \boldsymbol{\alpha} \|_2 \right \rceil^2}{\delta}}  \right)  
        \end{align}
        holds with probability $1-\delta$ for all values of $\boldsymbol{\alpha}\in \mathbb{R}^{J+1} \backslash  \{\boldsymbol{0}\}$ over the random samples $W_i$.
    \end{lemma}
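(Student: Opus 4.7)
The plan is to view this as a uniform generalization bound and prove it via the standard Rademacher-complexity recipe (symmetrization plus McDiarmid's inequality), combined with a peeling/union-bound argument over integer values of $\|\boldsymbol{\alpha}\|_2$ to handle the unbounded parameter $\boldsymbol{\alpha}\in\mathbb{R}^{J+1}\setminus\{\boldsymbol{0}\}$.

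First, I would set $h_{\boldsymbol{\alpha}}(W):=s_U(W)\bigl(f(W)-\boldsymbol{\alpha}^T\boldsymbol{\phi}_\infty(W)\bigr)$ and reduce the claim to a uniform deviation bound on $\bigl|\tfrac{1}{S}\sum_{i=1}^S h_{\boldsymbol{\alpha}}(W_i)-\mathbb{E}[h_{\boldsymbol{\alpha}}]\bigr|$, via the triangle inequality $|\mathbb{E}[h_{\boldsymbol{\alpha}}]|\leq \bigl|\tfrac{1}{S}\sum_i h_{\boldsymbol{\alpha}}(W_i)\bigr|+\bigl|\mathbb{E}[h_{\boldsymbol{\alpha}}]-\tfrac{1}{S}\sum_i h_{\boldsymbol{\alpha}}(W_i)\bigr|$. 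For each integer $B\geq 1$, consider the restricted class $\mathcal{F}_B:=\{h_{\boldsymbol{\alpha}}:\|\boldsymbol{\alpha}\|_2\leq B\}$. The $\boldsymbol{\alpha}$-free part $s_U(W)f(W)$ contributes zero Rademacher complexity, and since $s_U(W_i)\sigma_i$ has the same law as $\sigma_i$ conditional on $W_i$, a Cauchy--Schwarz / Jensen computation gives $R_S(\mathcal{F}_B)\leq (B/S)\sqrt{\sum_i \mathbb{E}\|\boldsymbol{\phi}_\infty(W_i)\|_2^2}\leq B\|O\|_2\sqrt{(J+1)/S}$, where the feature-norm bound uses that each of the $J$ trace components of $\boldsymbol{\phi}_\infty$ is at most $\|O\|_2$ and that the leading entry $1$ is absorbed by $\|O\|_2\geq 1$. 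Every $h\in\mathcal{F}_B$ is also uniformly bounded by $M_B\leq 2B\sqrt{J+1}\|O\|_2$, since $|f(W)|\leq \|O\|_2$ and $|\boldsymbol{\alpha}^T\boldsymbol{\phi}_\infty(W)|\leq B\sqrt{J+1}\|O\|_2$.

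Next, symmetrization combined with McDiarmid's inequality (per-sample increments of order $M_B/S$) yields, for each integer $B\geq 1$ with probability at least $1-\delta_B$,
\[
\sup_{\|\boldsymbol{\alpha}\|_2\leq B}\Bigl|\tfrac{1}{S}\sum_{i=1}^S h_{\boldsymbol{\alpha}}(W_i)-\mathbb{E}[h_{\boldsymbol{\alpha}}]\Bigr|\leq 2R_S(\mathcal{F}_B)+c\,M_B\sqrt{\log(1/\delta_B)/S}
\]
for a universal constant $c$. Choosing $\delta_B:=\delta/(4B^2)$ gives $\sum_{B\geq 1}\delta_B<\delta$ since $\sum_{B\geq 1}B^{-2}<2$, so a union bound over $B$ extends the guarantee to all integer radii simultaneously. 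For arbitrary $\boldsymbol{\alpha}\neq\boldsymbol{0}$, instantiating the resulting bound at $B=\lceil\|\boldsymbol{\alpha}\|_2\rceil$ produces the stated form, with the factor $\log(8\lceil\|\boldsymbol{\alpha}\|_2\rceil^2/\delta)$ emerging from $\log(1/\delta_B)=\log(4\lceil\|\boldsymbol{\alpha}\|_2\rceil^2/\delta)$ up to an absorbed constant.

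The main technical obstacle is the peeling step itself: one Rademacher argument alone yields uniform convergence only on a bounded ball, so the presence of $\lceil\|\boldsymbol{\alpha}\|_2\rceil\sqrt{J+1}$ in the bound and the extra logarithmic factor in $\lceil\|\boldsymbol{\alpha}\|_2\rceil$ are the unavoidable cost of extending the guarantee to all of $\mathbb{R}^{J+1}\setminus\{\boldsymbol{0}\}$. Reproducing the precise prefactors $2$ and $\sqrt{72}$ in the statement requires careful bookkeeping of the McDiarmid range (a factor of $2$ per coordinate in the concentration exponent) and of the uniform bound $M_B\leq 2B\sqrt{J+1}\|O\|_2$, but no new ideas beyond those outlined above are required.
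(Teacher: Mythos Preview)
Your proposal is correct and follows essentially the same route as the paper: a Rademacher-complexity generalization bound on the linear class $\{h_{\boldsymbol{\alpha}}:\|\boldsymbol{\alpha}\|_2\leq B\}$ (the paper invokes Theorem~3.3 of Mohri et al., which packages symmetrization plus McDiarmid, and Talagrand's contraction lemma, which is your observation that $s_U(W_i)\sigma_i\overset{d}{=}\sigma_i$), followed by peeling via a union bound over integer radii with $\delta_B\propto B^{-2}$. The only cosmetic difference is that the paper first normalizes $h_{\boldsymbol{\alpha}}$ to take values in $[0,1]$ by dividing by $2\|O\|_2(1+\Lambda\sqrt{J+1})$ before applying the bounded-range result, and then applies the one-sided bound to both $h_{\boldsymbol{\alpha}}$ and $-h_{\boldsymbol{\alpha}}$ separately (accounting for the factor $8$ rather than $4$ inside the logarithm); your direct two-sided McDiarmid argument is equivalent.
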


    \begin{proof}
        First, using Theorem 3.3 \cite{mohri2018foundations},
        \begin{align}
            \mathbb{E}_{W\sim D(U) }\left[\frac{s_U(W)\left(f(W)-\boldsymbol{\alpha}^T \boldsymbol{\phi}_{\infty}(W) \right)}{2\left \| O \right \|_2 (1+ \Lambda \sqrt{J+1})}+\frac{1}{2}\right] &\leq \frac{1}{S}\sum_{i=1}^S \left( \frac{s_U(W_i)\left(f(W_i)-\boldsymbol{\alpha}^T \boldsymbol{\phi}_{\infty}(W_i) \right)}{2\left \| O \right \|_2 (1+ \Lambda \sqrt{J+1})}+\frac{1}{2} \right)\nonumber \\ &+2\hat{\mathcal{R}}_\mathcal{T} (\mathcal{G}_\Lambda)+3 \sqrt{\frac{1}{2S}\ln  \left(\frac{2}{\delta} \right)}
        \end{align}
        holds with at least probability $1-\delta$ for all $\|\boldsymbol{\alpha}\|_2\leq \Lambda$, where $\hat{\mathcal{R}}_\mathcal{T}(\cdot)$ denotes the empirical Rademacher complexity, $\mathcal{T}=\{W_1,\cdots,W_S\}$, and 
        \begin{equation}
        \mathcal{G}_\Lambda:=\left\{\frac{s_U(W)\left(f(W)-\boldsymbol{\alpha}^T \boldsymbol{\phi}_{\infty}(W) \right)}{2\left \| O \right \|_2 (1+ \Lambda \sqrt{J+1})}+\frac{1}{2}:\| \boldsymbol{\alpha} \|_2 \leq \Lambda \right\}
        \end{equation}
        Note that to apply Theorem 3.3 \cite{mohri2018foundations}, we need to use functions $g:\mathcal{W} \rightarrow [0,1]$. For this reason, the factor $2\left \| O \right \|_2 (1+ \Lambda \sqrt{J+1})$ is added. In particular, this factor can be derived from
        \begin{align}
            \left| s_U(W) (f(W)-\boldsymbol{\alpha}^T \boldsymbol{\phi}_{\infty}(W)\right| &\leq |f(W)|+|\boldsymbol{\alpha}^T \boldsymbol{\phi}_{\infty}(W))|  \nonumber \\ &\leq  \left \| O \right \|_2 + \left \|\boldsymbol{\alpha} \right \|_2  \left \| \boldsymbol{\phi}_{\infty}(W) \right \|_2  \nonumber \\ & \leq \left \| O \right \|_2+ \Lambda \sqrt{\left \| O \right \|_2^2 J +1} \nonumber \\ & \leq \left \| O \right \|_2 \left ( 1+\Lambda \sqrt{ J +1} \right)
        \end{align}
        where the second inequality follows from $\Tr(O \rho) \leq \left \| O \right \|_2$ and Cauchy-Schwarz inequality. The third inequality uses the definition of  $\boldsymbol{\phi}_{\infty}(W)$, and the last inequality relies on the assumption $\left \| O \right \|_2\geq 1$.

        Next, using the fact that for any set $\mathcal{A}=\{h\circ f (x): f\in \mathcal{B}\}$, where $h$ is a Lipschitz continuous function with Lipschitz constant $L$,  
        \begin{equation}
             \hat{\mathcal{R}}_\mathcal{T} (\mathcal{A}) \leq L \hat{\mathcal{R}}_\mathcal{T} (\mathcal{B}) 
        \end{equation}
        we have that 
        \begin{equation}
             \hat{\mathcal{R}}_\mathcal{T} (\mathcal{G}_\Lambda) \leq \frac{\hat{\mathcal{R}}_\mathcal{T}(\mathcal{H}_\Lambda)}{2\left \| O \right \|_2 (1+ \Lambda \sqrt{J+1}) }  
        \end{equation}
        where $\mathcal{H}_\Lambda:=\{s_U(W) (f(W)-\boldsymbol{\alpha}^T \boldsymbol{\phi}_{\infty}(W): \| \boldsymbol{\alpha} \|_2 \leq \Lambda \} $.
        Furthermore, using Talagrand’s contraction lemma (Lemma 5.7 \cite{mohri2018foundations}) and Theorem 6.12 \cite{mohri2018foundations},
        \begin{align}
            \hat{\mathcal{R}}_\mathcal{T} (\mathcal{H}_\Lambda) &\leq \frac{\Lambda \sqrt{\Tr(K)}}{S} \nonumber \\& = \frac{\Lambda \sqrt{\sum_{i=1}^S \boldsymbol{\phi}_{\infty}(W_i)^T \boldsymbol{\phi}_{\infty}(W_i)}}{S} \nonumber \\& \leq \frac{\Lambda \sqrt{ S \left( \left \| O \right \|_2^2 J +1 \right)}}{S} \nonumber \\& \leq \frac{\Lambda \left \| O \right \|_2 \sqrt{  J +1 }}{\sqrt{S}}
        \end{align}
        where the first inequality utilizes $\|\boldsymbol{\phi}_{\infty}(W_i)\|_2 \leq \sqrt{\left \| O \right \|_2^2 J +1}$, and the last inequality relies on the assumption $\left \| O \right \|_2\geq 1$. Therefore, by substituting and rearranging terms,
        \begin{align}
            \mathbb{E}_{W\sim D(U) }&\left[s_U(W)\left(f(W)-\boldsymbol{\alpha}^T \boldsymbol{\phi}_{\infty}(W) \right)\right] \leq \frac{1}{S}\sum_{i=1}^S  s_U(W_i)\left(f(W_i)-\boldsymbol{\alpha}^T \boldsymbol{\phi}_{\infty}(W_i)  \right)\nonumber \\ &+\frac{2\Lambda \left \| O \right \|_2 \sqrt{  J +1 }}{\sqrt{S}}+6\left \| O \right \|_2 (1+ \Lambda \sqrt{J+1}) \sqrt{\frac{1}{2S}\ln  \left(\frac{2}{\delta} \right)}
        \end{align}
    Using that $1\leq \Lambda \sqrt{J+1}$, the expression simplifies to,
    \begin{align}
            \mathbb{E}_{W\sim D(U) }&\left[s_U(W)\left(f(W)-\boldsymbol{\alpha}^T \boldsymbol{\phi}_{\infty}(W) \right)\right] \leq \frac{1}{S}\sum_{i=1}^S  s_U(W_i)\left(f(W_i)-\boldsymbol{\alpha}^T \boldsymbol{\phi}_{\infty}(W_i)  \right)\nonumber \\ &+ \frac{\Lambda \left \| O \right \|_2 \sqrt{J+1}}{\sqrt{S}} \left( 2+\sqrt{72 \log \left( \frac{2}{\delta} \right)}  \right) 
    \end{align}
    with at least probability $1-\delta$. Next, to remove the constraint $\| \boldsymbol{\alpha} \|_2 \leq \Lambda$, we adopt the same procedure as in \cite{huang2021power}. That is, we substitute $\delta= \delta'/(2\Lambda^2)$ and apply the union bound for $\Lambda \in \{1,2,3,\cdots\}$, yielding
        \begin{align}
            \mathbb{E}_{W\sim D(U) }&\left[s_U(W)\left(f(W)-\boldsymbol{\alpha}^T \boldsymbol{\phi}_{\infty}(W) \right)\right] \leq \frac{1}{S}\sum_{i=1}^S  s_U(W_i)\left(f(W_i)-\boldsymbol{\alpha}^T \boldsymbol{\phi}_{\infty}(W_i)  \right)\nonumber \\ &+ \frac{\left \lceil \| \boldsymbol{\alpha} \|_2 \right \rceil \left \| O \right \|_2 \sqrt{(J+1)}}{\sqrt{S}} \left( 2+\sqrt{72 \log \frac{4 \left \lceil \| \boldsymbol{\alpha} \|_2 \right \rceil^2}{\delta'}}  \right) 
        \end{align}
    with at least probability $1-\delta'$. 

    Given that the same inequality holds for the function $-s_U(W)\left(f(W)-\boldsymbol{\alpha}^T \boldsymbol{\phi}_{\infty}(W)\right)$, it follows that
    \begin{align}
            \Big|\mathbb{E}_{W\sim D(U) }&\left[s_U(W)\left(f(W)-\boldsymbol{\alpha}^T \boldsymbol{\phi}_{\infty}(W) \right)\right] - \frac{1}{S}\sum_{i=1}^S  s_U(W_i)\left(f(W_i)-\boldsymbol{\alpha}^T \boldsymbol{\phi}_{\infty}(W_i)  \right) \Big | \nonumber  \\ &\leq \frac{\left \lceil \| \boldsymbol{\alpha} \|_2 \right \rceil \left \| O \right \|_2 \sqrt{(J+1)}}{\sqrt{S}} \left( 2+\sqrt{72 \log \frac{8 \left \lceil \| \boldsymbol{\alpha} \|_2 \right \rceil^2}{\delta}}  \right) 
    \end{align}
    with at least probability $1-\delta$. Therefore,
    \begin{align}
        &\Big|\mathbb{E}_{W\sim D(U) }\left[s_U(W)\left(f(W)-\boldsymbol{\alpha}^T \boldsymbol{\phi}_{\infty}(W) \right)\right]  \Big |  \leq  \Big|\frac{1}{S}\sum_{i=1}^S  s_U(W_i)\left(f(W_i)-\boldsymbol{\alpha}^T \boldsymbol{\phi}_{\infty}(W_i)  \right) \Big | \nonumber \\ &  + \Big|\mathbb{E}_{W\sim D(U) }\left[s_U(W)\left(f(W)-\boldsymbol{\alpha}^T \boldsymbol{\phi}_{\infty}(W) \right)\right] - \frac{1}{S}\sum_{i=1}^S  s_U(W_i)\left(f(W_i)-\boldsymbol{\alpha}^T \boldsymbol{\phi}_{\infty}(W_i)  \right) \Big | \nonumber \\ & \leq \Big|\frac{1}{S}\sum_{i=1}^S  s_U(W_i)\left(f(W_i)-\boldsymbol{\alpha}^T \boldsymbol{\phi}_{\infty}(W_i)  \right) \Big |+ \frac{\left \lceil \| \boldsymbol{\alpha} \|_2 \right \rceil \left \| O \right \|_2 \sqrt{(J+1)}}{\sqrt{S}} \left( 2+\sqrt{72 \log \frac{8 \left \lceil \| \boldsymbol{\alpha} \|_2 \right \rceil^2}{\delta}}  \right)
    \end{align}
    with at least probability $1-\delta$.

    \end{proof}

\setcounter{equation}{0}
    \section{Numerical Experiments testing the Generalization Error}\label{numerical_experiments_testing}

    In this appendix, we conduct numerical experiments to examine whether the difference
    \begin{align}
    \Delta(U) &:= \left| \frac{N(\boldsymbol{\theta})}{S}\sum_{i=1}^S s_U(W_i) \left( f(W_i)-\boldsymbol{\alpha}^T \boldsymbol{\phi}_{\infty}(W_i)\right)\right|-|f(U)-\boldsymbol{\alpha}^T \boldsymbol{\phi}_{\infty}(U)|
    \end{align}
    scales as predicted by the generalization bound from Lemma \ref{lemma_2}. This analysis is performed because generalization bounds are often not particularly tight, and therefore, better scaling may be achievable in general.
    
    In particular, Figures \ref{figure_Delta_1} and \ref{figure_Delta_2} depict estimates of $\mathbb{E}[\Delta(U)]$ for different values of $S$, where the mean is estimated over $10^3$ random circuits $U$. Along with the estimate of the expected value, we also plot estimates of the standard deviation of $\Delta(U)$, denoted as $\sigma(\Delta(U))$. This variable should not be confused with the variance of the estimate of $\mathbb{E}[\Delta(U)]$, which is $\sigma(\Delta(U))^2/S$. These 3-qubit random circuits consist of $\ell=25$ gates, each chosen uniformly at random from the set $\{\text{CNOT}, R_X, R_Y, R_Z\}$. To ensure a certain level of noise, since in the simulations we do not introduce noise for the 1-qubit gates, if a random circuit contains fewer than 6 CNOT gates, rotations are substituted with CNOT gates until the count reaches 6. In both figures, rotations are substituted by random gates from the set $\{I, X, Y, Z, \sqrt{X}, \sqrt{Y}, \sqrt{Z}\}$. The probability of each gate is proportional to the absolute value of the corresponding coefficient given in the decompositions \eqref{decomposition_X}, \eqref{decomposition_Y}, and \eqref{decomposition_Z}. In Figure \ref{figure_Delta_1}, all rotations are substituted, whereas in Figure \ref{figure_Delta_2}, 6 randomly selected rotations are not substituted, causing the value of $N(\boldsymbol{\theta})$ to decrease. This enables a comparison of the effect of this term. The value of $\boldsymbol{\alpha}$ is determined using a training set of fixed size 100 for all simulations. We adopt this approach instead of training with the random samples obtained for two reasons: $(i)$ $\boldsymbol{\alpha}$ remains fixed in this case and does not vary with $S$, facilitating the observation of $\Delta$'s dependency on this parameter, and $(ii)$ for computational efficiency. Finally, outliers have been removed in both cases, defined as values of $\Delta(U)$ where $|\Delta(U)| > 7000$ for Figure \ref{figure_Delta_1} and $|\Delta(U)| > 60$ for Figure 1b. These outliers constitute approximately 0.5\% of the data for Figure \ref{figure_Delta_1} and 0.2\% for Figure \ref{figure_Delta_2}.

        \begin{figure*}[htb]

        \begin{subfigure}[b]{0.5\textwidth}
          \centering
          \centerline{\includegraphics[width=6.8cm]{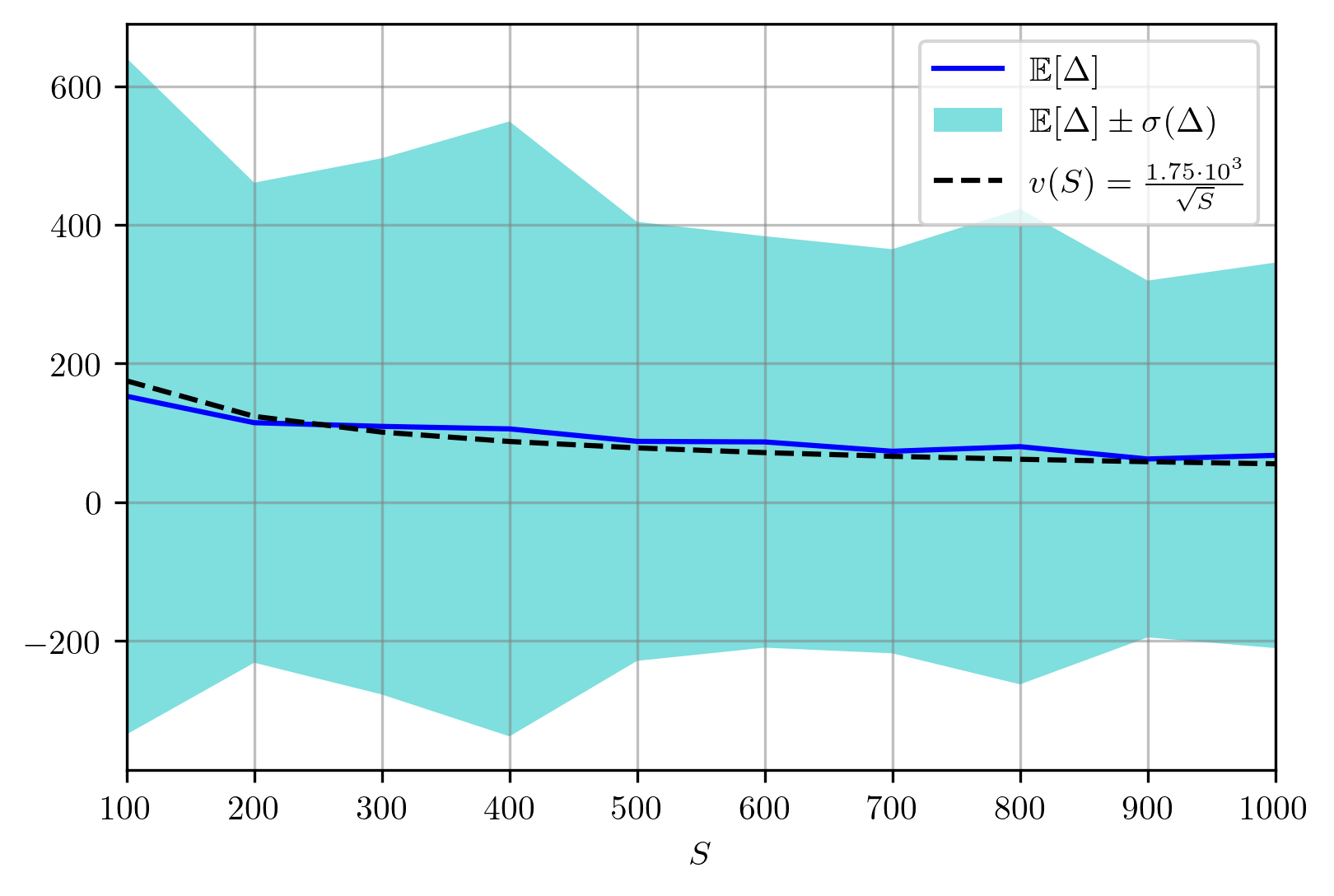}}
          \caption{}
          \label{figure_Delta_1}\medskip
        \end{subfigure}
        \hfill
        \begin{subfigure}[b]{0.5\textwidth}
          \centering
          \centerline{\includegraphics[width=6.8cm]{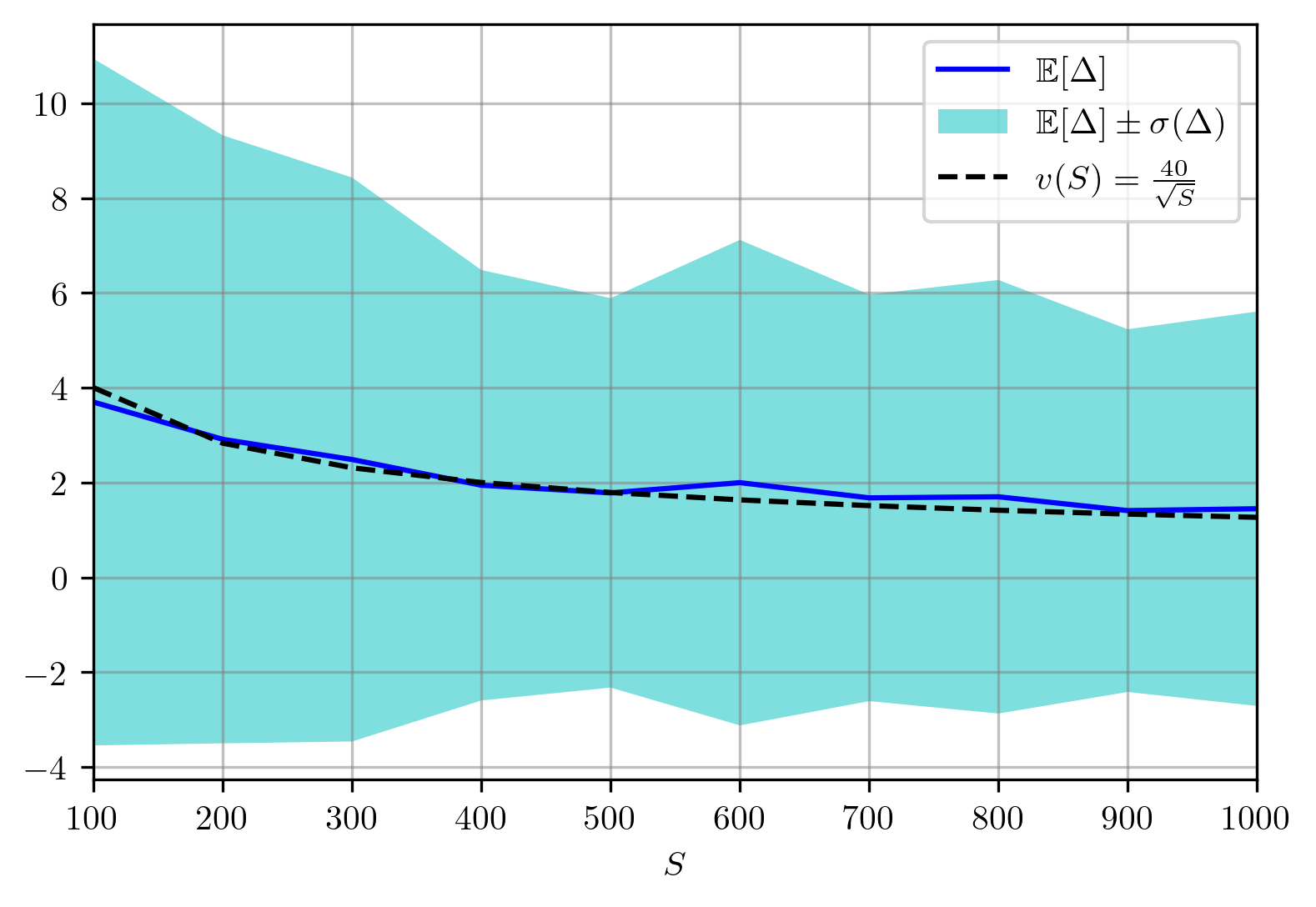}}
          \caption{}
          \label{figure_Delta_2} \medskip
        \end{subfigure}

        \caption{For Figure (a), $\Delta(U)$ is depicted for different values of $S$, with all non-Clifford gates substituted. In Figure (b), $\Delta(U)$ is shown for the same range of $S$, but with 6 non-Clifford gates left unsubstituted.}
        \label{figure_2_A}
        
        \end{figure*}

   In both figures, we observe that $\Delta(U)$ decreases with $S$ approximately according to $1/\sqrt{S}$, as determined by the bound. Additionally, since the only difference between Figure \ref{figure_Delta_1} and Figure \ref{figure_Delta_2} is the presence of those $6$ non-substituted rotations, we can observe how the normalization factor $N(\boldsymbol{\theta})$ affects $\Delta(U)$. Specifically, the average normalization constant in Figure \ref{figure_Delta_1} is significantly larger than in Figure \ref{figure_Delta_2}, as $N(\boldsymbol{\theta})$ grows exponentially with the number of replaced rotations. Formally,
    \begin{equation}
        \left(3\left(\frac{1}{4}+\frac{1}{\pi}\right)\right)^{\ell_r}\leq \mathbb{E}_{\boldsymbol{\theta}} [N(\boldsymbol{\theta})] \leq \left(\frac{6}{\pi}\right)^{\ell_r}
    \end{equation}
    where the expectation value is over the uniform distribution in $[0,2\pi]^{\ell_r}$, and $\ell_r$ denotes the number of rotations substituted. The latter follows from the fact that each weight $w(\boldsymbol{\theta})$ can be expressed as a product of $\ell_r$ terms, where each term is of the form $(1 + \cos \theta_i - \sin \theta_i)/{2}$, $(1 - \cos \theta_i - \sin \theta_i)/{2}$, or $\sin \theta_i$. Therefore, using that the number of weights is $3^{\ell_r}$, and 
    \begin{equation*}
        \mathbb{E}_\theta \left[ \left|\frac{1+\cos \theta -\sin \theta}{2}\right|\right]= \frac{1}{4}+\frac{1}{\pi}, \hspace{1cm}       \mathbb{E}_\theta \left[ \left|\frac{1-\cos \theta -\sin \theta}{2}\right|\right]=\frac{1}{4}+\frac{1}{\pi}
    \end{equation*}
    \begin{equation}
        \mathbb{E}_\theta \left[ \left| \sin \theta \right|\right]= \frac{2}{\pi}
    \end{equation}
    we obtain the desired result.

    Lastly, we study how changes in the feature map dimension, $J+1$, affect $\Delta(U)$ for the insertion method. We find that $\Delta(U)$ does not exhibit significant dependency on $J$. Similarly, the product $\| \boldsymbol{\alpha} \|_2 \sqrt{J+1}$ remains constant across different values of $J$. Consequently, both $\Delta(U)$ and $\| \boldsymbol{\alpha}\|_2 \sqrt{J+1}$ follow a similar behavior, consistent with what is predicted by the generalization bound.  

        \begin{figure*}[htb]

        \begin{subfigure}[b]{0.5\textwidth}
          \centering
          \centerline{\includegraphics[width=6.8cm]{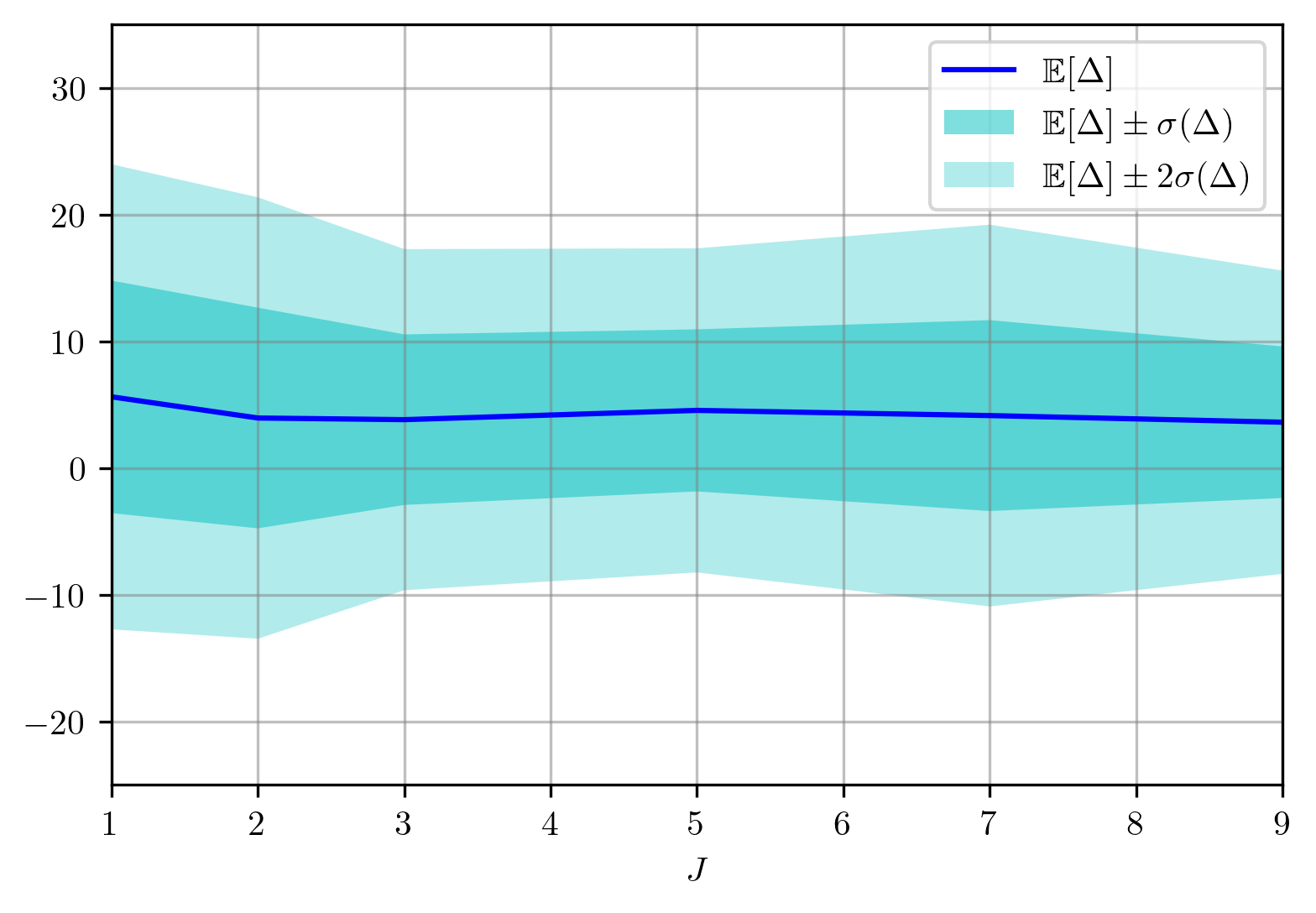}}
          \caption{}
          \label{figure_Delta_3}\medskip
        \end{subfigure}
        \hspace{0.25cm}
        \begin{subfigure}[b]{0.5\textwidth}
          \centering
          \centerline{\includegraphics[width=6.8cm]{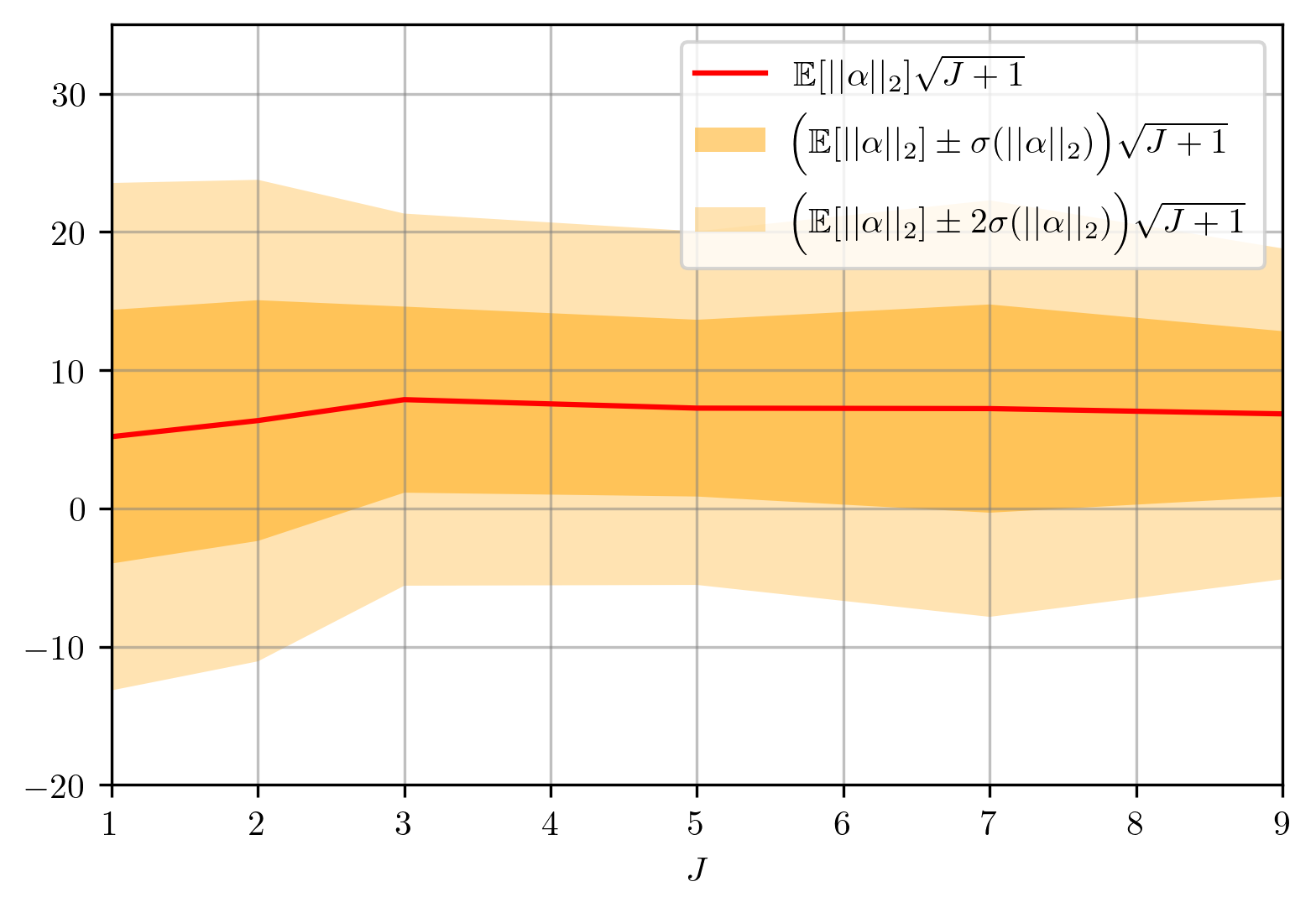}}
          \caption{}
           \label{figure_Delta_4}\medskip
        \end{subfigure}

        \caption{In Figure (a) the expected value of $\Delta(U)$ is displayed for different values of $J$. Figure (b) presents a plot of $\mathbb{E}[\|\boldsymbol{\alpha}\|_2 \sqrt{J+1}]$ across various values of $J$.}
        \end{figure*}

    Figures \ref{figure_Delta_3} and \ref{figure_Delta_4} are obtained through simulation of $10^3$ random circuits, using the same parameters as in Figure \ref{figure_2_A}. The main difference from the previous figures is that the training is performed with the random circuits used to compute $\Delta(U)$. In this case, we choose $S=100$.

\setcounter{equation}{0}
    \section{Proof of Theorem \ref{lower_bound}}\label{last_appendix}

    In this appendix, the proof of Theorem \ref{lower_bound} is shown. For completeness, the Theorem \ref{lower_bound} is restate below. 
    \begin{theorem**}\label{lower_bound_app}
        For any set of unitaries $\{U_m\}_{m=1}^M$, such that
        \begin{equation}
            \left|f(U_i)-f(U_j) \right|\geq 2\epsilon 
        \end{equation}
        for all $i\neq j$, and for any estimator, $\hat{f}(U)$, that depends on $U$ through $\boldsymbol{\phi}(U)$ and $ \mathcal{T}_S(U)$, where
        \begin{equation}
            \boldsymbol{\phi}(U)=[ 1 ,\,  \phi(\tilde{\mathcal{P}}_1 (U)), \phi(\tilde{\mathcal{P}}_2 (U)), \cdots, \phi(\tilde{\mathcal{P}}_J (U))]^T
        \end{equation}
       and $\mathcal{T}_S(U)=\{\boldsymbol{\phi}(W_i), f(W_i)\}_{i=1}^S$,

        \begin{enumerate}[(i)]
            \item if we assume that the noise of each circuit $\tilde{\mathcal{P}_j}(U_m)$ can be modeled as a depolarizing channel with parameter $p$, i.e., $\tilde{\mathcal{P}_j}(U_m) (\rho )= (1-p){\mathcal{P}_j}(U_m)( \rho) + p \frac{I}{2^n}$, and that the same holds for the circuits in the training set, i.e., $\tilde{\mathcal{P}}_j(W) (\rho )= (1-p){\mathcal{P}}_j(W)( \rho) + p \frac{I}{2^n}$, then :

            The variables $J,S$, and $N$ satisfy 
            
                \begin{equation}
                    \frac{(1-P_{\epsilon}-\beta(S)) \log M -H(P_{\epsilon})}{n\,\left(1- p\right)} \leq N J (S+1)
                \end{equation}

            where $H(\cdot)$ is the binary entropy function, $P_{\epsilon}$ denotes the average probability that the estimation error exceeds $\epsilon$, i.e., $\frac{1}{M} \sum_{i=1}^M \mathbb{P} (|f(U_i)-\hat{f}(U_i)|> \epsilon )$, and $\beta(S)<1$ depends on the methodology used to generate the training set.

            \item Similarly, if all circuits $\mathcal{P}_j(U_m)$ have depth $d$, and we assume that each noisy layer of gates $\tilde{\mathcal{G}}$ can be modeled as $D_p^{\otimes n}\circ \mathcal{G}$, where $D_p(\rho)$ denotes a 1-qubit depolarizing channel with parameter $p$, then
            \begin{equation}
                \frac{(1-P_{\epsilon}-\beta(S)) \log M -H(P_{\epsilon})}{n\,\left(1- p\right)^{2d}} \leq N J (S+1)
            \end{equation}

        \end{enumerate}
    \end{theorem**}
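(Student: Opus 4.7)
The argument follows a Fano plus data-processing template tailored to the CDR pipeline, in the spirit of the information-theoretic lower bounds in \cite{huang2021information}.

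\emph{Reduction to hypothesis testing.} Let the index $M$ be drawn uniformly from $\{1,\ldots,M\}$. Because $|f(U_i)-f(U_j)|\geq 2\epsilon$ whenever $i\neq j$, rounding $\hat f(U_M)$ to the nearest point in $\{f(U_m)\}_{m=1}^M$ defines a decoder $\hat M$ whose average error probability is at most $P_\epsilon$. Fano's inequality then yields
\begin{equation*}
(1-P_\epsilon)\log M - H(P_\epsilon) \,\leq\, I(M;\hat M) \,\leq\, I\bigl(M;\,\boldsymbol{\phi}(U_M),\mathcal{T}_S(U_M)\bigr),
\end{equation*}
where the second inequality is data processing, since $\hat M$ is computed from the pair $(\boldsymbol{\phi}(U_M),\mathcal{T}_S(U_M))$.

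\emph{Isolating the classical training labels.} Writing $\mathcal{T}_S(U_M)=\bigl(\{\boldsymbol{\phi}(W_i)\},\{f(W_i)\}\bigr)$ and using the chain rule,
\begin{equation*}
I\bigl(M;\boldsymbol{\phi}(U_M),\mathcal{T}_S(U_M)\bigr) \leq I\bigl(M;\{f(W_i)\}\bigr) + I\bigl(M;\boldsymbol{\phi}(U_M),\{\boldsymbol{\phi}(W_i)\}\,\big|\,\{f(W_i)\}\bigr).
\end{equation*}
I would then define $\beta(S)$ to be the constant for which $I(M;\{f(W_i)\}) \leq \beta(S)\log M$ holds; it depends only on the randomised rule that builds the training set from $U_M$, is bounded by $1$ in general, equals $0$ when $W_i$ is independent of $U$, and equals $1$ when $W_i=U$.

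\emph{Per-outcome quantum information bound.} The remaining term is a conditional mutual information across the $S+1$ feature vectors, each assembled from $NJ$ single-shot measurement outcomes of states of the form
\begin{equation*}
\tilde{\mathcal{P}}_j(V)(\rho_0) = (1-p)\,\mathcal{P}_j(V)(\rho_0) + p\,\tfrac{I}{2^n},\qquad V\in\{U_M\}\cup\{W_i\}.
\end{equation*}
The core step is to show that each such outcome carries at most $n(1-p)$ bits of information about $M$: view the noisy state as a convex mixture in which, with probability $p$, the outcome is drawn from the uniformly mixed state and is independent of $M$, and with probability $1-p$, the Holevo bound on the ensemble $\{\mathcal{P}_j(V)(\rho_0)\}$ caps the information at $\log 2^n = n$ bits. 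Convexity of the Holevo quantity (equivalently, concavity of the conditional entropy in the channel) then gives $I(M;X_{\text{outcome}})\leq n(1-p)$. Summing independently across the $NJ(S+1)$ outcomes produces $n(1-p)\,NJ(S+1)$, and substituting into Fano's inequality gives \eqref{lower_bound_1_th}.

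\emph{Depth-$d$ case.} For part (ii) only the per-outcome bound changes. Each of the $d$ layers of noise $D_p^{\otimes n}$ contracts any traceless component by a factor of $1-p$ in the appropriate norm; applying this contractivity to both the Schrödinger-picture state and the Heisenberg-picture observable obtained by splitting the circuit around the final measurement accumulates a factor $(1-p)^{2d}$, yielding a per-outcome Holevo bound of $n(1-p)^{2d}$. The rest of the derivation is identical and produces \eqref{lower_bound_2_th}.

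\emph{Principal difficulty.} The crux is the single-outcome information bound. Obtaining the prefactor $n$ rather than $2^n$ is what makes the inequality non-trivial, and relies on the convex mixture structure of the depolarising channel together with the Holevo bound. The depth-$d$ contraction $(1-p)^{2d}$ is the subtler part and will require a careful accounting of how the layered noise acts simultaneously on the state being prepared and on the observable being measured.
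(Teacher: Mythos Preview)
Your plan for part~(i) is essentially the paper's proof: Fano plus data processing to reduce to $I(M;\boldsymbol{\phi}(U_M),\mathcal{T}_S(U_M))$, a chain-rule split that isolates the classical labels $\{f(W_i)\}$ and defines $\beta(S)$, and a per-outcome Holevo bound of $n(1-p)$ obtained from the convex decomposition of the depolarising channel (the paper phrases this as concavity of von Neumann entropy, $H((1-p)\rho+pI/2^n)\geq(1-p)H(\rho)+pn$, which is the same computation). One detail you gloss over is how to bound the information carried by the training-set feature vectors $\{\boldsymbol{\phi}(W_i)\}$: to apply the Holevo bound you need the circuit $W_i$ to be fixed, which requires conditioning on the random seed $Z_i$ used to generate $W_i$ from $U_M$, not on $\{f(W_j)\}$. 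The paper does this explicitly by writing $W_i=T(U_M,Z_i)$ and moving from conditioning on $\boldsymbol f$ to conditioning on $Z^S$; once you add that step your sketch goes through.

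Part~(ii) has a real gap. Your proposed mechanism---contract the traceless part by $(1-p)$ per layer in a norm and then ``split the circuit around the final measurement'' to pick up the factor twice, once in the Schr\"odinger picture and once in the Heisenberg picture---does not work: all $d$ noise layers sit \emph{before} the single terminal measurement, so there is nothing on the Heisenberg side to contract, and trace-norm contraction alone yields only $(1-p)^d$. The exponent $2d$ in the paper comes from a different source: the \emph{strong data processing inequality} for quantum relative entropy under a depolarising layer, namely
\[
D\bigl(D_p^{\otimes n}(\sigma)\,\big\|\,I/2^n\bigr)\;\leq\;(1-p)^2\,D\bigl(\sigma\,\big\|\,I/2^n\bigr),
\]
whose contraction coefficient is $(1-p)^2$, not $(1-p)$. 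Rewriting $D(\sigma\|I/2^n)=n-H(\sigma)$ and iterating through the $d$ unitary-plus-noise layers (unitaries preserve entropy) gives $H(\tilde{\mathcal P}_j(U_m)(\rho_0))\geq n(1-(1-p)^{2d})$, hence $\chi(\mathcal E)\leq nJ(1-p)^{2d}$. So the ``principal difficulty'' you flag is exactly where your sketch breaks; replace the Schr\"odinger/Heisenberg heuristic with the SDPI and the rest of your argument carries over unchanged.
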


    \begin{proof}

        Let r.v. $X(j,U)$ denote the outcome of measuring the noisy circuit $\tilde{\mathcal{P}}_j (U)$. The vector $[X(1,U),\cdots,X(J,U)]^T$ is denoted by $\boldsymbol{X}(U)$. A sequence of samples of this random variable, $\boldsymbol{x}_1(U),\cdots, \boldsymbol{x}_N(U)$, is represented as $\boldsymbol{x}(U)^N$. The r.v. $G$ is uniformly distributed over $\{1,\cdots, M\}$. The \textit{estimation} of $G$ is the r.v. $\hat{G}$, defined as
        \begin{equation}
            \hat{G}:= \argmin_{m'\in [1:M] }|f(U_{m'})-h(\boldsymbol{\phi}(U_G), \mathcal{T}_S(U_G))|    
        \end{equation}
        where $\hat{f}(U)=h(\boldsymbol{\phi}(U), \mathcal{T}_S(U))$.
        
        The mutual information between $G$ and $\hat{G}$ can be lower bounded using Fano's inequality,
        \begin{align}\label{lower_bound_mutual_information}
            I(G;\hat{G})&= H(G)-H(G|\hat{G})\geq \log M -H(\mathbb{P}(G\neq \hat{G}))-\mathbb{P}(G\neq \hat{G}) \log M  \nonumber \\ &\geq \log M -H(P_{\epsilon})-P_{\epsilon} \log M 
        \end{align}
        where $P_{\epsilon}=\frac{1}{M} \sum_{i=1}^M \mathbb{P} (|f(U_i)-\hat{f}(U_i)|> \epsilon )$. Next, the mutual information can be upper bounded as follows
        \begin{align}\label{upper_bound_mutual_information}
            I(G;\hat{G})&\leq I(G; \boldsymbol{\phi}(U_G),\mathcal{T}_S(U_G))\leq I(G; \boldsymbol{X}(U_G)^N,\mathcal{T}_S(U_G))\nonumber \\ &= I(G;\mathcal{T}_S(U_G))+I(G;\boldsymbol{X}(U_G)^N |\mathcal{T}_S(U_G))
            \nonumber \\ &\leq  I(G;\mathcal{T}_S(U_G))+N\cdot I(G;\boldsymbol{X}(U_G) |\mathcal{T}_S(U_G))
        \end{align}
        where the first and second inequalities follow from the data processing inequality, and the third inequality uses that 
\begin{align}\label{D_46}
    I(G;\boldsymbol{X}(U_G)^N |\mathcal{T}_S(U_G))&=\sum_{i=1}^N I(G;\boldsymbol{X}_i(U_G)|\mathcal{T}_{S}(U_G), \boldsymbol{X}(U_G)^{i-1} ) \nonumber \\ & = \sum_{i=1}^N H( \boldsymbol{X}_i(U_G) |\mathcal{T}_{S}(U_G), \boldsymbol{X}(U_G)^{i-1})-H(\boldsymbol{X}_i(U_G)|G,\mathcal{T}_{S}(U_G))
    \nonumber \\ &\leq  N \cdot I(G;\boldsymbol{X}(U_G)|\mathcal{T}_S(U_G))
\end{align}
where the first equality follows from applying the chain rule for mutual information. The second equality uses that $H( \boldsymbol{X}_i(U_G) |G,\mathcal{T}_{S}(U_G), \boldsymbol{X}(U_G)^{i-1})=H( \boldsymbol{X}_i(U_G) |G,\mathcal{T}_{S}(U_G))$, and the last inequality follows from $H(A|B,C)\leq H(A|B)$.

Therefore, combining \eqref{lower_bound_mutual_information} and \eqref{upper_bound_mutual_information},
\begin{equation}
    (1-P_{\epsilon})\log M -H(P_{\epsilon}) \leq I(G;\mathcal{T}_S(U_G))+N\cdot I(G;\boldsymbol{X}(U_G) |\mathcal{T}_S(U_G))
\end{equation}
Next, we define ensemble $\mathcal{E}:=\{1/M,\tilde{\sigma}_{m}\}$, where 
\begin{equation}
   \tilde{\sigma}_m:= \left(\bigotimes_{j=1}^J \tilde{\mathcal{P}}_j\left(\mathcal{U}_m\right) \right)(\rho_0^{\otimes J})
\end{equation}
and $\rho_0=\ket{0} \bra{0}^{\otimes\, n}$. Using this definition, we can upper bound the coefficient associated with $N$ as follows,  
\begin{align}
    I(G;\boldsymbol{X} |\mathcal{T}_S(U_G)) &= H(\boldsymbol{X}|\mathcal{T}_S(U_G))-H(\boldsymbol{X}|G,\mathcal{T}_S(U_G)) \nonumber \\ &\leq  H(\boldsymbol{X})-H(\boldsymbol{X}|G)= I(G;\boldsymbol{X}) \nonumber \\ &\leq  I_{\mathrm{acc}}(\mathcal{E}) \leq \chi (\mathcal{E})
\end{align}
where the first inequality follows from $H(\boldsymbol{X}|\mathcal{T}_S(U_M))\leq H(\boldsymbol{X})$, the second inequality follows from the definition of accessible information, and the third inequality uses Holevo's theorem. Furthermore, if we assume that the noise is modelled as a depolarizing channel, then 
\begin{align}\label{upper_bound_Holevo}
    \chi (\mathcal{E})&= H(\tilde{\sigma})- \frac{1}{M} \sum_{m=1}^{M} H(\tilde{\sigma}_m)\nonumber \\ &\leq nJ  - \frac{1}{M} \sum_{m=1}^{M} \sum_{j=1}^J H\left( \tilde{\mathcal{P}}_j(\mathcal{U}_m)(\rho_0)\right) \nonumber \\ & =  nJ  - \frac{1}{M} \sum_{m=1}^{M} \sum_{j=1}^J H\left( \left(1-p \right) \cdot\mathcal{P}_j(\mathcal{U}_m)(\rho_0)+ p \frac{I}{2^n}\right) \nonumber \\ & \leq 
    nJ  - \frac{1}{M} \sum_{m=1}^{M} \sum_{j=1}^J \left(1-p \right) H\left( \mathcal{P}_j(\mathcal{U}_m)(\rho_0)\right)+ p H\left(\frac{I}{2^n}\right) \nonumber \\ &= n J \,(1-p) 
\end{align}
where the first inequality uses that $H(\sigma)\leq \log d$, and $H(\rho\otimes \sigma)=H(\rho)+H(\sigma)$, the second inequality follows from $H(\alpha \rho+(1-\alpha)\sigma)\geq \alpha H(\rho)+ (1-\alpha) H(\sigma)$. The last step uses that $\mathcal{P}_j(\mathcal{U}_m)(\rho_0)$ is a pure state, and that $H(I/2^n)=n$

Next, we bound the factor associated with the training set, i.e., $I(G;\mathcal{T}_S(U_G))$. To do this we use the fact that each random circuit $W$ in the training set is generated by a deterministic function $T$, with inputs $U_G$ and a random string $Z$ that is completely independent of $G$. In other words, $W_i=T(U_G,z_i)$.
\begin{align}
    I(G;\mathcal{T}_S(U_G))&=I(G; \boldsymbol{f}, \Phi )=I(G; \boldsymbol{f})+I(G;\Phi |\boldsymbol{f}) \nonumber \\&=I(G; \boldsymbol{f})+H(\Phi|\boldsymbol{f})-H(\Phi|G,\boldsymbol{f})  \nonumber \\& \leq I(G; \boldsymbol{f})+H(\Phi)-H(\Phi|G,\boldsymbol{f},Z^S) \nonumber \\& = I(G; \boldsymbol{f})+I(G; \Phi,Z^S)=I(G; \boldsymbol{f})+I(G; \Phi |Z^S)
\end{align}
where $\boldsymbol{f}=[f(W_1),f(W_2),\cdots, f(W_S)]^T$, and $\Phi=[ \boldsymbol{\phi}(W_1), \boldsymbol{\phi}(W_2),\cdots,  \boldsymbol{\phi}(W_S)]^T$. For the first inequality, we utilize that $H(A|B)\leq H(A)$, and the last two equalities use $H(\Phi|G,\boldsymbol{f},Z^S)=H(\Phi|G,Z^S)$, and that $I(G;Z^S)=0$.

To bound $I(G; \Phi |Z^S)$, we proceed as follows:
\begin{align}\label{upper_2}
    I(G; \Phi |Z^S)&=I(G; \boldsymbol{\phi}(W_1), \cdots, \boldsymbol{\phi}(W_S)| Z^S )  \nonumber \\ &= I(G; \boldsymbol{\phi}(W_1)|Z^S)+\sum_{i=2}^S I(G; \boldsymbol{\phi}(W_i) |Z^S, \underbrace{\boldsymbol{\phi}(W_1),\cdots, \boldsymbol{\phi}(W_{i-1})}_{\Phi^{i-1}}) \nonumber \\ &= I(G; \boldsymbol{\phi}(W_1)|Z_1)+\sum_{i=2}^S H(\boldsymbol{\phi}(W_i)|Z^S, \Phi^{i-1}) -H(\boldsymbol{\phi}(W_i)|Z^S, \Phi^{i-1},G)  \nonumber \\  & \leq  I(G;\boldsymbol{\phi}(W_1)|Z_1)+      \sum_{i=2}^S H(\boldsymbol{\phi}(W_i)|Z_i) -H(\boldsymbol{\phi}(W_i)|Z_i,G)  \nonumber \\ & = S \cdot I(G; \boldsymbol{\phi}(W_1)|Z_1) \nonumber \\ & \leq   S \, N \cdot I(G;\boldsymbol{X}(W_1)|Z_1)=   S \, N \cdot I(G;\boldsymbol{X}(T(U_G,Z_1))|Z_1) \nonumber \\ & \leq S \, N \cdot \mathbb{E}_{z_1} \left[ I_{\mathrm{acc}}\left( \mathcal{E}(z_1)\right)\right]\leq S \, N \cdot \mathbb{E}_{z_1} \left[\chi \left( \mathcal{E}(z_1)\right)\right]\leq n S N J \,(1-p)
\end{align}
where the ensemble $\mathcal{E}(z)$ is defined as
\begin{equation}
   \mathcal{E}(z)=\left\{\frac{1}{M}, \left(\bigotimes_{j=1}^J \tilde{\mathcal{P}}_j\left(T(U_m,z)\right) \right)(\rho_0^{\otimes J})\right \}
\end{equation}
The second inequality is analogous to \eqref{D_46}, the third and fourth inequalities follow from the definition of the accessible information, and Holevo's theorem, respectively. The last inequality uses \eqref{upper_bound_Holevo}.

Finally, defining, $\beta(S):=I(G;f(W_1),\cdots, f(W_S))/\log M$, which depends on the algorithm used to generate the training set, and the previous upper bounds, the following inequality follows
\begin{equation}\label{upper_final_}
    I(G;\hat{G})\leq \beta(S) \log M + n (S+1) N J (1-p)
\end{equation}
which combined with \eqref{lower_bound_mutual_information} implies,
\begin{equation}
    \frac{(1-P_{\epsilon}-\beta(S)) \log M-H(P_{\epsilon})}{n\, (1-p)} \leq N J(S+1)
\end{equation}
To prove the second point, we can use the same methodology, with the additional result,
\begin{equation}
    H\left( \tilde{\mathcal{P}}_j(\mathcal{U}_m)(\rho_0)\right)\geq  n\left(1-\left(1-p \right)^{2d}\right)
\end{equation}
for all $j \in [1:J]$. To do this, we apply the strong data processing inequality \cite{hirche2022contraction,beigi2020quantum}:
\begin{equation}
    D\left(D_p^{\otimes n} (\sigma)\,||\, \frac{I}{2^n} \right)\leq (1-p)^2  D\left(\sigma\,|| \,\frac{I}{2^n}\right)
\end{equation}
where $D(\rho||\sigma)$ denotes the quantum relative entropy. Next, using that 
\begin{align}
    D\left(\sigma\,|| \,\frac{I}{2^n}\right)& =-H(\sigma)-\Tr(\sigma \log \frac{I}{2^n})\nonumber \\&=-H(\sigma) +n
\end{align}
it follows that,
\begin{equation}
    H\left(D_p^{\otimes n} (\sigma) \right)\geq n(1-(1-p)^2)+(1-p)^2 H(\sigma)
\end{equation}
Note that any circuit $\mathcal{P}_j(U_m)$ has depth $d$ by assumption. That is, any noisy circuit $\tilde{\mathcal{P}}_j(U_m)$ can be expressed as $D_p^{\otimes n}\circ \mathcal{G}_d \circ \cdots \circ D_p^{\otimes n}\circ \mathcal{G}_1$, where $\mathcal{G}_i(\rho)=G_i \rho G_i^{\dagger}$ denotes the $i^{th}$ layer of the circuit. Using this expression of $\tilde{\mathcal{P}}_j(U_m)$, and the fact that $H(U\rho U^{\dagger})=H(\rho)$ for any unitary $U$, we conclude that
\begin{equation}
    H\left(D_p^{\otimes n} (\sigma_{j}) \right)\geq n(1-(1-p)^2)+(1-p)^2 H(D_p^{\otimes n} (\sigma_{j-1}))
\end{equation}
where $\sigma_{j}= \mathcal{G}_j \circ \cdots \circ D_p^{\otimes n}\circ \mathcal{G}_1(\sigma_0)$, and $\sigma_0=\ket{0}\bra{0}^{\otimes n}$. Therefore, solving this recursion, we obtain the desired result,
\begin{equation}
    H\left(\tilde{\mathcal{P}}_j(U_m)(\rho_0) \right)=H\left(D_p^{\otimes n} (\sigma_{d}) \right)\geq n(1-(1-p)^{2d})
\end{equation}

    \end{proof}

\end{appendices}
\end{document}